\newcommand{\calP}{\mathcal{P}}
\newcommand{\repeattheorem}[1]{%
  \begingroup
  \renewcommand{\thetheorem}{\ref{#1}}%
  \expandafter\expandafter\expandafter\theorem
  \csname reptheorem@#1\endcsname
  \endtheorem
  \endgroup
}
\xdef\csname reptheorem@#1\endcsname{%
    \unexpanded\expandafter{\BODY}%
  }%
\unskip\label{#1}\endtheorem
\newcommand{\repeatlemma}[1]{%
  \begingroup
  \renewcommand{\thelemma}{\ref{#1}}%
  \expandafter\expandafter\expandafter\lemma
  \csname replemma@#1\endcsname
  \endlemma
  \endgroup
}
\xdef\csname replemma@#1\endcsname{%
    \unexpanded\expandafter{\BODY}%
  }%
\unskip\label{#1}\endlemma
\newif\ifcomplver
\begin{document}
\title{Stable Dinner Party Seating Arrangements}
%
%
\author{Damien Berriaud\orcidID{0009-0002-0450-0339} \and Andrei Constantinescu\textsuperscript{\thinspace\Letter\thinspace}\orcidID{0009-0005-1708-9376} \and
Roger Wattenhofer \orcidID{0000-0002-6339-3134}}
\authorrunning{D. Berriaud, A. Constantinescu and R. Wattenhofer}
%
\institute{ETH Zurich, Rämistrasse 101, 8092 Zurich, Switzerland \\
\email{\{dberriaud,aconstantine,wattenhofer\}@ethz.ch}}
\maketitle              
\begin{abstract}
A group of $n$ agents with numerical preferences for each other are to be assigned to the $n$ seats of a dining table. We study two natural topologies:~circular (cycle) tables and panel (path) tables. For a given seating arrangement, an agent's utility is the sum of their preference values towards their (at most two) direct neighbors. An arrangement is envy-free if no agent strictly prefers someone else's seat, and it is stable if no two agents strictly prefer each other's seats. Recently, it was shown that for both paths and cycles it is NP-hard to decide whether an envy-free arrangement exists, even for symmetric binary preferences. In contrast, we show that, if agents come from a bounded number of classes, the problem is solvable in polynomial time for arbitrarily-valued possibly asymmetric preferences, including outputting an arrangement if possible. We also give simpler proofs of the previous hardness results if preferences are allowed to be asymmetric. For stability, it is known that deciding the existence of stable arrangements is NP-hard for both topologies, but only if sufficiently-many numerical values are allowed. As it turns out, even constructing unstable instances can be challenging in certain cases, e.g., binary values. We propose a complete characterization of the existence of stable arrangements based on the number of distinct values in the preference matrix and the number of agent classes. We also ask the same question for non-negative values and give an almost-complete characterization, the most interesting outstanding case being that of paths with two-valued non-negative preferences, for which we experimentally find that stable arrangements always exist and prove it under the additional constraint that agents can only swap seats when sitting at most two positions away. Similarly to envy-freeness, we also give a polynomial-time algorithm for determining a stable arrangement assuming a bounded number of classes. We moreover consider the swap dynamics and exhibit instances where they do not converge, despite a stable arrangement existing.
\keywords{Hedonic Games  \and Stability \and Computational Complexity.}
\end{abstract}

\section{Introduction}
Your festive dinner table is ready, and the guests are arriving. As soon as your guests take their assigned seats, two of them are unhappy about their neighbors and
rather
want to switch seats. Alas, right after the switch, two other guests become upset, and then pandemonium ensues! Could you have prevented all the social awkwardness by seating your guests ``correctly'' from the get-go?

In this paper, we study the difficulty of finding a \emph{stable} seating arrangement; i.e.~one where no two guests would switch seats. We focus on two natural seating situations:~a round table (cycle), and an expert panel (path). In either case, we assume guests only care about having the best possible set of direct left and right neighbors.
In certain cases, not even a stable arrangement might make the cut, as a single guest envying the seat of another could potentially lead to trouble. Therefore, we are also interested in finding \emph{envy-free} arrangements.

Formally, $n$ guests have to be assigned bijectively to the $n$ seats of a dining table:~either a path or a cycle graph. Guests express their preferences for the other guests numerically, with higher numbers corresponding to a greater desire to sit next to the respective guest. The utility of a guest $g$ for a given seating arrangement is the sum of $g$'s preference values towards $g$'s neighbors. A guest $g$ envies another guest if $g$'s utility would strictly increase if they swapped places. Two guests want to swap places whenever they envy each other. Our goal is to compute a stable (no two guests want to swap) respectively envy-free (no guest envies another) seating arrangement.

Besides the table topology, we conduct our study in terms of two natural parameters.
The first parameter is the number of numerical values guests can choose from when expressing their preferences for other guests. For instance, an example of two-valued preferences would be when all preference values are either zero or one (i.e., \emph{binary}, also known as \emph{approval} preferences), in which case every guest has a list of ``favorite'' guests they want to sit next to, and is indifferent towards the others. In contrast, if the values used were $\pm 1$; i.e., every guest either likes or dislikes every other guest; then the preferences are still two-valued, but no longer binary. Increasing the number of allowed values allows for finer-grained preferences. It is also interesting to distinguish the special case of non-negative preferences; i.e., no guest can lose utility by gaining a neighbor. 

Our second parameter is the number of different guest classes. In particular, dinner party guests can often be put into certain categories, e.g., charmer, entertainer, diva, politico, introvert, outsider. Each class has its own preferences towards other classes, e.g., outsiders would prefer to sit next to a charmer, but not next to an introvert.

\makeatletter
\newcommand{\@ssymbol}[1]{%
\ifcase#1
\or{\textcolor{Red}{$\ast$}}%
\or{\textcolor{blue}{$\dag$}}%
\or{\textcolor{violet}{$\varstar$}}%
\or{\textcolor{black}{$\mathsection$}}%
\or{\textcolor{orange}{$\bullet$}}%
\else\@ctrerr
\fi}
\newcommand{\ssymbol}[1]{\@ssymbol{#1}}
\makeatother

\begin{table}[t]
    \begin{subtable}[h]{0.45\textwidth}
        \centering
        \begin{tabular}{c|cc}
                  & \multicolumn{2}{c}{No.~of Classes} \\
                  & Bounded     & Unbounded \\ \hline
            EF    & Poly\textsuperscript{\ssymbol{1}}        & NP-hard\textsuperscript{\ssymbol{2}} \\
            STA   & Poly\textsuperscript{\ssymbol{1}}        & NP-hard\textsuperscript{\ssymbol{3}} \\
        \end{tabular}
        \caption{Complexity results for cycles.  }
      \label{tab:summary_res_computational_cycle}
    \end{subtable}
    \hfill
    \begin{subtable}[h]{0.45\textwidth}
        \centering
        \begin{tabular}{c|cc}
                  & \multicolumn{2}{c}{No.~of Classes} \\
                  & Bounded     & Unbounded \\ \hline
            EF    & Poly\textsuperscript{\ssymbol{1}}        & NP-hard\textsuperscript{\ssymbol{2}} \\
            STA   & Poly\textsuperscript{\ssymbol{1}}        & NP-hard\textsuperscript{\ssymbol{4}} \\
        \end{tabular}
        \caption{Complexity results for paths.}
        \label{tab:summary_res_computational_path}
    \end{subtable}
    \caption{Summary of computational results for envy-free (EF) and stable (STA)
    arrangements.
    We distinguish between two cases, depending on whether the total number of guest classes is bounded by a constant or not. Hardness results are for the existential questions ``do such arrangements exist?''
    Polynomial-time algorithms recover an envy-free/stable arrangement whenever one exists. 
    \vspace{3pt}
    {\footnotesize
    \\ \ssymbol{1} are shown in Theorems \ref{thm:algo_k_class_path} and \ref{thm:algo_k_class_cycle}
    (working for arbitrary values).
    \\ \ssymbol{2} are shown in Theorems \ref{th_np_hard_cycle} and \ref{th_np_hard_path} (using binary values). They were also recently shown in \cite{hua} for symmetric binary preferences, although with more complex proofs.
    \\ \ssymbol{3} is shown in \cite{hua} using four non-negative values, the binary case is open.
    \\ \ssymbol{4} is shown in \cite{hua} using six values including negatives, the non-negative case is open.
    }
    }
    \label{tab:summary_res_computational}
\end{table}

\newcommand{\emptysuperscript}[1]{\phantom{\textsuperscript{\ssymbol{#1}}}}

\begin{table}[t]
    \begin{subtable}[h]{0.45\textwidth}
      \centering
      \begin{tabular}{c|cccc}
          \diagbox[linecolor=white,innerwidth=2.0cm]{Values}{Classes} & $\leq 2$ & $ 3$ & 4 & $ \geq 5$ \\ \hline
          $ \leq 2$                       & S\emptysuperscript{1}        & \textbf{S}\textsuperscript{\ssymbol{3}}    & \textbf{U}\textsuperscript{\ssymbol{4}} & U \\
          3                               & S\emptysuperscript{1}        & \textbf{U}\textsuperscript{\ssymbol{2}}    & U\emptysuperscript{3} & U \\
          $\geq 4$                        & \textbf{S}\textsuperscript{\ssymbol{1}}        & U\emptysuperscript{2}    & U\emptysuperscript{3} & U \\
      \end{tabular} 
      \caption{Characterization results for cycles.}
      \label{tab:summary_res_cycle}
    \end{subtable}
    \hfill
    \begin{subtable}[h]{0.45\textwidth}
      \centering
      \begin{tabular}{c|ccc}
          \diagbox[linecolor=white,innerwidth=2.0cm]{Values}{Classes} & $\leq 2$ & $ 3$ & $\geq 4$  \\ \hline
          $ \leq 2$                       & S\emptysuperscript{1}        & \textbf{U}\textsuperscript{\ssymbol{5}}    & U\textsuperscript{\ssymbol{5}} \\
          3                               & S\emptysuperscript{1}        & \textbf{U}\textsuperscript{\ssymbol{2}}    & U\emptysuperscript{5} \\
          $\geq 4$                        & \textbf{S}\textsuperscript{\ssymbol{1}}        & U\emptysuperscript{5}    & U\emptysuperscript{5} \\
      \end{tabular} 
      \caption{Characterization results for paths.}
      \label{tab:summary_res_path}
    \end{subtable}
    \caption{Summary of results characterizing the existence of unstable instances for different combinations of constraints on the number of preference values and classes of guests. Our stability (S) results mean that all instances satisfying the constraints admit a stable arrangement, and hold for arbitrary preference values. Our constructions with no stable arrangements (U) only use small non-negative values (except \ssymbol{5}, discussed below), often $0, 1, 2, \ldots,$ and work for any large enough number of guests. 
    \vspace{3pt}
    {\footnotesize
    \\ \ssymbol{1} are shown in 
 Theorems \ref{thm:2class_cycle} and \ref{thm:2_class_path_stable} (working for arbitrary values). \\ 
    \ssymbol{2} are shown in Theorems \ref{thm:unst_ternary_cycle} and \ref{thm:unst_tern_path} (using non-negative values). \\
    \ssymbol{3} is shown in Theorem \ref{thm:3_class_2_valued_cycle_stable} (working for arbitrary values). \\
    \ssymbol{4} is shown in Theorem \ref{thm:bin_unst_cycle} (using binary values). \\
    \ssymbol{5} is shown in Theorem \ref{thm:3_class_2_val_unst_path} (using also negative values). The case of non-negative values is open, and we believe that the answer is \textbf{S}. We have exhaustively established this for 4-class instances with at most ten guests per class and 5-class instances with at most four guests per class, as well as all 7-guest instances. Moreover, it holds irrespective of the number of classes assuming that guests are only willing to swap places with other guests that they are separated from by at most one seat (Theorem \ref{thm:close-swaps-two-valued-is-stable}).  
    }
    }
    \label{tab:summary_res}
\end{table}


\noindent \textbf{Our Contribution}. We study the existence and computational complexity of finding stable/envy-free arrangements on paths and cycles. Some of our results can be surprising. For instance, six people with binary preferences can always be stably seated at a round table, while for five (or seven) guests some preferences are inherently unstable, so we better invite (or uninvite) another guest. However, even for six people with binary preferences, for which a stable arrangement always exists, the swap dynamics might still never converge to one.

Our computational results are exhibited in Table \ref{tab:summary_res_computational}. In summary, if the number of guest classes is bounded by a constant (which can be arbitrary), then both stable and envy-free arrangements can be computed in polynomial time whenever they exist with no assumption on the preference values, while dropping this assumption makes the two problems difficult even for very constrained preference values. For envy-freeness, this already happens for binary preferences, arguably the most prominent case. For stability, on the other hand, this requires more contrived values (four non-negative values for cycles and six values including negatives for paths), so it would be interesting to have a finer-grained understanding of stability in terms of the values allowed when expressing one's preferences. For instance, is it still hard to find stable arrangements for binary preferences? As we show, it turns out to be surprisingly difficult to even construct unstable binary preferences, setting aside the computational considerations. To this end, we conduct a fine-grained study aiming to answer for which combinations of our two parameters, i.e., number of preference values and guest classes, do stable arrangements always exist and for which combinations this is not the case. Our results are exhibited in Table \ref{tab:summary_res}. Notably, for cycles we give a full characterization, either showing guaranteed stability with arbitrary values or giving unstable instances with (simple) non-negative values. Similarly, for paths, we close all cases, with the notable exception of two-valued preferences with three or more classes being permitted, where all the unstable instances we found require negative values. We conjecture that for non-negative values stability on paths is guaranteed in the two-valued case. We support this conjecture with experimental evidence, as well as a partial result under the additional assumption of guests only being willing to swap seats if they are separated by at most one seat (see Table \ref{tab:summary_res} for more details).

\noindent \textbf{Appendix}. In the appendix, we supply the proofs omitted from the main text, as well as supporting material.
Moreover, we show that stability is a highly fra\-gile notion, being non-monotonic with respect to adding/removing guests. We also give evidence that knowledge about stability on paths is unlikely to transfer to computing cycle-stable arrangements. Finally, we use probabilistic tools to study the expected number of stable arrangements of Erd\H{o}s-Rényi binary preferences. \\


\section{Related Work}

The algorithmic study of stability in collective decision-making has its roots in the seminal paper of Gale and Shapley \cite{gale_college_1962}, introducing the now well-known \emph{Stable Marriage} and \emph{Stable Roommates} problems. Classically, the former is presented as follows:~an equal number of men and women want to form couples such that no man and woman from different couples strictly prefer each other over their current partners, in which case the matching is called \emph{stable}. The authors give the celebrated Gale-Shapley deferred acceptance algorithm showing that a stable matching always exists and can be computed in linear time. Irving \cite{irving_stable_marriage_indifference} later extended the algorithm to also handle preferences with ties; i.e., a man (woman) being indifferent between two women (men). The \emph{Stable Roommates} problem is the non-bipartite analog of Stable Marriage:~an even number of students want to allocate themselves into identical two-person rooms in a dormitory. A matching is stable if no two students allocated to different rooms prefer each other over their current roommates. In this setting, stable matchings might no longer exist, but a polynomial-time algorithm for computing one if any exist is known \cite{irving_efficient_1985}. However, when ties are allowed, the problem becomes NP-hard \cite{ronn_np-complete_1990}.

The seating arrangement problem that we study is, in fact, well-connected with Stable Roommates. Instead of one table with $n$ seats, the latter considers $n / 2$ tables with two seats each. However, there is another more subtle difference:~in Stable Roommates, two people unhappy with their current roommates can choose to move into any free room. This is not possible if there are exactly $n / 2$ rooms. Instead, the stability notion that we study corresponds to the distinct notion of \emph{exchange-stability} in the Stable Roommates model, where unhappy students can agree to exchange roommates. Surprisingly, under exchange-stability, finding a stable roommate allocation becomes NP-hard even without ties \cite{cechlarova_complexity_2002}.

One can also see our problem through the lens of coalition formation. In particular, \emph{hedonic games} \cite{handbook_hedonic} consider the formation of coalitions under the assumption that individuals only care about members in their own coalition. Then, fixing the sizes of the coalitions allows one to generalize from tables of size two and study stability more generally. Bilò et al.~\cite{bilo_hedonic_2022} successfully employ this approach to show a number of attractive computational results concerning exchange-stability. The main drawback of this approach, however, is that it assumes that any two people sitting at the same table can communicate, which is not the case for larger tables. Our approach takes the topology of the dining table into account.

Some previous works have also considered the topology of the dining table. Perhaps closest to our paper is the model of Bodlaender et al.~\cite{bodlaender_hedonic_2020},
in which $n$ individuals are to be assigned to the $n$ vertices of an undirected seating graph. The authors prove a number of computational results regarding both envy-freeness and exchange-stability, among other notions. However, we found some of the table topologies considered to be rather unnatural, especially in hardness proofs (e.g., trees or unions of cliques and independent sets). Bullinger and Suksompong \cite{bullinger_topological_2022} also conduct an algorithmic study of a similar problem, but with a few key differences: (i) individuals are seated in the nodes of a graph, but there may be more seats than people; (ii) for the stability notion, they principally consider \emph{jump-stability}, where unhappy people can choose to move to a free seat; (iii) 
individuals now contribute to everyone's utility, although their contribution decreases with distance.

Last but not least, studying stability in the context of \emph{Schelling games} has recently been a popular area of research \cite{chauhan_schelling_2018,elkind_schelling_2021,bilo_nash_2018,kreisel_equilibria_2022,bilo_topological_2020}. In Schelling games, individuals belong to a fixed number of classes. However, unlike in our model, agents from one class only care about sitting next to others of their own class. This additional assumption often allows for stronger results; e.g., in \cite{bilo_topological_2020} the authors prove the existence of exchange-stable arrangements on regular and almost regular topological graphs such as cycles and paths, and show that the swap dynamics are guaranteed to converge in polynomial time on such topologies.

Overall, it seems that exchange stability has been studied in the frameworks of both hedonic and Schelling games. However, both approaches present some shortcomings:~on the one hand, Schelling games inherently consider a topology on which agents evolve, but, being historically motivated by the study of segregation (e.g., ethnic, racial), they usually restrict themselves to very simple preferences. On the other hand, works on hedonic games are accustomed to considering diverse preferences. However, while multiple works have introduced topological considerations, their analysis is usually constrained to graphs that can be interpreted as non-overlapping coalitions, e.g., with multiple fully connected components. One notable exception is the very recent work of Ceylan, Chen and Roy \cite{hua}, appearing in IJCAI'23, also building on the model of Bodlaender et al.~\cite{bodlaender_hedonic_2020}.
In comparison to us, they also prove the NP-hardness of deciding the existence of envy-free arrangements for binary preferences for both topologies, in their case for symmetric preferences, but at the expense of more complex proofs. They moreover show that hardness holds for stability, although the presented proofs requires four non-negative values for cycles and six values including negatives for paths. In our work, we aim to understand stability under more natural preference values, such as approval preferences.

\section{Preliminaries}\label{sec:prelims}
We write $[n] = \{1,\ldots,n\}.$ Given an undirected graph $G = (V(G), E(G)),$ we write $N_G(v)$ for the set of neighbors of vertex $v\in V(G).$ When clear from context, oftentimes we will simply write $V, E$ and $N(v)$ respectively.

The model we describe next is similar to the one in \cite{bodlaender_hedonic_2020}.
A group of $n$ agents (guests) $\mathcal{A}$
has to be seated at a dining table represented by an undirected graph $G=(V, E),$ where vertices correspond to seats. We will be interested in the cases of $G$ being a cycle or a path. We assume that $|V| = n$ and that no two agents can be seated in the same place, from which it also follows that all the seats have to be occupied. Agents have numerical preferences over each other, corresponding to how much utility they gain from being seated next to other agents. In particular, each agent $i \in \mathcal{A}$ has a \emph{preference} over the other agents expressed as a function $p_{i}:~\mathcal{A} \setminus \{i\} \to \mathbb{R},$ where $p_i(j)$ denotes the utility gained by agent $i$ when sitting next to $j.$ Note that we do not assume symmetry; i.e., it might be that $p_i(j) \neq p_j(i).$ We denote by $\mathcal{P} = (p_i)_{i \in \mathcal{A}}$ the collection of agent preferences, or \emph{preference profile}, of the agents. A number of different interpretations can be associated to $\mathcal{P}.$ In particular, we will usually see $\mathcal{P}$ as a matrix $\mathcal{P} = (p_{ij})_{i, j \in \mathcal{A}},$ where $p_{ij} = p_{i}(j).$ Note that the diagonal entries are not defined, but, for convenience, we will oftentimes use the convention that $p_{ii} = 0.$
Using the matrix notation, we say that the preferences in $\mathcal{P}$ are
\emph{binary} when $\mathcal{P} \in \{0,1\}^{n \times n}$
and \emph{k-valued} if there exists $\Gamma \subseteq \mathbb{R}$, $|\Gamma| = k$, such that $\mathcal{P} \in \Gamma^{n \times n}$ (disregarding diagonal entries, since they are undefined).
Note that binary preferences are two-valued, but two-valued preferences are not necessarily binary. Moreover, we will often represent binary preferences as a directed graph, where a directed edge between two agents signifies that the first agent approves of the second.
Finally, when $p_{ij} \geq 0$ for any two agents $i, j \in \mathcal{A}$ we say that the preferences are \emph{non-negative}.

We define a \emph{class of agents} to be a subset of indistinguishable agents $\mathcal{C} \subseteq \mathcal{A}.$ More formally, all agents in $\mathcal{C}$ share a common preference function $p_\mathcal{C}:~\mathcal{A} \to \mathbb{R}$ and no agent in $\mathcal{A}$ discriminates between two agents in $\mathcal{C}.$ Note that this implies that the lines and columns of the preference matrix corresponding to agents in $\mathcal{C}$ are identical, if we adopt the convention that diagonal entries inside a class are all equal but not necessarily zero. We say that preference profile $\mathcal{P}$ has \emph{$k$ classes}, or is a \emph{$k$-class profile}, if $\mathcal{A}$ can be partitioned into $k$ classes $\mathcal{C}_1 \cup \ldots \cup \mathcal{C}_k = \mathcal{A}.$

We define an \emph{arrangement} of the agents on $G$ to be a bijection $\pi:~\mathcal{A} \to V(G)$, i.e., an assignment of each agent to a unique vertex of the seating graph (and vice-versa).
For a given arrangement $\pi,$ we define for each agent $i \in \mathcal{A}$ their \emph{utility} $U_{i}(\pi) = \sum_{v \in N_G(\pi(i))} p_i(\pi^{-1}(v))$ to be the sum of agent $i$'s preferences towards their graph neighbors in the arrangement. We say that agent $i$ \emph{envies} agent $j$ whenever $U_{i}(\pi) < U_{i}(\pi')$, where $\pi'$ is $\pi$ with $\pi(i)$ and $\pi(j)$ swapped.
We further say that $(i, j)$ is a \emph{blocking pair} if both $i$ envies $j$ and $j$ envies $i$; i.e., they would both strictly increase their utility by exchanging seats. An arrangement is \emph{envy-free} if no agent envies another, and it is \emph{stable} if it induces no blocking pairs. Note that envy-freeness implies stability, but the converse is not necessarily true.
By extension, we call preference profile $\mathcal{P}$ stable (respectively envy-free) on $G$ if there exists a stable (respectively envy-free) arrangement $\pi$ on $G.$ Profile $\mathcal{P}$ is unstable if it is not stable.

A few preliminary observations follow. Note that, for \emph{symmetric} preferences; i.e., $p_{ij} = p_{ji}$ for any two agents $i, j \in \mathcal{A};$ a stable arrangement always exists, in fact on any graph $G$, not just cycles and paths. This is because swaps in that case strictly increase the total sum of agent utilities, and hence the swap dynamics will converge to a stable arrangement. Hence, the interesting case is the asymmetric one. Observe that envy-free arrangements need not exist for symmetric preferences (deciding existence is NP-hard for both paths and cycles \cite{hua}). Moreover, note that, for cycles, and in fact any regular graph $G$, agents have the same number of neighbors, so transforming the preferences of an agent $i$ by adding/subtracting/multiplying by a positive constant the values $p_{ij}$ does not inherently change the preferences. This implies that for cycles the two-valued case coincides with the binary case and the non-negative case coincides with the unconstrained case. It also shows that for cycles the definition of $k$-valued preferences:~$\mathcal{P} \in \Gamma^{n \times n}$ where $|\Gamma| = k$; can be restated equivalently to require that every row of $\mathcal{P}$ consists of at most $k$ different values.

\section{Envy-Freeness}\label{sec:envy-free}
It is relatively easy to construct preferences with no envy-free arrangements:~for paths, even if all agents like each other, the agents sitting at the endpoints will envy the others; for cycles, add an agent despised by everyone, agents sitting next to them will envy their peers. We now show that, furthermore, it is NP-hard to decide whether envy-free arrangements exist, for both paths and cycles, even under binary preferences. This has also been recently shown in \cite{hua} in a stronger form, using only symmetric binary preferences, but we found the construction rather involved and the correctness argument based on careful counting delicate.

\begin{theorem} \label{th_np_hard_cycle} For binary preferences, deciding whether an envy-free arrangement on a cycle exists is NP-hard.
\end{theorem}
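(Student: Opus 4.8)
The plan is to reduce from a suitable NP-hard problem whose structure naturally enforces ``local'' constraints that match the cycle topology. A clean choice is a restricted version of a Hamiltonicity-type or exact-cover-type problem, but the most direct route here is to reduce from a variant of \textsc{3-Dimensional Matching} or \textsc{Restricted Hamiltonian Cycle}; I will sketch the reduction using a gadget-based approach from \textsc{Hamiltonian Cycle} on a class of graphs where vertices have small degree, since the cycle arrangement we seek is itself a cyclic ordering of the agents. The key observation to exploit is that in an envy-free arrangement on a cycle, \emph{every} agent must be maximally happy given that all seats are symmetric: since a cycle is regular, an agent $i$ envies $j$ iff swapping strictly increases $i$'s utility, so envy-freeness forces each agent's pair of neighbors to be (weakly) optimal among all pairs they could be adjacent to in the current arrangement. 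With binary preferences this means: for each agent $i$, the number of $i$-approved neighbors of $i$ is at least the number of $i$-approved agents sitting next to any other agent $j$ — a strong combinatorial rigidity we can harness.

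Concretely, I would build agents of two kinds: ``structural'' agents encoding vertices of the input graph $H$ (one per vertex, plus possibly a bounded number of copies), and ``connector'' agents that can only be placed between structural agents corresponding to adjacent vertices of $H$ without creating envy. The approval relation among structural/connector agents is designed so that any envy-free cyclic arrangement must alternate structural and connector agents in a way that traces out a Hamiltonian cycle of $H$; conversely, a Hamiltonian cycle of $H$ yields an envy-free seating by placing the connectors along the edges used. To make envy-freeness (not merely stability) the operative constraint, I will add a ``gadget of universally-despised dummies'' or a small set of high-approval ``bait'' agents so that any deviation from the intended alternating pattern leaves some agent strictly short of the maximum attainable approved-neighbor count, hence envious of whoever currently holds a better seat. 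The forward direction (Hamiltonian cycle $\Rightarrow$ envy-free arrangement) is a routine verification that each agent gets its maximum possible utility; the reverse direction is the substantive part.

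The main obstacle is the reverse direction: showing that \emph{every} envy-free arrangement on the cycle must have the intended combinatorial form, i.e., that no ``cheating'' arrangement — mixing up connectors, reusing a vertex-agent's good neighbors elsewhere, or exploiting the dummies — can be envy-free. This requires a careful case analysis arguing that any structural irregularity creates an agent $i$ whose current approved-neighbor count is strictly below what $i$ would obtain by swapping into some specific seat, and that the target agent of that swap is also willing (so the swap witnesses envy, which already suffices — we only need envy by one agent, not a blocking pair). I would organize this by first proving that envy-freeness forces a ``balancedness'' lemma (every agent with positive out-degree in the approval graph sits next to a fixed target number of approved agents), then deriving the alternating structure, then reading off the Hamiltonian cycle. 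Keeping the preference values binary throughout is compatible with this plan since all the constraints above are counting constraints on approved neighbors; the bound on $|V|=n$ matching the number of agents is automatic as every seat is occupied. I expect the gadget sizes to be linear in $|V(H)|+|E(H)|$, giving a polynomial-time reduction.
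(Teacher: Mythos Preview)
Your proposal shares the right high-level idea with the paper --- reduce from Hamiltonian Cycle and exploit the rigidity envy-freeness imposes on a regular graph --- but it is not yet a proof: you never commit to a concrete gadget, and the reverse-direction ``balancedness lemma'' you invoke does not follow from envy-freeness at the level of generality you state it (envy-freeness only compares an agent's seat to the other seats \emph{in the current arrangement}, not to a global optimum, so you cannot conclude that every agent with positive out-degree attains a fixed approved-neighbor count without knowing the construction).

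The paper's construction is much tighter than your ``structural + connector'' plan and sidesteps the difficulties you flag. It reduces from \emph{directed} Hamiltonian Cycle and, for each vertex $v$, introduces three agents $x_v, y_v, z_v$ with $x_v$ approving only $y_v$, $y_v$ approving only $z_v$, and $z_v$ approving exactly the $x_u$ with $(v,u)\in E$. Because $x_v$ and $y_v$ each approve a single agent, envy-freeness immediately forces $x_v$ adjacent to $y_v$ and $y_v$ adjacent to $z_v$ (otherwise they could swap next to their unique approved agent); then $z_v$'s remaining neighbor must be some approved $x_u$, which reads off a Hamiltonian cycle. No connector agents, dummies, or bait gadgets are needed, and the reverse direction is a two-line local argument rather than a case analysis. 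The missing idea in your sketch is precisely this ``single-approval chain'' that forces both adjacency and orientation; a symmetric connector approving two endpoints would not, by itself, be pinned between them under envy-freeness.
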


\begin{proof} We proceed by reduction from Hamiltonian Cycle on directed graphs. Let $G = (V, E)$ be a directed graph\footnote{In this proof $G$ is not the seating graph but rather an arbitrary graph.} such that, without loss of generality, $V = [n].$ If $G$ has any vertices with no outgoing edges, then map the input instance to a canonical no-instance, unless $n = 1,$ in which case we map it to a canonical yes-instance. Hence, from now on, assume that all vertices have outgoing edges. For each vertex $v \in V$ introduce three agents $x_v, y_v, z_v$ such that agent $x_v$ only likes $y_v$ and dislikes\footnote{Technically, is indifferent to everyone else, but found this formulation reads better.} everyone else, agent $y_v$ only likes $z_v$ and dislikes everyone else, and agent $z_v$ likes agent $x_u$ for all $u \in V$ such that $(v, u) \in E,$ and dislikes everyone else. We claim that the so-constructed preference profile $\mathcal{P}$ has an envy-free arrangement on a cycle precisely when $G$ has a Hamiltonian cycle. To show this, first assume without loss of generality that $1 \rightarrow 2 \rightarrow \ldots \rightarrow n \rightarrow 1$ is a Hamiltonian cycle in $G.$ Then, arranging agents around the cycle in the order $x_1, y_1, z_1, x_2, y_2, z_2, \ldots, x_n, y_n, z_n$ is an envy-free arrangement. To see why, notice that in this arrangement all agents get utility 1, so envy could only potentially stem from an agent being able to swap places with another agent to get utility 2. To prove this is not possible, first notice that agents $(x_i)_{i \in [n]}$ and $(y_i)_{i \in [n]}$ each only like one other agent, so they can never get a utility of more than 1 in any arrangement. Moreover, no agent $z_i$ can get to a utility of 2 by a single swap because any two agents they like are seated at least three positions from each other on the cycle. Conversely, assume that an envy-free arrangement $\pi$ exists. First, if $x_i$ is not sitting next to $y_i$ in $\pi,$ then $x_i$ could improve by swapping to a place next to $y_i$ (also similarly for $y_i$ and $z_i$). Therefore, in arrangement $\pi$ agent $x_i$ is seated next to $y_i$ and $y_i$ is seated next to $z_i.$ Moreover, consider the other neighbor of $z_i$ in $\pi.$ Since $z_i$ does not like $y_i,$ it follows that if $z_i$ also does not like their other neighbor, then $z_i$ could strictly improve their utility by swapping next to some agent they like, which is always possible because all vertices in $G$ have outgoing edges. Therefore, the other neighbor of $z_i$ has to be some agent that they like, hence being of the form $x_j,$ where $j \neq i$. Note that, by construction, $(i, j) \in E.$ Putting together what we know, we get that under $\pi$ the agents are arranged around in the cycle in some order $x_{\sigma_1}, y_{\sigma_1}, z_{\sigma_1}, x_{\sigma_2}, y_{\sigma_2}, z_{\sigma_2}, \ldots, x_{\sigma_n}, y_{\sigma_n}, z_{\sigma_n},$ where $\sigma$ is a permutation of the $n$ agents such that $(\sigma_i, \sigma_{i + 1}) \in E$ holds for all $i \in [n].$\footnote{Assuming that addition is performed with wrap-around using $n + 1 \equiv 1 \pmod n$.} Therefore, a Hamiltonian cycle $\sigma_1 \rightarrow \sigma_2 \rightarrow \ldots \rightarrow \sigma_n \rightarrow \sigma_1$ exists in $G.$  
\end{proof}

A similar proof can be used to show hardness for the case of paths. We outline the changes required in the appendix.

\begin{reptheorem}{th_np_hard_path} For binary preferences, deciding whether an envy-free arrangement on a path exists is NP-hard.
\end{reptheorem}

Proving matching hardness results for stability under binary preferences would be highly desirable but for the time being is left open. We in fact conjecture that all instances with binary preferences are stable on a path (see Section \ref{sec:binary_pref}). Moreover, one might ask how does the number of agent classes affect the computational complexity of our problems. In the next section, we address this question, showing that limiting the number of agent classes renders the problems that we consider polynomial-time solvable, even for arbitrary preference values.

\section{Polynomial Solvability for $k$-Class Preferences}\label{sec:k_classes_algo}
In this section, we show that deciding whether envy-free and stable arrangements exist for a given preference profile can be achieved in polynomial time, for both paths and cycles, assuming that the number of agent classes is bounded by a number $k$. Note that preferences in this case are not constrained to being binary, and can in fact be arbitrary. By extension, our algorithms can also be used to construct such arrangements whenever they exist.

We begin with the case of paths. For simplicity, we assume that $n \geq 3$, as for $n \leq 2$ any arrangement is both stable and envy-free. Assume that the agent classes are identified by the numbers $1, \ldots, k$ and that $n_1, n_2, \ldots, n_k$ are the number of agents of each class in our preference profile, where $n_1 + \ldots + n_k = n$. For ease of writing, we will see arrangements as sequences $s = (s_i)_{i \in [n]}$, where $s_i \in [k]$ and for any agent class $j \in [k]$ the number of values $j$ in $s$ is $n_j.$ Moreover, for brevity, we lift agent preferences to class preferences, in order to give meaning to statements such as ``class $a$ likes class $b.$'' To simplify the treatment of agents sitting at the ends of the path, we introduce two agents of a dummy class 0 with preference values 0 from and towards the other agents. We require the dummy agents to sit at the two ends of the path; i.e., $s_0 = s_{n + 1} = 0.$ In order to use a common framework for stability and envy-freeness, we define the concept of \emph{compatible triples} of agent classes, as follows. First, for envy-freeness, let $a, b, c, d, e, f$ be agent classes, then we say that triples $(a, b, c)$ and $(d, e, f)$ are \emph{long-range compatible} if $p_b(a) + p_b(c) \geq p_b(d) + p_b(f)$ and $p_e(d) + p_e(f) \geq p_e(a) + p_e(c)$; intuitively, neither $b$ wants to swap with $e$, nor vice-versa. Furthermore, for $a, b, c, d$ agent classes, we say that triples $(a, b, c)$ and $(b, c, d)$ are \emph{short-range compatible} if $p_b(a) \geq p_b(d)$ and $p_c(d) \geq p_c(a)$; intuitively, if $a, b, c, d$ are consecutive in the arrangement, then neither $b$ wants to swap with $c$, nor vice-versa. For stability, we keep the same definitions but use ``or'' instead of ``and.'' Note that long-range and short-range compatibility do not imply each other.
We call an arrangement $s$ \emph{compatible} if for all $1 \leq i < j \leq n$ the triplets $(s_{i - 1}, s_i, s_{i + 1})$ and $(s_{j - 1}, s_j, s_{j + 1})$ are long-range compatible when $j - i > 1$ and short-range compatible when $j - i = 1$. Note that arrangement $s$ is envy-free (resp.~stable) if and only if it is compatible. In the following, we explain how to decide the existence of a compatible arrangement.

\begin{lemma} Deciding whether compatible arrangements exist can be achieved in polynomial time.
\end{lemma}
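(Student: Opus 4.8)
The plan is to cast the search for a compatible arrangement as a reachability problem in a suitably-defined finite automaton (or, equivalently, a dynamic program over "windows" of the arrangement), whose size is polynomial because the number of classes $k$ is a constant. The key observation is that whether an arrangement $s$ is compatible depends only on the \emph{multiset of length-3 windows} $(s_{i-1}, s_i, s_{i+1})$ that appear in it, together with the constraint that consecutive windows must overlap consistently and must be short-range compatible. Indeed, long-range compatibility is a symmetric binary relation on the set of all possible triples from $([k]\cup\{0\})^3$, and there are at most $(k+1)^3$ such triples, a constant. So an arrangement is compatible if and only if (a) every two adjacent windows are short-range compatible, and (b) the \emph{set} of distinct triples occurring as windows is a clique in the long-range compatibility graph on triples.

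Concretely, I would first enumerate all subsets $T$ of the triple set that form a clique under long-range compatibility; since the triple set has constant size, there are only constantly many such $T$. For each fixed $T$, the task reduces to deciding whether there is an arrangement $s$ (with the prescribed class counts $n_1,\dots,n_k$, dummy $0$s at the ends) such that every window of $s$ lies in $T$ and every pair of adjacent windows is short-range compatible. This is now a constrained path-counting problem: build a directed graph $H_T$ whose vertices are pairs $(a,b)\in([k]\cup\{0\})^2$ and whose edges $(a,b)\to(b,c)$ exist exactly when $(a,b,c)\in T$ and the triples $(a,b,c)$, $(b,c,?)$... more precisely we put an edge $(a,b)\to(b,c)$ whenever $(a,b,c)\in T$; short-range compatibility of consecutive windows $(a,b,c)$ and $(b,c,d)$ becomes a condition relating two consecutive edges, so it is cleaner to let the vertices of $H_T$ be the triples themselves and put an edge $(a,b,c)\to(b,c,d)$ whenever both triples are in $T$ and they are short-range compatible. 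A valid arrangement of length $n$ corresponds to a walk of length $n-2$ in $H_T$ from a vertex of the form $(0,a,b)$ to a vertex of the form $(c,d,0)$, and we additionally need the walk to use each class the right number of times.

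The remaining difficulty — and the step I expect to be the main obstacle — is enforcing the class-count constraints $n_1,\dots,n_k$ while searching for such a walk, since naive DP over exact counts would be pseudo-polynomial ($\prod (n_j+1)$ states). The fix is that we only need to track the counts \emph{modulo nothing} but rather exploit the small number of classes differently: one can run a DP whose state is the current triple (vertex of $H_T$) together with the vector of remaining counts, but observe that the number of reachable count-vectors along any walk is bounded because $\sum n_j = n$; this still gives $n^k$ which is polynomial for constant $k$. Alternatively, and more cleanly, since $H_T$ has constant size, decompose any walk into its constant-size "core" (the simple path structure between the strongly connected components it visits) plus repetitions of cycles within SCCs; then deciding feasibility of the count-constraints becomes an integer feasibility question with a constant number of variables (how many times each of the constantly-many simple cycles is traversed), which is solvable in polynomial time via, e.g., Lenstra's algorithm for fixed-dimension integer programming, or even elementarily since the system is small. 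I would present the $n^k$-state DP as the baseline, as it is the least error-prone: states are (position-class-count-vector, last-two-classes), transitions check short- and long-range compatibility against the fixed $T$, and we accept if a state with all counts exhausted and the final window ending in $0$ is reached. Summing over the constantly-many choices of clique $T$ gives the overall polynomial bound, and the usual DP backtracking recovers an explicit compatible arrangement when one exists.
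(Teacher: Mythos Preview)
There is a real gap in your central reduction. You claim an arrangement is compatible iff (a) adjacent windows are short-range compatible and (b) the set of distinct occurring triples is a clique under long-range compatibility. The direction (a)$\wedge$(b)$\Rightarrow$compatible holds, but compatible$\Rightarrow$(b) does not: if two distinct overlapping triples $X=(a,b,c)$ and $Y=(b,c,d)$ each occur exactly once, at consecutive positions $i$ and $i{+}1$, the definition only requires them to be short-range compatible. One can pick preferences making $X,Y$ short-range compatible yet not long-range compatible (e.g., for envy-freeness take $p_b(a)=p_b(d)$ and $p_b(b)>p_b(c)$: then $b$ does not envy $c$'s adjacent seat, but would strictly prefer the middle of $Y$ from a non-adjacent position). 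A compatible arrangement containing such a pair has a window set that is not a clique, so your enumeration over cliques $T$ never tests it and your algorithm incorrectly outputs ``no''.

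The repair is small and preserves your elementary DP. Drop the upfront choice of $T$ and instead carry in the DP state the set of triples that have appeared at positions $\le i-2$ (a bitmask over at most $(k{+}1)^3$ triples, hence constant size), together with the last three class labels $s_{i-2},s_{i-1},s_i$. When you choose $s_{i+1}$, check the new window $(s_{i-1},s_i,s_{i+1})$ for short-range compatibility with $(s_{i-2},s_{i-1},s_i)$ and for long-range compatibility against every triple in the stored set, then insert $(s_{i-2},s_{i-1},s_i)$ into the set before advancing. This is exactly what the paper does, though it phrases it as a nondeterministic log-space computation maintaining per-triple \emph{counts} $m_{a,b,c}$ (and class counts $x_j$) and then invokes $\mathrm{NL}\subseteq\mathrm{P}$; once you augment your state as above, your direct $n^{k}\cdot O(1)$-state DP is a perfectly good and more self-contained alternative. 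The cycle-decomposition/Lenstra detour is unnecessary.
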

\begin{proof}
We first present a nondeterministic algorithm (i.e.,~with guessing) that solves the problem in polynomial time. The algorithm builds a compatible arrangement $s$ one element at a time. Initially, the algorithm sets $s_0 \gets 0$ and guesses the values of $s_1$ and $s_2$. Then, at step $i,$ for $3 \leq i \leq n + 1,$ the algorithm will guess $s_i$ (except for $i = n + 1$, where we enforce that $s_i \gets 0$) and check whether $(s_{i - 3}, s_{i - 2}, s_{i - 1})$ is short-range conflicting with $(s_{i - 2}, s_{i - 1}, s_i),$ rejecting if so. Moreover, the algorithm will check whether $(s_{i - 2}, s_{i - 1}, s_i)$ is long-range conflicting with any $(s_{j - 2}, s_{j - 1}, s_j)$ for $2 \leq j \leq i - 2,$ again rejecting if so. At the end, the algorithm checks whether for each $i \in [k]$ value $i$ occurs in $s$ exactly $n_i$ times, accepting if so, and rejecting otherwise.

Alone, this algorithm only shows containment in NP, which is not a very attractive result. Next, we show how the same algorithm can be implemented with only a constant number of variables, explaining afterward why this implies our result. First, to simulate the check at the end of the algorithm without requiring knowledge of the whole of $s,$ it is enough to maintain throughout the execution counts $(x_j)_{j \in [k]}$ such that at step $i$ in the algorithm $x_j$ gives the number of positions $1 \leq \ell \leq i$ such that $s_\ell = j.$ To simulate the short-range compatibility check, it is enough that at step $i$ we have knowledge of $s_{i - 3}, \ldots, s_i.$ Finally, for the long-range compatibility check, a more insightful idea is required. In particular, we make the algorithm maintain throughout the execution counts $m_{a, b, c}$ for each triple $(a, b, c)$ of agent classes, such that at step $i$ value $m_{a, b, c}$ gives the number of positions $2 \leq \ell \leq i$ such that $(s_{\ell - 2}, s_{\ell - 1}, s_\ell) = (a, b, c).$ Using this information, to check at step $i$ whether $(s_{i - 2}, s_{i - 1}, s_i)$ long range conflicts with any $(s_{j - 2}, s_{j - 1}, s_j)$ for $2 \leq j \leq i - 2,$ it is enough to temporarily decrease by one the values $m_{s_{i - 3}, s_{i - 2}, s_{i - 1}}$ and $m_{s_{i - 2}, s_{i - 1}, s_i}$ and then check whether there exists a triple $(a, b, c)$ of agent classes such that $m_{a, b, c} > 0$ and $(a, b, c)$ long-range conflicts with $(s_{i - 2}, s_{i - 1}, s_i).$ In total, at step $i$, the algorithm only needs to know the values $s_{i - 3}, \ldots, s_i$, as well as $(x_j)_{j \in [k]}$ and the counts $m_{a, b, c}$ for all triples $(a, b, c)$ of agent classes. Since $k + 1$ bounds the total number of agent classes, this is only a constant number of variables. As each variable can be represented with $O(\log n)$ bits, it follows that our nondeterministic algorithm uses only logarithmic space, implying containment in the corresponding complexity class NL. It is well known that NL $\subseteq$ P, from which our conclusion follows. For readers less familiar with this result, we give a short overview of how our algorithm can be converted into a deterministic polynomial-time algorithm, as follows. Since our NL algorithm uses only logarithmic space, it follows that the space of algorithm states that can be reached depending on the nondeterministic choices is of at most polynomial size, since $2^{O(\log n)}$ is polynomial. Therefore, building a graph with vertices being states and oriented edges corresponding to transitions between states, the problem reduces to deciding whether an accepting state can be reached from the initial state, which can be done with any efficient graph search algorithm.
\end{proof}

\begin{theorem} \label{thm:algo_k_class_path} Fix $k \geq 1.$ Then, for $k$-class preferences, there are polynomial-time algorithms computing an envy-free/stable arrangement on a path or reporting the nonexistence thereof.
\end{theorem}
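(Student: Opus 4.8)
The plan is to build on the preceding lemma, which already establishes that the \emph{decision} question---does a compatible arrangement exist?---lies in \textbf{P}, via reduction to reachability in a polynomially-sized graph of algorithm states. What remains is (i) to make this constructive, actually producing a compatible class-arrangement $s$ when one exists, and (ii) to lift such an $s$ back to a genuine arrangement $\pi$ of the agents. Throughout, recall from the discussion preceding the lemma that, once the two dummy agents are pinned to the endpoints of the path, an arrangement is envy-free (resp.\ stable) exactly when the induced class-sequence $s$ is compatible, so it suffices to work at the level of class-sequences.

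First I would dispose of the degenerate cases: if $n \le 2$, any bijection is both envy-free and stable, so output an arbitrary one. For $n \ge 3$, instantiate exactly the bookkeeping from the lemma's proof: a state records the step index $i$, the last few guessed classes $s_{i-3},\dots,s_i$, the per-class occupancy counts $(x_j)_{j\in[k]}$, and the triple-counts $m_{a,b,c}$ for all triples $(a,b,c)$ of classes. Since a state is a constant number of $O(\log n)$-bit integers, only $\mathrm{poly}(n)$ states are reachable; form the directed graph $H$ whose vertices are states and whose edges are the legal transitions (those that do not trigger a short- or long-range conflict rejection), labelling each edge with the class value(s) guessed on that transition. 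Run breadth-first search from the initial state, storing a parent pointer on each discovered vertex, and stop upon reaching an accepting state (one at step $n+1$ with $x_j = n_j$ for all $j \in [k]$). If no accepting state is reachable, report nonexistence---this reproduces the lemma. Otherwise, follow the parent pointers back to the initial state and read the guessed values off the transitions in forward order; this yields a compatible class-arrangement $s = (s_1,\dots,s_n)$ in which class $j$ occurs exactly $n_j$ times.

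Finally I would turn $s$ into an agent arrangement: for each class $j \in [k]$, let $P_j = \{\, i \in [n] : s_i = j \,\}$, which has size $n_j$, and assign the $n_j$ agents of class $j$ bijectively (in any order) to the seats in $P_j$. Since agents within a class are indistinguishable, every such assignment induces the same class-sequence $s$, and is therefore envy-free (resp.\ stable) precisely because $s$ is compatible. All steps---generating the edges of $H$ on the fly, the BFS, the traceback, and the final per-class assignment---run in time polynomial in $n$, with the hidden constant depending on $k$ through the number of class triples, which gives the claimed algorithms. The only genuinely delicate point is step (i): one must be sure the state carries enough information---in particular the step counter and the counts $m_{a,b,c}$---so that $H$ is finite, of polynomial size, and such that every initial-to-accepting path corresponds to a legitimately compatible full-length arrangement; but this is exactly the invariant already proved in the lemma, so here it only needs to be invoked rather than re-established.
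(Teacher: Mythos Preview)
Your proposal is correct and follows essentially the same approach as the paper. The paper in fact states Theorem~\ref{thm:algo_k_class_path} without a separate proof, treating it as an immediate consequence of the preceding lemma; you have simply made explicit the two routine steps the paper leaves implicit---tracing back a path in the state graph to recover the class-sequence $s$, and then assigning agents of each class arbitrarily to the corresponding positions---both of which are exactly what the paper has in mind when it says ``our algorithms can also be used to construct such arrangements whenever they exist.''
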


For the case of cycles, a similar approach can be used, although with rather tedious, yet minor tweaks, presented in the appendix.

\begin{reptheorem}{thm:algo_k_class_cycle} Fix $k \geq 1.$ Then, for $k$-class preferences, there are polynomial-time algorithms computing an envy-free/stable arrangement on a cycle or reporting the nonexistence thereof.
\end{reptheorem}

\section{A Fine-Grained Analysis of Stability}\label{sec:stab}

Previously, we showed that deciding whether envy-free arrangements exist is NP-hard for binary preferences on both topologies. For stability, in \cite{hua} it is shown that  similar hardness results hold, but this time more than two values are needed in the proofs, namely four values for cycles and six for paths, the latter also requiring negative numbers. It is unclear whether hardness is retained without assuming this level of preference granularity, and a first step towards understanding the difficulty of the problem constrained to fewer/simpler allowed values is being able to construct instances where no stable arrangements exist; after all, a problem where the answer is always ``yes'' cannot be NP-hard.


In this section, we conduct a fine-grained analysis of the conditions allowing for unstable instances. In particular, for both topologies we consider how different constraints on the number of agent classes as well as the number of different values allowed in the preferences influence the existence of unstable preferences
The non-negativity of the values needed is also taken into account. Table \ref{tab:summary_res} summarizes our results.

\subsection{Two-Class Preferences} \label{sec:2_classes}

As a warm-up, note that when all agents come from a single class, any arrangement on any given seating graph is stable. In the following, we extend this result to two classes of agents for cycles and paths. We begin with cycles:

\begin{theorem}\label{thm:2class_cycle} Two-class preferences always induce a stable arrangement on a cycle.
\end{theorem}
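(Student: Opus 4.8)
The plan is to reduce to a one-dimensional picture: since there are only two classes, say $\mathcal{C}_1$ with $n_1$ agents and $\mathcal{C}_2$ with $n_2$ agents, an arrangement on the cycle is (up to rotation and reflection) entirely described by the cyclic pattern of class labels, i.e.\ by how the $n_1$ ``ones'' and $n_2$ ``twos'' are interleaved around the cycle. All agents within a class are interchangeable, so whether a swap between two agents is blocking depends only on the class labels of the two agents and of their four neighbors. In particular, an agent of class $a$ sitting between neighbors of classes $x$ and $y$ has a utility determined solely by the multiset $\{x,y\}$, which can only be one of $\{1,1\}$, $\{1,2\}$, $\{2,2\}$; so each agent is in one of at most three ``local states'' per class, six in total. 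The first step is to set up this reduction carefully and observe that a blocking pair must consist of one agent from each class (two agents of the same class swapping changes nothing), so we only need to rule out blocking pairs between a class-$1$ agent and a class-$2$ agent.

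Next I would look for a good canonical arrangement. The natural candidate is the ``maximally clustered'' one: put all $n_1$ agents of class $1$ together in one contiguous block and all $n_2$ agents of class $2$ together in the other block, giving a cycle that looks like $1^{n_1} 2^{n_2}$. In this arrangement every class-$1$ agent is flanked by $\{1,1\}$ except the (at most) two at the ends of the block, which are flanked by $\{1,2\}$; symmetrically for class $2$. The key step is then a short case analysis: take a class-$1$ agent $i$ and a class-$2$ agent $j$ and check that they cannot form a blocking pair. The interior agents (flanked by same-class neighbors on both sides) are the dangerous ones to worry about — but note that if $i$ is an interior class-$1$ agent, then after swapping into $j$'s position $i$ would sit next to at least one class-$2$ agent, and one checks that $i$'s utility goes from $2p_1(1)$ to either $p_1(1)+p_1(2)$ or $2p_1(2)$ (depending on whether $j$ is at a block boundary or interior); for this to be an improvement for $i$ we'd need $p_1(2) > p_1(1)$ or $p_1(2) \ge p_1(1)$ with strictness somewhere. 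I would organize the analysis by which of the two orderings $p_1(1) \ge p_1(2)$ or $p_1(1) \le p_1(2)$ holds, and similarly $p_2(1)$ vs $p_2(2)$, yielding four regimes; in the ``each class prefers its own kind'' regime the clustered arrangement should be immediately stable, and the asymmetric/opposite regimes either also work for the clustered arrangement or call for the ``alternating'' arrangement instead.

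The main obstacle I anticipate is the regime where the two classes disagree about what is desirable — e.g.\ class $1$ prefers class $2$ as a neighbor while class $2$ also prefers class $2$ (or class $1$), so that no single arrangement is ``locally optimal'' for everyone. Here the clustered arrangement need not be stable, and one has to argue that some arrangement is. I would handle this by a potential/exchange-dynamics argument restricted to the constant-size state space: since an arrangement is determined by the cyclic class-pattern and there are only finitely many such patterns, it suffices to show the swap dynamics on patterns cannot cycle forever, or alternatively to directly exhibit, in each remaining regime, an explicit pattern (fully clustered, fully alternating, or — when $n_1,n_2$ are unbalanced — a ``blocks of size at most two'' pattern) and verify by the same six-local-states case check that it admits no blocking pair. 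Because everything reduces to inequalities among the six quantities $p_1(1),p_1(2),p_2(1),p_2(2)$ and the boundary/interior status of the two swapping agents, each regime is a finite check; the work is in choosing the right arrangement per regime and being careful with small cases ($n_1$ or $n_2$ equal to $0,1,2$, where ``interior'' agents may not exist and the cycle may be too short for the blocks to behave as described).
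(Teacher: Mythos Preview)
Your plan is on the right track and would eventually succeed, but you are working harder than necessary and anticipating an obstacle that does not exist. Two observations collapse the case analysis dramatically.

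First, because the cycle is regular, you may normalize each agent's preference function by an additive shift without changing envy or blocking; this lets you assume the preferences are \emph{binary} (each class either ``likes'' or ``doesn't like'' each class), so instead of four real parameters $p_1(1),p_1(2),p_2(1),p_2(2)$ you only have four bits. The paper's preliminaries note this explicitly for cycles, and the proof of this theorem invokes it in the first line.

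Second, and more importantly, the ``disagreement'' regime you flag as the main obstacle is not hard at all. The paper's key observation is that if \emph{any} class prefers its own kind, clustering \emph{that} class into one block already suffices --- regardless of what the other class wants. Say Blues prefer Blues: in the clustered arrangement all Blues except the two boundary ones have maximum utility, and the two boundary Blues cannot improve by swapping with any Red, because no Red has more than one Blue neighbor. Hence no Blue participates in a blocking pair, and since every blocking pair needs one agent of each class, the arrangement is stable. This single argument handles three of your four regimes simultaneously (both-prefer-own, and both asymmetric cases), leaving only ``both classes prefer the opposite kind,'' which is dispatched by the alternating arrangement (the minority class all get maximum utility). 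No potential argument or swap-dynamics reasoning is needed anywhere.
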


\begin{proof}
    Suppose there are two classes of agents, say \textcolor{blue}{Blues} and \textcolor{Red}{Reds}.
    Without loss of generality, preferences can be assumed to be binary, since for cycles one can normalize the preference values as described towards the end of the preliminaries section.  First, note that any blocking pair must consist of one \textcolor{blue}{Blue} and one \textcolor{Red}{Red}. Moreover, note that any arrangement is stable whenever one of the classes likes the two classes equally. Now, suppose this is not the case, meaning that each class has a preferred class to sit next to. There are only two cases to consider:
    
    If one class, say \textcolor{blue}{Blue}, prefers its own class, then sit all \textcolor{blue}{Blues} together and give the remaining seats to \textcolor{Red}{Reds}:~all \textcolor{blue}{Blues} but two, say $\textcolor{blue}{B_1}$ and $\textcolor{blue}{B_2}$, get maximum utility, and neither $\textcolor{blue}{B_1}$ nor $\textcolor{blue}{B_2}$ can improve since no \textcolor{Red}{Red} has more than one \textcolor{blue}{Blue} neighbor. Hence, no \textcolor{blue}{Blue} is part of a blocking pair, so the arrangement is stable.
    
    If both classes prefer the opposite class, we may assume there are at least as many \textcolor{Red}{Reds} as \textcolor{blue}{Blues}. Then we alternate between \textcolor{Red}{Reds} and \textcolor{blue}{Blues} for as long as there are \textcolor{blue}{Blues} without a seat, then seat all the remaining \textcolor{Red}{Reds} next to each other. Every \textcolor{blue}{Blue} has maximum utility, and hence cannot be part of a blocking pair, so the arrangement is stable. 
\end{proof}

The following extends the result to the case of paths. The proof is largely similar, but the case analysis becomes more involved, because preferences can no longer be assumed to be binary, so we present it in the appendix.

\begin{reptheorem}{thm:2_class_path_stable} Two-class preferences always induce a stable arrangement on a path.
\end{reptheorem}

Note that a path of $n$ is equivalent to a cycle of $n+1$ where an agent with null preferences is added. This explains why the case of paths is harder to study than that of cycles, as it corresponds to having one more class of agents and potentially one more value (zero). 

\subsection{Three-Class Three-Valued Preferences}\label{sec:3-val_3-class}
We now consider the case of three-valued preferences with three agent classes, exhibiting unstable non-negative preferences both for paths and for cycles. We begin with the case of cycles.

\begin{theorem}\label{thm:unst_ternary_cycle}
For $n\ge 4$, there exist three-class three-valued non-negative preferences such that all arrangements on a cycle are unstable.
\end{theorem}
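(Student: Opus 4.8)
The plan is to construct an explicit small gadget of three classes whose only "near-optimal" local configurations force a blocking pair no matter how the cycle is filled in. I would start by thinking of the three classes as, say, $A$, $B$, $C$, and try to set up a cyclic preference structure reminiscent of rock-paper-scissors: each class strictly prefers a neighbor of one particular other class, is neutral-ish about its own class, and actively dislikes the third. Concretely, using values $\{0,1,2\}$, one natural attempt is $p_A(B)=2$, $p_A(C)=0$, $p_A(A)=1$ (and cyclically $B\to C$, $C\to A$). The idea is that in any arrangement on the cycle, some agent fails to be flanked by two agents of its preferred class (a simple counting/pigeonhole argument, since not everyone can simultaneously be surrounded by their "target" class when the targeting is cyclic), and that agent will want to swap toward a better seat; the asymmetry of the cyclic preferences should then guarantee that the agent it wants to swap with reciprocates.

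The key steps, in order, would be: (1) fix the class sizes — I expect the cleanest version takes one agent of class $A$, one of class $B$, and $n-2$ of class $C$, or possibly roughly equal sizes $\lceil n/3\rceil$ etc.; I would experiment to find the sizes that make the argument tightest. (2) Compute, for the chosen preferences, the maximum possible utility of each class and characterize exactly which local neighborhoods $(x, \cdot, y)$ achieve it. (3) Argue by a global counting/parity obstruction that in every arrangement at least one agent $g$ is strictly below its achievable optimum in a way that leaves a strictly-improving swap available. (4) Identify the seat $g$ wants to move to, and verify that the current occupant of that seat also strictly gains by moving to $g$'s seat — this is where the careful choice of the three values and the cyclic (hence asymmetric) structure pays off, because a symmetric version would just converge. (5) Handle the boundary/small-$n$ cases ($n=4,5$ possibly by hand) and confirm the construction scales for all $n \ge 4$ by padding with extra class-$C$ agents without creating a stable arrangement.

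The main obstacle I anticipate is step (4) combined with step (3): it is easy to force \emph{some} agent to be unhappy, but much harder to ensure the swap it desires is \emph{mutually} beneficial rather than one-sided (which would only give non-envy-freeness, not instability). Getting this right will likely require tuning the three values so that the "victim" of the swap — the agent currently sitting where $g$ wants to go — is itself already in a locally bad configuration that the swap repairs. I would therefore iterate on the gadget: pick values, enumerate all arrangements for $n=4,5,6$ by hand or mentally, locate any stable arrangement, and adjust the preferences to kill it, repeating until the obstruction is robust; then prove the general-$n$ statement by reducing an arbitrary arrangement to a canonical form where the forced blocking pair is visible. A secondary nuisance will be making the counting argument in step (3) fully rigorous for all $n$ simultaneously, which may need a short case split on $n \bmod 3$.
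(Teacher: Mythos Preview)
Your instinct to try one agent of class $A$, one of class $B$, and $n-2$ of class $C$ with the cyclic $\{0,1,2\}$ preferences is exactly right --- after relabeling ($A \mapsto$ Bob, $B \mapsto$ Alice, $C \mapsto$ Friends), this \emph{is} the paper's construction: Bob prefers Alice above all, Alice prefers Friends over Bob, and Friends rank Bob $>$ Friends $>$ Alice. What you have not yet seen is that the proof is far simpler than the multi-step programme you outline. There is no global counting or pigeonhole argument, no $n \bmod 3$ case split, and no need to iterate on the gadget. The paper just splits on whether Alice and Bob are adjacent. If they are, Alice and her Friend-neighbour form a blocking pair (Alice sheds Bob; the Friend gains Bob). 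If they are not, Bob and one of Alice's Friend-neighbours form a blocking pair (Bob reaches Alice; the Friend either sheds Alice for a Friend, or, in the sub-case where it sat between Alice and Bob, simply sheds Alice). That is the entire argument, and it works uniformly for all $n \ge 4$.

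Two further remarks. First, your alternative of roughly equal class sizes does \emph{not} work: with two of each class and $n=6$, the arrangement $ABCABC$ is stable under your cyclic preferences (everyone has utility $2$, and any seat an agent envies is held by someone who would strictly lose by swapping). So the experimentation you propose would have had to rule this branch out. Second, the obstacle you flag in step~(4) --- ensuring the desired swap is \emph{mutually} beneficial --- is resolved in the paper's construction not by tuning values but by the run-and-chase asymmetry built into the preferences: the singleton being chased always has a neighbour from the large class who is eager to move toward the chaser, and vice versa. Recognising this dynamic is the one idea your proposal is missing; once you have it, steps (3)--(5) collapse into the two-line case analysis above.
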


\begin{proof}
Consider three classes of agents:~Alice, Bob, and $n - 2$ of Bob's friends. The story goes as follows:~Alice and Bob broke up. Alice does not want to hear about Bob and would hence prefer to sit next to any of his friends rather than Bob. On the other hand, Bob wants to win her back, so he would above all want to sit next to Alice. Finally, Bob's friends prefer first Bob, then the other friends, and finally Alice. 
Used preference values can be arbitrary, so to get the required conclusion, we make sure that they are non-negative.
To show that these preferences are unstable on a cycle, there are two cases:~Alice and Bob can either sit next to each other, or separately.\footnote{The swap dynamics here will exhibit so-called run-and-chase behaviour \cite{run_and_chase}, which is common to many classes of hedonic games.}

In the first case, Alice and her second neighbor, who is one of Bob's friends, would exchange seats. After the switch, Alice is better as she no longer sits next to Bob, and the friend is better because he sits next to Bob.

In the second case, Bob and one of Alice's neighbors would exchange seats. Bob is better because he now sits next to Alice. To see that the neighbor, who is one of Bob's friends, is also better, distinguish two sub-cases:~if the friend sits right between Alice and Bob, then he is better because he now no longer sits next to Alice, while if this is not the case, he is better because before he was sitting next to Alice and a friend, while now he is sitting next to two friends.
\end{proof}

It is possible to use the same construction for paths by allowing negative preference values, but otherwise, the proof of Theorem \ref{thm:unst_ternary_cycle} does not directly transfer to paths; e.g., for $n=5$, path arrangement (F, F, B, A, F) is stable. Negative preferences turn out to not be necessary for $n$ large enough.
The main trick here is to use three copies of Bob to ensure that at least one of them does not sit at either end of the path. Formally, we have the following, proven in the appendix:

\begin{reptheorem}{thm:unst_tern_path}
For $n\geq 12$, there exist three-class three-valued non-negative preferences such that all arrangements on a path are unstable. 
\end{reptheorem}

\subsection{Two-Valued Preferences} \label{sec:binary_pref}

It remains to study what happens for two-valued preferences with three or more classes of agents. For cycles, we show that three classes always yield a stable arrangement, while four classes allow for a counterexample with binary values. For paths, we exhibit a counterexample with already three classes, but using some negative values. For non-negative values, we conjecture that a stable arrangement always exists, but have not been able to prove it. Instead, we gather both experimental and theoretical evidence to support it.

\begin{table}[t]
\centering
\addtolength{\tabcolsep}{7pt} 
\begin{tabular}
{c|ccccc}
$n$  & 3 & 4 & 5 & 6 & 7 \\ \hline
Cycle & 0 & 0 & 1 & 0 & 2\\ \hline
Path  & 0 & 0 & 0 & 0 & 0 
\end{tabular}
\addtolength{\tabcolsep}{-7pt}
\caption{The number of non-isomorphic families of unstable non-negative two-valued preferences. For cycles, this coincides with the case of binary values, and also with that of general values.}
\label{tab:summary_bin_results}
\end{table}
\begin{figure}[t]    
    \centering
    \begin{subfigure}{0.32\linewidth}
        \centering
        \includegraphics[width=3.5cm]{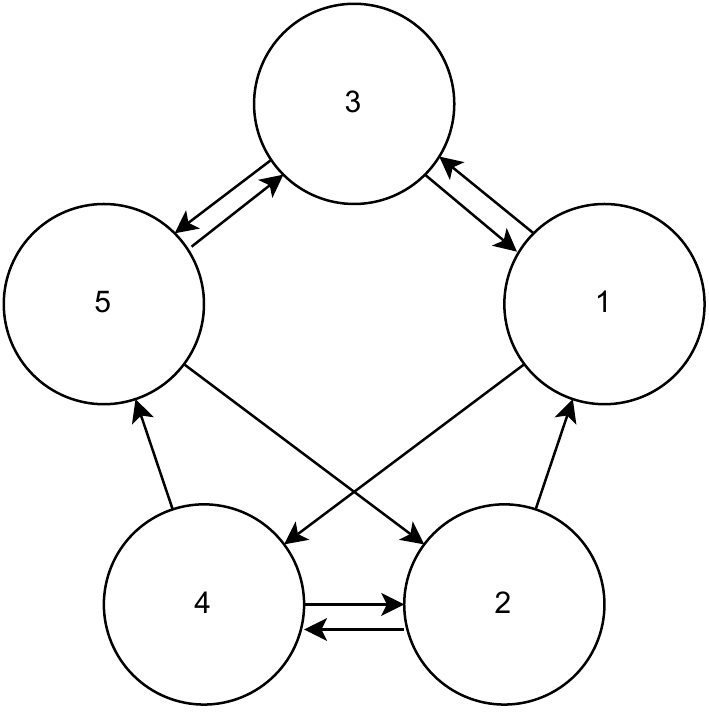}
        \caption{$\mathcal{P}_5$ as a directed graph.}
        \label{fig:5_unst}    
    \end{subfigure}
    \hfill
    \begin{subfigure}{0.32\linewidth}
        \centering
        \includegraphics[width=3.5cm]{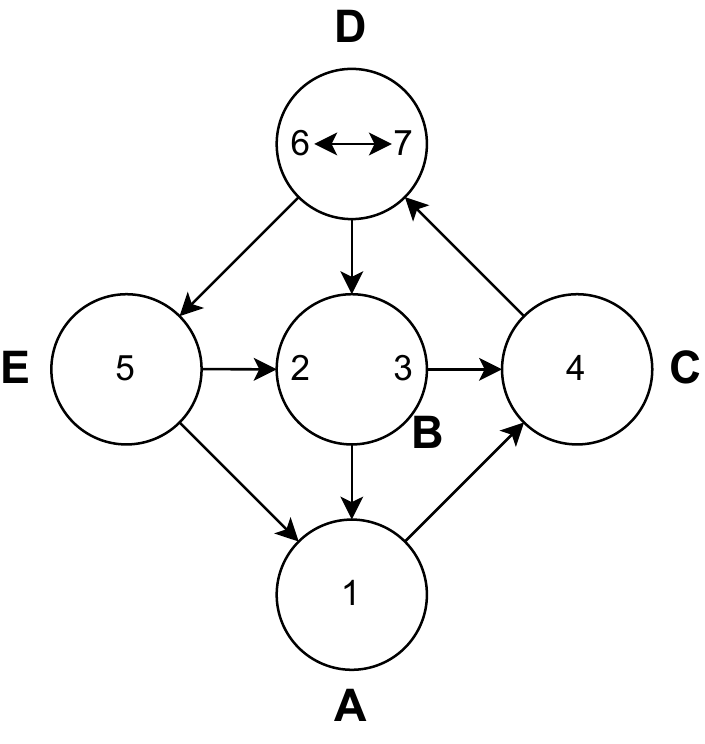}
        \caption{$\mathcal{P}_7^{(1)}$ as a directed graph.}
        \label{fig:7_unstable_1}
    \end{subfigure}
    \hfill
    \begin{subfigure}{0.32\linewidth}
        \centering
        \includegraphics[width=3.5cm]{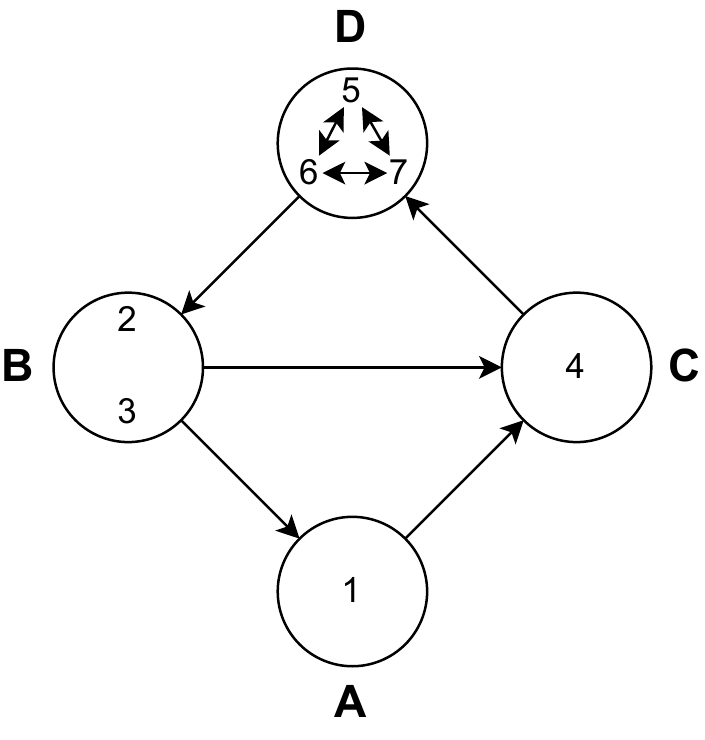}
        \caption{$\mathcal{P}_7^{(2)}$ as a directed graph.}
        \label{fig:7_unstable_2}
    \end{subfigure}
    \caption{The three families of cycle unstable binary preferences for $n \leq 7.$}
    \label{fig:5_7_unstable}
\end{figure}

\subsubsection{Two-Valued Preferences on Cycles.}\label{sec:2_valued_cycle}

Cycles being regular graphs, it is sufficient to study binary preferences. We exhausted all binary preferences for $n \leq 7$ using a Z3 Python solver (see Appendix \ref{app:Z3_solver}). Unstable preferences were found for $n=5$ and $n = 7,$ with one and, respectively, two non-isomorphic families of unstable preferences (see Table \ref{tab:summary_bin_results}). Examples of such preferences from each family $\mathcal{P}_5, \mathcal{P}_7^{(1)}$ and $\mathcal{P}_7^{(2)}$ are illustrated in Figure \ref{fig:5_7_unstable}.
Analyzing why $\mathcal{P}_5$ is unstable turns out to be quite complex (see Appendix \ref{sec:p5_stability_analysis}), and, as of our current understanding, its instability seems to be more of a ``small size artifact'' than anything else. In contrast, with their highly regular structure, the two instances with $n = 7$ seem more promising. In particular, profile $\mathcal{P}_7^{(2)}$ consists of only four classes, denoted by $A, B, C, D$ in Figure \ref{fig:7_unstable_2}. In the following, we show that $\mathcal{P}_7^{(2)}$ can be extended to unstable preferences of any $n \geq 7.$

\begin{reptheorem}{thm:bin_unst_cycle}
For $n\ge 7$, there exist four-class binary preferences such that all arrangements on a cycle are unstable. 
\end{reptheorem}
\begin{proof}
For $n\geq 7$, we consider four classes $A$, $B$, $C$ and $D$, as well as their respective members $a$, $b_1$, $b_2$, $c$ and $d_1,\ldots,d_{n-4}.$ Similarly to Figure \ref{fig:7_unstable_2}, we suppose that: (i) $a$ only likes $c$; (ii) $b_1$ and $b_2$ both only like $a$ and $c$; disliking each other; (iii) $c$ only likes members of $D$; (iv) members of $D$ all like each other, as well as $b_1$ and $b_2$, only disliking $a$ and $c.$ We show in the appendix why such preferences induce no stable arrangements on a cycle.
\end{proof}

This result shows that four classes of agents are sufficient to make all arrangements unstable on a cycle for two-valued preferences; we show in the following it is also necessary, as three classes of agents with two-valued preferences always induce a stable arrangement on a cycle.

\begin{reptheorem}{thm:3_class_2_valued_cycle_stable}
Three-class two-valued preferences always induce a stable arrangement on a cycle.
\end{reptheorem}
\begin{proof}
We consider three classes:~\textcolor{Red}{Reds}, \textcolor{LimeGreen}{Greens}, and \textcolor{blue}{Blues}, each containing $r$, $g$, and $b$ agents respectively. Without loss of generality, assume that $r, g, b \geq 1.$ Since the seating graph is regular, recall that we may assume the preferences to be binary. Note moreover that any blocking pair must have two agents of different colors. A principled case distinction now allows us to relatively quickly exhaust over all the possibilities for the preferences. We present the case distinction here and delegate the proofs themselves to specialized lemmas in the appendix.

First, whenever at least one class likes its own kind, Lemmas \ref{lem:3_class_sta:2_class_narciss}, \ref{lem:3_class_sta:1_class_likes_itself+other} and \ref{lem:3_class_sta:1_class_only_narciss} together show the existence of a stable arrangement. Note that the proofs are constructive and all the stable arrangements presented intuitively seat the self-liking class consecutively.

Then, if no class likes itself, Lemmas  \ref{lem:3_class_sta:no_narciss_with_paria} and \ref{lem:3_class_sta:no_narciss_1_like_all} show the existence of a stable arrangement whenever one class likes or is disliked by every other class. Note this time that intuitively all constructions present an alternation of two classes.

This only leaves us to handle the case where each class likes and is liked by exactly another class:~this case is treated separately in Lemma \ref{lem:3_class_sta:max_util_arg}, where we show that an arrangement maximizing utilitarian social welfare is always stable. The proof also shows that a modified variant of the swap dynamics converges.
\end{proof}

\subsubsection{Two-Valued Preferences on Paths.}\label{sec:2_valued_path} 
We now consider the more complex case of paths, where the endpoints and the associated loss of regularity artificially introduce an implicit comparison with zero, hence giving rise to a change of behavior between positive and negative preferences. In general, if we allow for negative values in the preferences, then there exist three-class two-valued preferences such that no arrangement on a path is stable:

\begin{reptheorem}{thm:3_class_2_val_unst_path}
For $n\ge 3$, there exist three-class two-valued preferences such that all arrangements on a path are unstable.
\end{reptheorem}
\begin{proof}
    Consider a variant of ``Alice, Bob and Friends'' where the preferences of Bob towards Alice, Alice towards Friends, and Friends towards Bob and themselves are all one; the preferences of Alice towards Bob, Bob towards Friends, and Friends towards Alice are all minus one. We show in the appendix why all arrangements on a path are unstable.
\end{proof}

For non-negative preferences, on the other hand, exhaustion for $n \leq 7$ using a similar solver\footnote{One can show that it suffices to try the cases $\Gamma \in \{\{0, 1\}, \{1, 2\}, \{1, 3\}, \{2, 3\}\}.$}  yields no unstable instances (see Table \ref{tab:summary_bin_results}). Surprised by the outcome, we also wrote \textsc{C++} code to test all $k$-class instances with at most $b$ agents per class for $(k, b) \in \{(4, 10), (5, 4)\},$ also leading to no unstable instances. \emph{We conjecture that two-valued instances with non-negative preferences always induce a stable arrangement on a path.} In the following, we show that this is true under the additional assumption that two agents are only willing to swap seats when they are at most two positions away on the path, no matter how much their utilities would increase otherwise.
This can be thought of as a practical constraint:~once the agents are seated, each agent knows which other agents they envy, but finding out whether envy is reciprocal would be too cumbersome if the other agent is seated too far away. 
For this setup, we prove that the swap dynamics always converge, so a stable arrangement can be found by starting with an arbitrary arrangement and swapping blocking pairs until the arrangement becomes stable. This can be seen as a generalization of a result from \cite{bilo_nash_2018}, where agents only have preference for others of their own kind and swaps are only with adjacent agents. This is stated below and proven next.

\begin{theorem}\label{thm:close-swaps-two-valued-is-stable} 
Two-valued non-negative preferences always induce a stable arrangement on a path assuming that agents are only willing to exchange seats with other agents sitting at distance at most two on the path. Moreover, the swap dynamics converge in this case.
\end{theorem}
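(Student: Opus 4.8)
The plan is to exhibit a bounded, integer-valued potential function that strictly increases with every permitted swap; then no infinite sequence of permitted blocking swaps exists, so the dynamics converge, and any arrangement they get stuck at is stable by definition, which also settles existence.

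First I would pin down the shape of a permitted blocking pair. Write the two preference values as $\alpha<\beta$ with $\alpha\ge 0$, and say an agent \emph{likes} another if the corresponding preference entry equals $\beta$. Represent an arrangement $\pi$ by the word $\sigma_\pi:\{1,\dots,n\}\to\mathcal A$ giving the occupant of each seat ($1$ = leftmost). Suppose $(i,j)$ is a blocking pair with $i$ to the left of $j$ at distance $d\in\{1,2\}$. A short case analysis, splitting on $d$ and on whether $i$ or $j$ is an endpoint, gives: \emph{(i)} neither $i$ nor $j$ is an endpoint — if, say, $i$ is the left endpoint, then swapping would make $j$ lose a neighbour and gain nothing, so by non-negativity $j$ cannot envy $i$; and \emph{(ii)} hence $i$ has an outer neighbour $u$ (the seat immediately left of $i$) and $j$ an outer neighbour $v$ (the seat immediately right of $j$), these being well-defined and distinct, the swap leaves the occupant of every seat strictly between $i$ and $j$ fixed, and since preferences are two-valued, ``$i$ envies $j$'' forces $p_i(u)=\alpha,\ p_i(v)=\beta$ while ``$j$ envies $i$'' forces $p_j(v)=\alpha,\ p_j(u)=\beta$. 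The single fact I will keep using is: $i$ does \emph{not} like $u$, whereas $j$ \emph{does} like $u$.

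Next I would define the potential. For an arrangement $\pi$ and $t\in\{2,\dots,n\}$ set $b_t(\pi)=1$ if the agent in seat $t$ likes the agent in seat $t-1$, and $b_t(\pi)=0$ otherwise; let $\Phi(\pi)=\sum_{t=2}^{n}2^{\,n-t}b_t(\pi)$, the integer whose binary digits, from most to least significant, are $b_2,\dots,b_n$. Then $0\le\Phi(\pi)<2^{\,n-1}$. Consider a permitted swap of a blocking pair $(i,j)$ as above, occupying seats $p<p'$ with $p'-p=d$. Only the occupants of seats $p$ and $p'$ change, hence only the digits $b_t$ with $t\in\{p,p+1,p',p'+1\}$ can change; by \emph{(i)} we have $p\ge 2$ and $p'\le n-1$, so all these indices lie in $\{2,\dots,n\}$ and the smallest is $p$. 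At $t=p$: before the swap seat $p$ holds $i$ and seat $p-1$ holds $u$, so $b_p=0$ since $i$ does not like $u$; after the swap seat $p$ holds $j$ and seat $p-1$ still holds $u$, so $b_p=1$ since $j$ likes $u$. Thus the digit at position $p$ flips from $0$ to $1$, contributing $+2^{\,n-p}$, the digits with $t<p$ are untouched, and the digits with $t>p$ together change $\Phi$ by at most $\sum_{t=p+1}^{n}2^{\,n-t}=2^{\,n-p}-1<2^{\,n-p}$ in absolute value. Therefore $\Phi$ strictly increases with every permitted swap.

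Since $\Phi$ is integer-valued and bounded, the swap dynamics terminate regardless of which blocking pair is resolved at each step, so they converge; the terminal arrangement admits no permitted blocking pair, i.e.\ is stable, which also proves existence (and, incidentally, gives a terminating swap-based algorithm). The main obstacle is the design of $\Phi$: the obvious candidate, utilitarian social welfare, is only \emph{non-strictly} monotone here — there are ``tight'' blocking swaps that leave it exactly unchanged — and symmetrized-welfare or happiness-counting potentials turn out to be exactly conserved by precisely those tight swaps; the fix is the positional, lexicographic encoding above, whose correctness rests entirely on \emph{(i)}–\emph{(ii)}, i.e.\ on two-valuedness (so ``not the larger value'' means ``the smaller value'') and non-negativity (so endpoints are never swapped, hence the digit at the leftmost affected seat is guaranteed to move upward). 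I would also note that the argument treats $d=1$ and $d=2$ uniformly, and that the distance-$\le 2$ restriction is exactly what makes \emph{(i)}–\emph{(ii)} hold — for longer-range swaps even social welfare can strictly decrease.
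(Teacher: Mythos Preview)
Your proof is correct. Both you and the paper use a potential argument with a positionally ordered (lexicographic) component, but your realisation is leaner. The paper's potential is the pair $(W(\pi),S(\pi))$, where $W$ is utilitarian welfare and $S(\pi)\in\{0,1,2,3\}^{n-1}$ encodes, for each edge, which of the two endpoints like the other; the proof then splits into the case where $W$ strictly increases and the residual case where $W$ is constant, showing in the latter that $S$ increases lexicographically at the leftmost affected edge. You instead track only one bit per edge---whether the right endpoint likes the left---and show directly that the leftmost affected bit always flips $0\to 1$, with no need for a welfare tiebreak or a separate treatment of $d=1$ versus $d=2$. The observation enabling this is exactly your (ii): two-valuedness forces $p_i(u)=\alpha$ and $p_j(u)=\beta$ \emph{in every blocking swap}, not merely in the welfare-preserving ones the paper isolates. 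What the paper's heavier potential buys is a closer tie to social welfare (useful for their discussion in Appendix~\ref{sec:exponential_convergence}), while your single-integer potential yields a shorter, more uniform argument.
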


To begin showing this, note that no local swap could occur with an agent seated at either endpoint, since the preferences are non-negative. 
Since the two endpoints are the only irregularities, removing them from consideration, we can restrict our analysis from non-negative two-valued to binary without loss of generality. 
For any arrangement $\pi,$ define the utilitarian social welfare $W(\pi) = \sum_{i \in \mathcal{A}}U_i(\pi).$ Moreover, to each arrangement $\pi$ we associate a sequence $S(\pi)$ of length $n - 1$ with elements in $\{0, 1, 2, 3\},$ constructed as follows. Let $\pi_i$ and $\pi_{i + 1}$ be the agents sitting at positions $i$ and $i + 1$ on the path:~if they do not like each other, then $S(\pi)_i = 0$, if they both like each other, then $S(\pi)_i = 3,$ if only $\pi_i$ likes $\pi_{i + 1},$ then $S(\pi)_i = 1,$ otherwise $S(\pi)_i = 2.$ To prove that the swap dynamics converge, we define the potential $\Phi(\pi) = (W(\pi), S(\pi)),$ where sequences are compared lexicographically, and prove that swapping blocking pairs always strictly increases the potential. The following two lemmas show this for swaps at distances one and two, respectively.

\begin{lemma} Let $\pi$ be an arrangement where $a$ and $b$ form a blocking pair and sit in adjacent seats. Let $\pi'$ be $\pi$ with $a$ and $b$'s seats swapped. Then, $\Phi(\pi') > \Phi(\pi).$
\end{lemma}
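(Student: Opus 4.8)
I want to show that the local (distance-one) swap of a blocking pair $a,b$ strictly increases $\Phi(\pi) = (W(\pi), S(\pi))$. Since $a$ and $b$ are adjacent, say at positions $i$ and $i+1$, a swap only affects the edges incident to these two positions: the ``outer'' edges $\{i-1,i\}$ and $\{i+1,i+2\}$ (one of which may be absent if we are near an endpoint, though endpoints cannot be in a blocking pair so both of $a,b$ have their other neighbor present) and the ``middle'' edge $\{i,i+1\}$, which just connects $a$ to $b$ either way. So I would first argue that $W(\pi') \ge W(\pi)$: indeed $a$ and $b$ both strictly gain utility (blocking pair), and the only other agents whose utility changes are the outer neighbors $c = \pi_{i-1}$ and $d = \pi_{i+2}$. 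The point is that the middle edge contributes the same to $W$ before and after (it is the edge between $a$ and $b$ in both arrangements, contributing $p_a(b)+p_b(a)$), so $W(\pi') - W(\pi) = \big(\text{gain of }a\text{ from }c\big) + \big(\text{gain of }b\text{ from }d\big) + \big(\text{change for }c\big) + \big(\text{change for }d\big)$, and since preferences are binary/non-negative and $c$ is now adjacent to $b$ instead of $a$ (resp.\ $d$ to $a$ instead of $b$), each of the four terms lies in $\{-1,0,1\}$; but by reciprocity the $a$-$c$ term and the $c$-term combine into $S$-edge changes. The cleanest way: decompose $W$ as the sum over edges of $S(\pi)$-values is \emph{not} literally $W$, but $W(\pi) = \sum_e w(S(\pi)_e)$ where $w(0)=0, w(1)=w(2)=1, w(3)=2$ — i.e.\ $W$ is determined by $S$ coordinatewise. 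So it suffices to compare the multisets $\{S(\pi')_{i-1}, S(\pi')_{i+1}\}$ and $\{S(\pi)_{i-1}, S(\pi)_{i+1}\}$ on the only two edges that change.

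For the core argument I would then do a short case analysis on these two outer edges. Write $S(\pi)_{i-1} = $ the $c$-$a$ edge type and $S(\pi)_{i+1} = $ the $b$-$d$ edge type (using the ``leftward/rightward liking'' convention, being careful about orientation); after the swap these become the $c$-$b$ edge and the $a$-$d$ edge. The blocking-pair hypothesis says: $a$ strictly prefers $b$'s old neighbor $d$ to its own old neighbor $c$, i.e.\ $p_a(d) > p_a(c)$, hence $p_a(d)=1, p_a(c)=0$; symmetrically $p_b(c) > p_b(d)$, hence $p_b(c)=1, p_b(d)=0$. (Here I'm using that the middle edge $a$-$b$ is common to both sides so it cancels out of the envy comparison.) Now I claim $W(\pi') \ge W(\pi)$: the $a$-side of edge $\{i-1,i\}$ goes from $p_a(c)=0$ to, on the new $a$-$d$ edge, $p_a(d)=1$, a gain of $1$; the $b$-side of edge $\{i+1,i+2\}$ goes from $p_b(d)=0$ to $p_b(c)=1$, another gain of $1$; the $c$-side of its edge changes from $p_c(a)$ to $p_c(b)$ and the $d$-side from $p_d(b)$ to $p_d(a)$, each a change in $\{-1,0,1\}$; so $W(\pi') - W(\pi) = 2 + (p_c(b)-p_c(a)) + (p_d(a)-p_d(b)) \ge 2 - 1 - 1 = 0$. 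If this is $>0$ we are done on the first coordinate of $\Phi$. The remaining, tight case is $W(\pi') = W(\pi)$, forcing $p_c(b)=0, p_c(a)=1, p_d(a)=0, p_d(b)=1$.

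In that tight case I would compare $S$ lexicographically on the two changed positions. Before: edge $\{i-1,i\}$ has $a$ liking $c$? No ($p_a(c)=0$); $c$ liking $a$? Yes ($p_c(a)=1$) — so this edge has type ``only the left agent likes the right,'' some fixed value in $\{1,2\}$ depending on the convention, and is \emph{not} $3$ and \emph{not} $0$; after the swap, edge $\{i-1,i\}$ connects $c$ (left) to $b$ (right) with $p_c(b)=0$ and $p_b(c)=1$ — type ``only the right agent likes the left,'' the \emph{other} value in $\{1,2\}$. Similarly for edge $\{i+1,i+2\}$. The punchline is that in the tight case the edge to the \emph{left} of the swapped pair improves its $S$-value under the convention (it goes from ``$\to$'' to ``$\leftarrow$'', i.e.\ from $1$ to $2$, say), which is an increase at the leftmost changed coordinate of $S$, giving $S(\pi') > S(\pi)$ lexicographically and hence $\Phi(\pi') > \Phi(\pi)$. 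The one genuine subtlety — \textbf{the main obstacle} — is getting the directional bookkeeping exactly right: $S(\pi)_i = 1$ means ``$\pi_i$ likes $\pi_{i+1}$ but not conversely,'' so after a swap that reverses which of the two endpoints of the edge is the newly-arrived agent, I need to check that the convention consistently makes the \emph{left} changed coordinate go strictly up rather than down (and handle boundary cases where the swap is at positions $1,2$ or $n-1,n$ — but non-negativity already rules those out, as noted in the text). I would lay out a small table of the finitely many sub-cases for the edge types to nail this down, but the essential content is the computation $W(\pi')-W(\pi) \ge 2 - 1 - 1 = 0$ with equality analysis, followed by the $S$-tiebreak.
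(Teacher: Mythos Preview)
Your proposal is correct and follows essentially the same approach as the paper: you bound $W(\pi') - W(\pi) = 2 + (p_c(b)-p_c(a)) + (p_d(a)-p_d(b)) \geq 0$, and in the tight case deduce the four one-way preferences $c \to a$, $a \to d$, $d \to b$, $b \to c$, so the leftmost changed $S$-coordinate goes from $1$ to $2$. The paper's argument is identical up to relabeling (it writes $x,a,b,y$ at positions $i,\dots,i+3$ instead of your $c,a,b,d$ at $i-1,\dots,i+2$) and is slightly terser about the endpoint and directional bookkeeping you flag as the main obstacle.
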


\begin{proof} When $n \leq 2,$ there are no blocking pairs, so assume $n \geq 3.$ First, notice that swapping the places of $a$ and $b$ keeps $a$ and $b$ adjacent, from which the swap changes the utility of any agent by at most one. Since $a$ and $b$'s utilities have to increase, they have to each change by exactly one. Moreover, note that neither $a$ nor $b$ can be seated at the ends of the table, as otherwise swapping would make one of them lose a neighbor while keeping the other one, hence not increasing their utility. Hence, assume that $x$ is the other neighbor of $a$ and $y$ is the other neighbor of $b$; i.e., $x, a, b, y$ are seated consecutively in this order on the path, at positions say $i, \ldots, i + 3.$ If either $U_x(\pi') \geq U_x(\pi)$ or $U_y(\pi') \geq U_y(\pi),$ it follows that $W(\pi') - W(\pi) 
= U_x(\pi') - U_x(\pi) + U_y(\pi') - U_y(\pi) + 2 \geq 1,$ so $\Phi(\pi') > \Phi(\pi).$ Otherwise, we know that $U_x(\pi') - U_x(\pi) = U_y(\pi') - U_y(\pi) = -1,$ from which $W(\pi') = W(\pi).$ Together with $U_a(\pi') - U_a(\pi) = U_b(\pi') - U_b(\pi) = 1,$ this means that preferences satisfy $a \rightarrow y \rightarrow b \rightarrow x \rightarrow a,$ where an arrow $u \rightarrow v$ indicates that agent $u$ likes $v$ but not the other way around. Therefore, $S(\pi)_i = 1$ and $S(\pi')_i = 2.$ Since $S(\pi)$ and $S(\pi')$ only differ at positions $i, \ldots, i + 2,$ this means that $S(\pi') > S(\pi),$ so $\Phi(\pi') > \Phi(\pi),$ as required.
\end{proof}

\begin{lemma} Let $\pi$ be an arrangement where $a$ and $b$ form a blocking pair and sit two seats away. Let $\pi'$ be $\pi$ with $a$ and $b$'s seats swapped. Then, $\Phi(\pi') > \Phi(\pi).$
\end{lemma}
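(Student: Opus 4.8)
The plan is to reuse the template of the distance-one lemma: show that the utilitarian welfare $W$ cannot decrease under the swap, dispatch the case where it strictly increases via the first coordinate of $\Phi$, and in the tight case $W(\pi')=W(\pi)$ read off enough of the local preference pattern to conclude that the second coordinate $S$ strictly increases lexicographically at its first affected entry.

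First I would fix notation. Since $a$ and $b$ form a blocking pair they both move, so neither sits at an endpoint (local swaps never involve endpoints under non-negative preferences). Writing their seats as positions $j$ and $j+2$ with $a$ in position $j$, which is without loss of generality as the swap is symmetric in $a,b$, there is an agent $c$ between them in position $j+1$, an agent $x$ in position $j-1$, and an agent $y$ in position $j+3$; the last two may themselves be endpoints, which is harmless since we only ever use the single edge joining $x$ to position $j$ and the single edge joining $y$ to position $j+2$. Swapping $a$ and $b$ leaves $c$ with the same neighbor set $\{a,b\}$, so $U_c$ is unchanged, and since all preferences are $0/1$ the only utilities that can change are $U_a,U_b,U_x,U_y$, by at most one each:~$U_a(\pi')-U_a(\pi)=p_a(y)-p_a(x)$, $U_b(\pi')-U_b(\pi)=p_b(x)-p_b(y)$, $U_x(\pi')-U_x(\pi)=p_x(b)-p_x(a)$ and $U_y(\pi')-U_y(\pi)=p_y(a)-p_y(b)$. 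The blocking-pair hypothesis makes the first two strictly positive, hence equal to $+1$, already forcing $p_a(y)=p_b(x)=1$ and $p_a(x)=p_b(y)=0$.

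Summing the four displayed differences gives $W(\pi')-W(\pi)=2+(p_x(b)-p_x(a))+(p_y(a)-p_y(b))\ge 2-1-1=0$, so $W$ never decreases; if it strictly increases we are done. Otherwise $W(\pi')=W(\pi)$, which forces both bracketed terms to equal $-1$, i.e.~$p_x(a)=p_y(b)=1$ and $p_x(b)=p_y(a)=0$. Now I would turn to $S$: the swap alters only the agents in positions $j$ and $j+2$, so $S(\pi)$ and $S(\pi')$ agree on the indices $1,\dots,j-2$, and the first index where they could differ is $j-1$, which records the relation between position $j-1$ (agent $x$) and position $j$. In $\pi$ that pair is $(x,a)$ with $x$ liking $a$ but not conversely, so $S(\pi)_{j-1}=1$; in $\pi'$ it is $(x,b)$ with $b$ liking $x$ but not conversely, so $S(\pi')_{j-1}=2$. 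Hence $S(\pi')>S(\pi)$ lexicographically, and so $\Phi(\pi')>\Phi(\pi)$, settling this case as well.

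I do not expect a real obstacle here. The only point requiring care is the bookkeeping:~checking that $c$'s utility is genuinely untouched and, crucially, that the first entry on which $S(\pi)$ and $S(\pi')$ disagree is the ``outer'' entry at index $j-1$, and not one of the inner ones at $j,j+1,j+2$, whose values depend on the unspecified preferences of $c$. This is the precise analogue of the argument in the distance-one lemma and introduces no new idea.
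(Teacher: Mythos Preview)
Your proof is correct and takes essentially the same approach as the paper, which simply states that ``the same argument works'' with the middle agent (your $c$, the paper's $z$) remaining a common neighbor of $a$ and $b$ and hence ignorable. You have written out in full the details that the paper leaves implicit, and they all check out, including the key observation that the first index at which $S$ can change is the outer one recording the $x$--$a$ versus $x$--$b$ edge.
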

\begin{proof} The same argument works, except that now we consider five agents $x, a, z, b, y$ seated at positions $i, \ldots, i + 4.$ This is because agent $z$ remains a common neighbor to $a$ and $b$ when swapping places, and can, essentially, be ignored.
\end{proof}

Therefore, since the potential is upper-bounded,
we get that the swap dynamics have to converge. One might now rightfully ask whether convergence is guaranteed to take polynomial time. While we could neither prove nor disprove this, in Appendix \ref{sec:exponential_convergence} we give evidence of why exponential time might be required. Moreover, note that for cycles convergence is not guaranteed even for swaps at distance at most two; e.g., $\mathcal{P}_5$ in Figure \ref{fig:5_unst}, where any two agents are seated at most two seats away anyway. For paths, on the other hand, one could still hope that the result generalizes beyond distance at most two when the preferences are non-negative. This is however not the case, even when stable arrangements exist, as we show next (details in the appendix).

\begin{figure}[t]
    \centering
    \begin{subfigure}{0.45\textwidth}
        \centering
        \includegraphics[width=3.5cm]{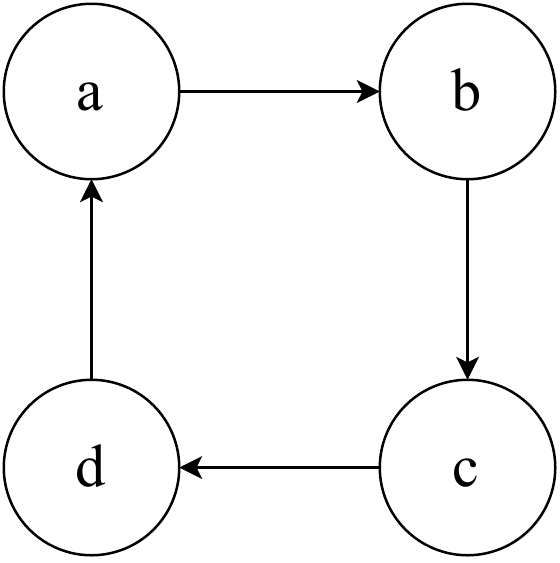}
        \caption{$\mathcal{P}_4$ as a directed graph.}
        \label{fig:4_non_cv_pref}
    \end{subfigure}
    \hfill
    \begin{subfigure}{0.45\textwidth}
        \centering
        \includegraphics[width=5.5cm]{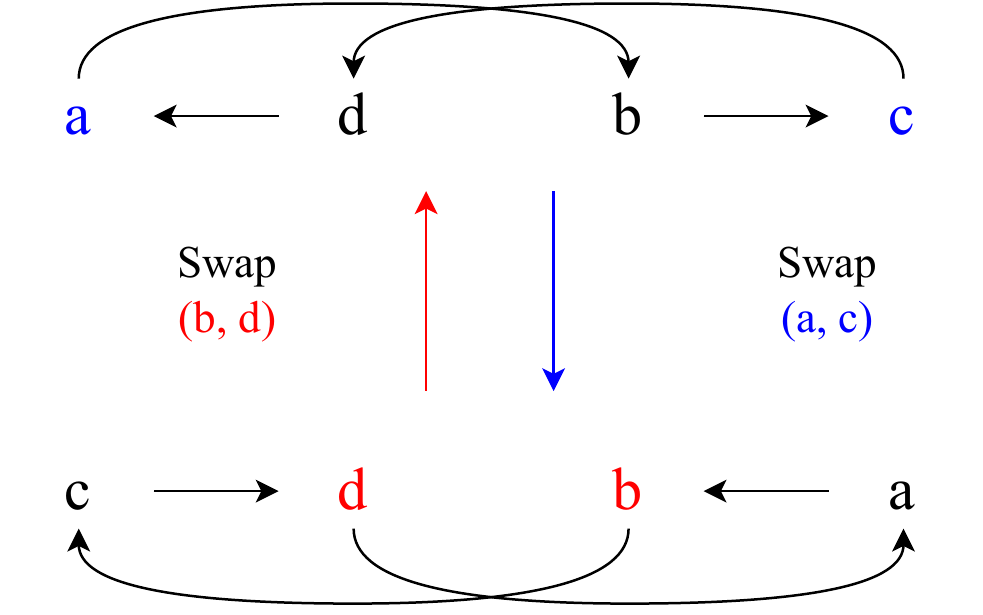}
        \caption{Looping swap dynamics.}
        \label{fig:4_non_cv_exple}
    \end{subfigure}
    \caption{Four-agent binary preferences $\mathcal{P}_4$ with path stable arrangement $\pi^* = (a, b, c, d)$ but where the swap dynamics
    necessarily alternate between $\pi_1 = (a, d, b, c)$ and $\pi_2 = (c, d, b, a),$ up to reversal, when started in either of them.%
    }
    \label{fig:4_non_cv}
\end{figure}

\begin{replemma}{lem:4_non_cv}
Consider profile $\mathcal{P}_4$ in Figure \ref{fig:4_non_cv}. A path stable arrangement exists, yet the swap dynamics started from certain arrangements cannot converge.
\end{replemma}

This generalizes to  $n \geq 4$ agents by adding $n - 4$ dummy agents to $\mathcal{P}_4$ liking nobody and being liked by nobody and seating them at positions $5, \ldots, n$ on the path. Hence, non-convergence for distance $\leq 
3$ is not a small-$n$ artifact.

\section{Conclusions and Future Work}
We studied envy-freeness and exchange-stability on paths and cycles. For both topologies, we showed that finding envy-free/stable arrangements can be achieved in polynomial time when the number of agent classes is bounded, while for envy-freeness the problem becomes NP-hard without this restriction, even for binary preferences. For stability, it is known that for sufficiently many values the problem is also NP-hard \cite{hua}. However, it would be interesting to see, for instance, if the same can be said about binary preferences. For cycles at least, we believe this to be the case, but were unable to prove it. In part, this is because of the difficulty of constructing unstable instances in the first place. Moreover, for both topologies, we gave a full characterization of the pairs $(k, v)$ such that $k$-class $v$-valued unstable preferences exist. 
For paths, the characterization requires negative values in the two-valued case, and we are still unsure whether two-valued non-negative preferences that are unstable on a path exist. We, however, partially answer this in the negative by showing that the swap dynamics are guaranteed to converge if agents can only swap places with other agents seated at most two positions away from them. Without this assumption, convergence might not be guaranteed even when stable arrangements exist, so a different approach would be required to prove existence. It would also be interesting to know if unstable preferences are exceptions or the norm. We give a probabilistic treatment of this question for random preference digraphs of average degree $O(\sqrt{n})$ in the appendix. As an avenue for future research, it would be attractive to consider other kinds of tables commonly used in practice, the most relevant being the one shaped as a $2 \times n$ grid, with guests on either side facing each other.
It would additionally be interesting to consider non-additive utilities or to enforce additional constraints on the arrangement, such as certain people coming ``in groups'' and hence having to sit consecutively at the table. It would also be worth investigating replacing addition by taking minimum in the agents' utilities (and also the more general lexicographical variant).

\subsubsection{Acknowledgements} We thank Edith Elkind for the useful discussions regarding this work. We thank Giovanna Varricchio and Martin Bullinger for pointing out relevant literature. A preliminary version of this paper was presented at COMSOC'23; we thank the workshop attendees for their interesting questions and remarks. We thank the anonymous reviewers for their constructive feedback and useful suggestions contributing to improving the paper.


\bibliographystyle{splncs04}
{\footnotesize 
\bibliography{wine}
}

\newpage
\appendix

\section{Omitted Proofs}

In this appendix, we provide the proofs omitted from the main text of the paper.

\subsection{Proofs Omitted From Section \ref{sec:envy-free}}

In this section, we prove Theorem \ref{th_np_hard_path}, restated below for convenience.

\repeattheorem{th_np_hard_path}
\begin{proof}
We proceed similarly as for Theorem \ref{th_np_hard_cycle}, this time reducing from Hamiltonian Path on directed graphs. To make the reduction work, we require the additional stipulation that the input graph $G$ has a vertex with no outgoing edges, which we assume without loss of generality to be vertex $n.$ Note that this preserves NP-hardness. Moreover, in the preprocessing step, will now only check that vertices $v \in V \setminus \{n\}$ have outgoing edges. Otherwise, the construction of preference profile $\calP$ stays the same. To show that if $G$ has a Hamiltonian path, then there exists an envy-free arrangement on a path, the same argument as before can be used, except for the treatment of $z_n,$ who gets a utility of zero, but they still can not envy another agent because they approve of nobody (all other agents retain utility 1, as before). To show that an envy-free arrangement on a path implies the existence of a Hamiltonian path, the argument stays similar but again requires minor tweaks. In particular, for agent $z_n$, and only for them, it holds that they do not necessarily need a second neighbor other than $y_n$, because they like no other agents, so they will be happy with a utility of 0, obtained by sitting at one of the ends of the path. The analysis for the other agents stays the same, as they do require a second neighbor to get a utility of 1, and hence cannot sit at the ends.
\end{proof}

\subsection{Proofs Omitted From Section \ref{sec:k_classes_algo}}

In this section, we prove Theorem \ref{thm:algo_k_class_cycle}, restated below for convenience.

\repeattheorem{thm:algo_k_class_cycle} 
\begin{proof} The proof idea stays similar to that for paths, first designing a nondeterministic logarithmic space algorithm deciding whether a compatible arrangement exists and then lifting this to one running in deterministic polynomial time. Since cycles no longer have endpoints, the definition of compatible arrangements needs adjusting. First, instead of introducing a dummy agent class 0 and placing it at positions $0$ and $n + 1$ in s, we now make $s_0$ stand for $s_n$, and $s_{n + 1}$ stand for $s_1.$ Similarly, $s_{n + 2} = s_{2},$ etc. Moreover, in order for arrangement $s$ to be compatible, we now require that for all $1 \leq i < j \leq n$ such that $1 < j - i < n - 1,$ the triplets $(s_{i - 1}, s_i, s_{i + 1})$ and $(s_{j - 1}, s_j, s_{j + 1})$ are long-range compatible and, additionally, for all $1 \leq i \leq n$ the triplets $(s_{i - 1}, s_i, s_{i + 1})$ and $(s_{i}, s_{i + 1}, s_{i + 2})$ are short-range compatible. 
Note, therefore, that the pair $(i, j) = (1, n)$ is the only pair for which the required check differs from the path case:~previously, the check was for long-range compatibility, while now it is for short-range compatibility, and technically with the roles of $i$ and $j$ reversed. This time, instead of beginning the algorithm by guessing the values of $s_1$ and $s_2,$ we instead begin it by guessing the values of $s_0, s_1$ and $s_2.$ The algorithm then proceeds as before. However, when the value $i = n$ is reached, naturally, no guessing takes place, as $s_n$ has already been guessed, but all other computations execute as before. However, when $i = n + 1$ is reached, it is trickier; not only do we not have to guess $s_{n + 1} = s_1,$ but also the check performed has to be altered. In particular, instead of checking whether $(s_{n - 1}, s_n, s_1)$     is long-range compatible with $(s_n, s_1, s_2),$ the check now has to be for short-range compatibility. Checking for short-range compatibility can easily be incorporated, so it remains to show how to ensure that the two triplets are not also tested for long-range compatibility like in the previous implementation. This is done as follows:~instead of temporarily decreasing the values $m_{s_{i - 3}, s_{i - 2}, s_{i - 1}}$ and $m_{s_{i - 2}, s_{i - 1}, s_i}$ by one and then checking $(s_{i - 2}, s_{i - 1}, s_i)$ against the counts in $m$, we now do the same but also decrease $m_{s_{n}, s_{1}, s_2}$ by one. We stress that these rather tedious modifications are only applied for $i = n + 1.$ The modified algorithm needs to store $s_0, s_1$ and $s_2$ throughout its execution in addition to the state it already stored, but this does not impact the logarithmic space-bound, completing the proof.
\end{proof}

\subsection{Proofs Omitted From Section \ref{sec:2_classes}}

In this section, we prove Theorem \ref{thm:2_class_path_stable}, restated below for convenience.

\repeattheorem{thm:2_class_path_stable} 
\begin{proof}
    The introduction of agents sitting at the two endpoints calls for a more careful analysis:
    \smallbreak
    \begin{itemize}
        \item If one class, say \textcolor{blue}{Blue}, likes everyone equally.
        
        Then \textcolor{blue}{Blues} either prefer to be on the endpoint or in the middle of the path (depending on the sign of their constant preference). If they prefer endpoints, seating two \textcolor{blue}{Blues} on the endpoints makes the arrangement stable; otherwise, sitting all \textcolor{blue}{Blues} in the middle ensures stability.
        
        Those two solutions are not available if and only if there is a single agent in one of the two classes:~in that case, ensuring that this agent gets maximal utility stabilizes the whole arrangement.
        
    \item If \textcolor{blue}{Blues} and \textcolor{Red}{Reds} prefer the same class, say \textcolor{blue}{Blues}.
    
        The reasoning from the previous proof still holds, as long as no \textcolor{blue}{Blue} sits at an endpoint. This would happen if and only if there is a unique \textcolor{Red}{Red}: ensuring maximum utility for that one agent would then give stability.

    \item If \textcolor{blue}{Blues} and \textcolor{Red}{Reds} both strictly prefer the opposite kind, alternate \textcolor{blue}{Blues} and \textcolor{Red}{Reds} starting from the most numerous class, say \textcolor{blue}{Blue}.
    Suppose:
    \begin{itemize}
        \item Preferences for the opposite kind are positive for both. Whenever there is strictly more of one class than the other, the reasoning from the previous proof still holds. In case of equality, only the \textcolor{blue}{Blue} and \textcolor{Red}{Red} at the endpoints would want to swap:~it is not the case, as they would exchange a ``different colour'' neighbor for a ``same colour'' one.
        
        \item Preferences for the opposite kind are negative for both. Now at most two agents have maximum utility:~the \textcolor{blue}{Blue} extremal one, and the \textcolor{Red}{Red} extremal one if there are as many \textcolor{Red}{Reds} as \textcolor{blue}{Blues}. If both have maximum utility, the others cannot improve, and the arrangement is stable. The arrangement is still stable if there are strictly more \textcolor{blue}{Blues} than \textcolor{Red}{Reds}, as the extremal \textcolor{blue}{Blues} could only agree to swap between two \textcolor{Red}{Reds}. As only \textcolor{blue}{Blues} are sitting between two \textcolor{Red}{Reds}, this cannot happen.
        
        \item Preference of \textcolor{blue}{Blues} for \textcolor{Red}{Reds} is negative, but \textcolor{Red}{Reds} for \textcolor{blue}{Blues} is positive. Then non-extremal \textcolor{blue}{Blues} would envy the extremal \textcolor{Red}{Red}, which would envy them back if and only if the preference of \textcolor{Red}{Reds} towards \textcolor{Red}{Reds} is strictly positive. In this latter case, proceed to the exchange:~\textcolor{blue}{Blues} in the middle can only improve by switching with an extremal \textcolor{blue}{Blue}, which would never be accepted. This new arrangement is therefore stable.
    \end{itemize}
        
     \item If \textcolor{blue}{Blues} and \textcolor{Red}{Reds} both strictly prefer their own kind, sit all \textcolor{blue}{Blues} on one side and all \textcolor{Red}{Reds} on the other. 
     Let $\textcolor{blue}{B_{\mathit{out}}}$ be the extremal \textcolor{blue}{Blue}, $\textcolor{blue}{B_{\mathit{in}}}$ be the only \textcolor{blue}{Blue} with both a \textcolor{blue}{Blue} and a \textcolor{Red}{Red} neighbor $\textcolor{Red}{R_{\mathit{in}}}$, and let $\textcolor{Red}{R_{\mathit{out}}}$ be the extremal \textcolor{Red}{Red}. 
     As previously, $(\textcolor{blue}{B_{\mathit{in}}}, \textcolor{Red}{R_{\mathit{in}}})$ is never a blocking pair. Suppose:
     
     \begin{itemize}
         \item Preferences for their own kind are positive for both. We verify that $(\textcolor{blue}{B_{\mathit{in}}}, \textcolor{Red}{R_{\mathit{out}}})$ is not a blocking pair, as $\textcolor{blue}{B_{\mathit{in}}}$ would lose his only \textcolor{blue}{Blue} neighbor. By symmetry, the only remaining possible blocking pair is $(\textcolor{blue}{B_{\mathit{out}}}, \textcolor{Red}{R_{\mathit{out}}})$:~it is also not a blocking pair since both would only gain an ``opposite colour'' neighbor.
         \item Preferences for their own kind are negative for both. Both  $\textcolor{blue}{\textcolor{blue}{B_{\mathit{out}}}} $ and $\textcolor{Red}{R_{\mathit{out}}}$ have maximum utility, hence $\textcolor{blue}{B_{\mathit{out}}}$ an $\textcolor{Red}{R_{\mathit{out}}}$ are not part of a blocking pair. Other \textcolor{blue}{Blues} except $\textcolor{blue}{B_{\mathit{in}}}$ can only improve their utility by moving to an endpoint, i.e., by switching with $\textcolor{Red}{R_{\mathit{out}}}$ or $\textcolor{blue}{B_{\mathit{out}}}$, both being impossible. The only remaining possibility $(\textcolor{blue}{B_{\mathit{in}}}, \textcolor{Red}{R_{\mathit{in}}})$ is also not a blocking pair. Therefore, the arrangement is stable.
         \item Preference of \textcolor{blue}{Blues} for \textcolor{blue}{Blues} is negative, but that of \textcolor{Red}{Reds} for \textcolor{Red}{Reds} is positive. Note that all \textcolor{Red}{Reds} but $\textcolor{Red}{R_{\mathit{in}}}$ and $\textcolor{Red}{R_{\mathit{out}}}$ have maximum utility, hence cannot be part of a blocking pair. Moreover, since $\textcolor{blue}{B_{\mathit{out}}}$ has maximum utility, $\textcolor{Red}{R_{\mathit{in}}}$ could only switch with \textcolor{blue}{Blues} having at least one blue neighbor, so it cannot improve. 
         If \textcolor{Red}{Reds} dislike \textcolor{blue}{Blues}, $\textcolor{Red}{R_{\mathit{out}}}$ is also unable to improve, and the arrangement is stable. 
         On the contrary, suppose \textcolor{Red}{Reds} strictly like \textcolor{blue}{Blues}, and consider the arrangement obtained after exchanging $\textcolor{Red}{R_{\mathit{out}}}$ and $\textcolor{blue}{B_{\mathit{in}}}$:~now both extremal \textcolor{Red}{Reds} have the second highest utility, and cannot improve since no \textcolor{blue}{Blue} is sitting between two \textcolor{Red}{Reds}, so the arrangement is stable.
     \end{itemize}
    \end{itemize}
    Hence, it is always possible to sit two classes in a stable manner on a path.
\end{proof}

\subsection{Proofs Omitted From Section \ref{sec:3-val_3-class}}

In this section, we prove Theorem \ref{thm:unst_tern_path}, restated below for convenience.

\repeattheorem{thm:unst_tern_path}

\begin{figure}[t]
$$ \mathcal{P} =\left[\begin{array}{ c  c  c  c| c  c  c}
    0 & 0 & 0 & 0 & 1 & \ldots & 1\\
    1 & 0 & 0 & 0 & 0 & \ldots & 0\\
    1 & 0 & 0 & 0 & 0 & \ldots & 0\\
    1 & 0 & 0 & 0 & 0 & \ldots & 0\\
    \hline
    0 & 3 & 3 & 3 & 1 & \ldots & 1\\
    \ldots & \ldots & \ldots & \ldots & \ldots & \ldots & \ldots\\
    0 & 3 & 3 & 3 & 1 & \ldots & 1\\
  \end{array}
  \right]$$
\caption{Possible instance of preferences for the proof of Theorem \ref{thm:unst_tern_path}, with the ordering (Alice, $1\textsuperscript{st}$  Bob, $2\textsuperscript{nd}$ Bob, $3\textsuperscript{rd}$ Bob, Friends).}
\label{fig:exple_pref_unst_3values_path}
\end{figure}

\begin{proof}
We consider the same instance of preferences as in the proof of Theorem \ref{thm:unst_ternary_cycle}, with the following modifications:~we add two copies of Bob, and we suppose there are at least eight friends. Alice likes everyone but the Bobs, the Bobs only like Alice, the friends like the Bobs the most and Alice the least. We furthermore suppose that friends would rather be next to one Bob than between two other friends. Figure \ref{fig:exple_pref_unst_3values_path} displays a possible instance of such preferences.

Suppose one of the Bobs is sitting next to Alice. Since there are at least eight different friends, at least one of them is neither sitting beside a Bob nor at an endpoint of the table. Indeed, at most five friends are sitting next to Bobs (since Alice sits next to one of them), and two more friends can be sitting at an endpoint. Hence, Alice and this friend would both agree to switch places since it is always worth it for a friend to move beside a Bob, even if this means sitting at an endpoint.

Now, suppose no Bob is sitting next to Alice. Since there are three Bobs, at least one of them is not sitting at an endpoint, say $B_1$. The only case where one of Alice's neighbors would not agree to switch with $B_1$ is if it was already sitting next to another Bob, say $B_2$; moreover, the only reason for him not to switch with $B_2$ is if $B_2$ is sitting at the endpoint of the table. Hence, the seating arrangement is of the form $(B_2,~F,~A,~F,~\ldots,~B_1,~\ldots).$ Alice is not sitting at an endpoint, therefore possesses a second friend as a neighbor, and that second friend would agree to swap with at least one of the two remaining Bobs.
\end{proof}

\subsection{Proofs Omitted From Section \ref{sec:2_valued_cycle}}

In this section, we complete the proof of Theorem \ref{thm:bin_unst_cycle}. Then, we prove the lemmas used in the proof of Theorem \ref{thm:3_class_2_valued_cycle_stable}. Subsequently, we complete the proof of Theorem \ref{thm:3_class_2_val_unst_path}. Finally, we prove Lemma \ref{lem:4_non_cv}.

\subsubsection{Proving Theorem \ref{thm:bin_unst_cycle}.} We now complete the proof of Theorem \ref{thm:bin_unst_cycle}, restated below for convenience. 

\repeattheorem{thm:bin_unst_cycle}
\begin{proof}[continued]
We now show that all arrangements on a cycle are unstable for our preference profile, by considering every possible local arrangement around $c$ and showing that they all induce a blocking pair.

\begin{itemize}
    \item If the local arrangement around $c$ consists of $(b_1,~c, ~b_2)$, then $c$ has utility 0 and would agree to swap with anyone having a neighbor in $D.$ Since $D$ contains strictly more than two agents, one of the $d_i$ neighbors of $a$ has another member of $D$ as a neighbor. Performing a swap with $c$ would increase its utility from 1 to 2, while $c$ would improve to utility 1, so it is a blocking pair.
    
    \item If it consists of $(b_1, ~c, ~a),$ $c$ has again utility 0 and would switch with whoever has a neighbor in $D.$ In particular, if the local arrangement is $(b_1, ~c, ~a, ~b_2)$, then $(b_1,c)$ is a blocking pair, as $b_1$ second neighbor is in $D.$ If it is $(b_2, ~b_1, ~c, ~a)$, then $(b_2,c)$ is a blocking pair, as $b_2$'s second neighbor is in $D.$ Otherwise, the local arrangement must be $(d_i, ~b_1, ~c, ~a, ~d_j)$, and $(b_2,c)$ forms once again a blocking pair. The same reasoning naturally holds for local arrangements of the form $(b_2, ~c, ~a).$
    
    \item If the local arrangement around $c$ consists of $(d_i, ~c, ~a),$ then at least one member of $B$, say $b_1$, is no neighbor of $a$ and has utility 0. In that case, both $d_i$ and $b_1$ can increase their utility by exchanging seats.
    
    \item At last, if it consists of $(d_i, ~c, ~b_1)$ or $(d_i, ~c, ~d_j),$ then agent $a$ has utility 0, and $d_i$ can always increase its utility by switching with $a.$ Indeed, if $a$ was his neighbor, he would exchange $c$ for a member of $D \cup B$ while retaining $a$ as a neighbor; otherwise he would exchange $c$ for a second neighbor in $D \cup B.$ Since moving close to $c$ would always increase $a$'s utility, $(d_i, a)$ is a blocking pair. The same reasoning of course holds for local arrangements of the form $(d_i, ~c, ~b_2).$
\end{itemize}
\end{proof}

\subsubsection{Proving Theorem \ref{thm:3_class_2_valued_cycle_stable}.} We now prove the Lemmas used to prove Theorem \ref{thm:3_class_2_valued_cycle_stable}, restated below for convenience.

\repeattheorem{thm:3_class_2_valued_cycle_stable}

\begin{lemma}\label{lem:3_class_sta:2_class_narciss}
    Suppose at least two classes like their own kind. Then, there exists a stable arrangement.
\end{lemma}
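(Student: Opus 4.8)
The statement to prove is Lemma~\ref{lem:3_class_sta:2_class_narciss}: if at least two of the three classes (Reds, Greens, Blues) like their own kind, then a stable arrangement on the cycle exists. Since the seating graph is regular, I would first normalize to binary preferences, and recall that every blocking pair must be bichromatic. The plan is to name the two self-liking classes, say Reds and Greens, and construct an explicit arrangement in which all Reds sit consecutively in one block and all Greens sit consecutively in another block, with the Blues filling the remaining arc (also consecutively). One then argues that in such a ``three contiguous blocks'' arrangement, almost every agent already enjoys maximum utility, so only the few agents at block boundaries could conceivably be in a blocking pair, and these can be ruled out directly.

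\textbf{Key steps.} First I would set up the arrangement: place the $r$ Reds in one contiguous arc, the $g$ Greens in a second contiguous arc, and the $b$ Blues in a third, so the cycle reads $R\cdots R\,G\cdots G\,B\cdots B$ back to itself. Each Red that is not at either end of the Red-block has two Red neighbors, hence maximum possible utility (since Reds like Reds, binary preferences cap utility at $2$); the same holds for Greens. So the only candidates for a blocking pair are the (at most two) boundary Reds, the (at most two) boundary Greens, and the Blues. The second step is the case analysis on the boundary agents. A boundary Red $R^{*}$ has one Red neighbor and one non-Red neighbor; for $R^{*}$ to want to swap it must be able to land a position with two Reds or with a Red plus something it likes at least as much — but there is no seat adjacent to two Reds other than inside the Red block, which is already full, so the only way $R^*$ strictly improves is by swapping to a seat adjacent to one Red and one agent of a class $R^{*}$ likes. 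If Reds like only Reds, $R^{*}$ can never strictly improve and is not in any blocking pair; symmetrically for Greens. This already closes the subcase where neither self-liking class likes any other class. For the remaining subcases (a self-liking class also likes one or both of the other classes), I would push the same boundary reasoning: order the three blocks around the cycle so that a self-liking-and-also-liking class is adjacent to the class it likes, so its boundary agent is in fact at maximum utility too; then only Blues and possibly one boundary Green remain, and Blues being a single class cannot block among themselves, while a lone boundary Green wanting to swap with a Blue would require the Blue to want the Green's seat, which can be precluded by choosing which block is adjacent to which.

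\textbf{Main obstacle.} The bookkeeping hazard is the cyclic adjacency of the blocks: with three blocks on a cycle, each block touches both of the other two, so one cannot simultaneously make every boundary agent of both self-liking classes adjacent to its favorite — there are three block-interfaces and only limited freedom in ordering. The crux of the proof is therefore to verify that whichever interfaces are ``bad'' (a self-liking agent adjacent to a class it does not like), the potential partner on the other side of that interface does \emph{not} reciprocate the swap desire. This is where I expect the real work to go: a small but careful subcase split on exactly which of Reds/Greens additionally like Blues (or each other), checking in each branch that the block ordering can be chosen so that every would-be blocking pair fails on at least one side. Since at worst this is a constant-size enumeration over the possible preference patterns of two self-liking classes toward the third class and each other, it is routine once organized, and the companion Lemmas~\ref{lem:3_class_sta:1_class_likes_itself+other} and \ref{lem:3_class_sta:1_class_only_narciss} presumably absorb the cases with fewer self-liking classes.
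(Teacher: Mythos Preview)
Your starting point coincides with the paper's: seat the three classes in three contiguous arcs and argue that only the boundary agents of the two self-liking classes can be in a blocking pair. That part is fine and matches the paper's proof.

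The gap is in how you plan to close the residual boundary cases. You propose to ``order the three blocks around the cycle so that a self-liking-and-also-liking class is adjacent to the class it likes'' and, more generally, to ``choose which block is adjacent to which.'' On a cycle with exactly three contiguous blocks there is no such freedom: up to rotation and reflection there is a single cyclic order, and every block already touches both others. So reordering cannot rescue a bad interface. Concretely, let the two self-liking classes be Reds and Blues, with Greens the third class, $g\ge 2$, and suppose Reds like Greens, Reds dislike Blues, and Greens dislike themselves but like Blues. In the three-block arrangement the Red $R_1$ at the Red--Blue interface has utility $1$; swapping $R_1$ with the Green $G_1$ at the Red--Green interface raises $R_1$'s utility to $2$ (Red and Green neighbours) and raises $G_1$'s utility as well (it trades a Green neighbour for a Blue one). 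Hence $(R_1,G_1)$ is a genuine blocking pair, and no relabelling of which block is ``first'' avoids it.

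The paper handles exactly this residual case not by reordering but by \emph{breaking} one block: it peels a single Green off and inserts it between the Red and Blue blocks, so both extremal Reds now have a Green neighbour (and hence maximum utility, since Reds like Greens), while both extremal Blues are in the previously-handled ``$B_2$'' situation (Blue--Green boundary), which was already shown not to yield a blocking pair. Your plan does not anticipate needing a non-block arrangement, and without that step the argument does not close.
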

\begin{proof}
Suppose \textcolor{Red}{Reds} like \textcolor{Red}{Reds} and \textcolor{blue}{Blues} like \textcolor{blue}{Blues}. Then, consider the arrangement where each class sits among themselves. Let $\textcolor{Red}{R_1}$ and $\textcolor{blue}{B_1}$ be the neighboring \textcolor{Red}{Red} and \textcolor{blue}{Blue} respectively, $\textcolor{Red}{R_2}$ (resp. $\textcolor{blue}{B_2}$) the \textcolor{Red}{Red} (resp. Blue) with a \textcolor{LimeGreen}{Green} neighbor (we can potentially have $\textcolor{Red}{R_1} =\textcolor{Red}{R_2}$). Note that all \textcolor{Red}{Reds} but $\textcolor{Red}{R_1}$ and $\textcolor{Red}{R_2}$ have maximum utility, $\textcolor{Red}{R_1}$ can only improve its utility by switching with a \textcolor{LimeGreen}{Green}, whereas $\textcolor{Red}{R_2}$ cannot improve by switching with a \textcolor{LimeGreen}{Green}. 
By symmetry, this also applies to $\textcolor{blue}{B_1}$ and $\textcolor{blue}{B_2}$, hence a blocking pair must include $\textcolor{Red}{R_1}$ (or $\textcolor{blue}{B_1}$) and a \textcolor{LimeGreen}{Green}. This constitutes a blocking pair only if $g > 1$, \textcolor{Red}{Reds} like \textcolor{LimeGreen}{Greens} but \textcolor{LimeGreen}{Greens} dislike themselves. If this is the case, separate the \textcolor{Red}{Reds} and \textcolor{blue}{Blues} by seating one \textcolor{LimeGreen}{Green} between them: all \textcolor{Red}{Reds} and inner \textcolor{blue}{Blues} now have maximum utility, and both extremal \textcolor{blue}{Blues} are now in $\textcolor{blue}{B_2}$'s previous case. Hence the arrangement is stable.
\end{proof}

\begin{lemma}\label{lem:3_class_sta:1_class_likes_itself+other}
    Suppose precisely one class likes its own kind, and that they like at least another class. Then, there exists a stable arrangement. 
\end{lemma}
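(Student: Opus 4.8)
The plan is to handle the case where exactly one class, say \textcolor{blue}{Blues}, likes its own kind, and \textcolor{blue}{Blues} also like at least one of the other two classes. By assumption \textcolor{Red}{Reds} dislike \textcolor{Red}{Reds} and \textcolor{LimeGreen}{Greens} dislike \textcolor{LimeGreen}{Greens}, and recall that (as for cycles) we may normalize preferences to be binary, and that every blocking pair consists of two agents of different colors. Following the philosophy of the companion lemmas (``seat the self-liking class consecutively''), the construction I would try first is: seat all \textcolor{blue}{Blues} in one consecutive block, and arrange the \textcolor{Red}{Reds} and \textcolor{LimeGreen}{Greens} around the rest of the cycle. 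Since \textcolor{blue}{Blues} like \textcolor{blue}{Blues}, every \textcolor{blue}{Blue} except the (at most) two at the ends of the block already has maximum utility $2$, hence cannot be in a blocking pair; and the two boundary \textcolor{blue}{Blues} can only hope to improve by acquiring a second \textcolor{blue}{Blue} (impossible without breaking the block) or by moving next to a liked class — but the agent they would displace is already in the \textcolor{Red}{Red}/\textcolor{LimeGreen}{Green} region, so the danger is limited and can be neutralized by choosing the ordering of \textcolor{Red}{Reds} and \textcolor{LimeGreen}{Greens} carefully.

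Concretely, I would split into subcases according to which other class(es) \textcolor{blue}{Blue} likes and how \textcolor{Red}{Red} and \textcolor{LimeGreen}{Green} feel about each other. If \textcolor{blue}{Blue} likes, say, \textcolor{LimeGreen}{Green} (WLOG), then place a \textcolor{LimeGreen}{Green} adjacent to each end of the \textcolor{blue}{Blue} block whenever possible; this gives the two boundary \textcolor{blue}{Blues} utility $1$ from a liked neighbor, and now a boundary \textcolor{blue}{Blue} can only strictly improve by reaching a position with two liked neighbors, i.e.\ two \textcolor{LimeGreen}{Greens} or a \textcolor{blue}{Blue} and a \textcolor{LimeGreen}{Green}. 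Any such position, however, is occupied by a \textcolor{LimeGreen}{Green} or a \textcolor{blue}{Blue}, and a \textcolor{LimeGreen}{Green} sitting between two \textcolor{LimeGreen}{Greens} — if such exists — dislikes both its neighbors (since \textcolor{LimeGreen}{Greens} dislike \textcolor{LimeGreen}{Greens}), so it has utility $0$ and would happily swap out; but then swapping a utility-$0$ \textcolor{LimeGreen}{Green} for a \textcolor{blue}{Blue} only helps the \textcolor{LimeGreen}{Green} if the \textcolor{blue}{Blue}'s old seat gave the \textcolor{LimeGreen}{Green} positive utility, which one checks it does not, because the \textcolor{blue}{Blue} block's ends are where the \textcolor{LimeGreen}{Greens} already sit. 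I would push this bookkeeping through, using the remaining \textcolor{Red}{Reds} (disliked or liked by the various classes) to pad out the rest of the cycle; when there are too few \textcolor{LimeGreen}{Greens} to flank both ends of the \textcolor{blue}{Blue} block (e.g.\ $g=1$), a small direct argument handles it, exactly as the boundary-count caveats ($g>1$ etc.) appear in Lemma~\ref{lem:3_class_sta:2_class_narciss}.

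The main obstacle I anticipate is not any single subcase but the combinatorial explosion of them: one must range over whether \textcolor{blue}{Blue} likes only \textcolor{LimeGreen}{Green}, only \textcolor{Red}{Red}, or both; over the four possibilities for the pair (\textcolor{Red}{Red}'s opinion of \textcolor{LimeGreen}{Green}, \textcolor{LimeGreen}{Green}'s opinion of \textcolor{Red}{Red}); and over the degenerate sizes $r=1$ or $g=1$. In each branch the same two ideas recur — give the boundary \textcolor{blue}{Blues} a liked neighbor so they cannot be tempted, and observe that any ``attractive'' seat elsewhere is held by an agent who would not consent to the swap because their own utility there is already $0$ or maximal — but verifying the ``would not consent'' half for every branch is where the real work lies. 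I would organize the write-up so that these two observations are stated once and then invoked, rather than re-derived per case; the resulting proof should be routine given Lemma~\ref{lem:3_class_sta:2_class_narciss}, only longer. I therefore delegate the full case analysis to the appendix and here merely assert the existence of the stable arrangement.
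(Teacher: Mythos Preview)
Your high-level strategy---seat the self-liking class in one block and flank it with the other class it likes---matches the paper's starting point. But the proposal has a real gap: you never specify how to arrange the two \emph{non}-self-liking classes on the remaining arc, and that is where the work actually lies. With \textcolor{LimeGreen}{Greens} flanking the \textcolor{blue}{Blue} block, every \textcolor{blue}{Blue} (boundary or interior) already has two liked neighbours and hence maximum utility, so your worry about a boundary \textcolor{blue}{Blue} swapping with a \textcolor{LimeGreen}{Green}-between-two-\textcolor{LimeGreen}{Greens} is moot; in particular, your claim that such a \textcolor{LimeGreen}{Green} ``does not'' gain positive utility at the \textcolor{blue}{Blue}'s old seat is simply wrong whenever \textcolor{LimeGreen}{Green} likes \textcolor{blue}{Blue}, though it does not matter since the \textcolor{blue}{Blue} would never consent anyway. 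The substantive question is whether a \textcolor{Red}{Red}--\textcolor{LimeGreen}{Green} pair can be blocking, and this you dismiss as ``pad out the rest of the cycle'' without proposing any arrangement. Your projected case explosion (over which classes \textcolor{blue}{Blue} likes, the four \textcolor{Red}{Red}/\textcolor{LimeGreen}{Green} mutual opinions, degenerate sizes) is never actually exercised because no construction for that region is given.

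The paper's proof is far more economical. Taking \textcolor{Red}{Red} as the self-liking class and \textcolor{blue}{Blue} as another class it likes, it seats all \textcolor{Red}{Reds} consecutively and then fills the remaining arc according to just three cases, governed by whether $g<b$ or $g\ge b$ and, only in the latter case, by whether \textcolor{LimeGreen}{Greens} like \textcolor{blue}{Blues}. Each case prescribes a specific alternation pattern for \textcolor{blue}{Blues} and \textcolor{LimeGreen}{Greens} (Blue-first alternation; Green-first alternation; or a two-and-two alternation), and the verification that no \textcolor{blue}{Blue}--\textcolor{LimeGreen}{Green} swap is profitable follows directly from the pattern. Notably, the paper does \emph{not} always flank the self-liking block with the liked class: when $g\ge b$ the flank is a \textcolor{LimeGreen}{Green}, so your ``always flank with the liked class'' heuristic is not what ends up working. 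Whether the self-liking class also likes the third class, and most of the \textcolor{Red}{Red}/\textcolor{LimeGreen}{Green} mutual opinions, turn out to be irrelevant to the construction, so the combinatorial explosion you anticipate never materialises.
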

\begin{proof}
    Suppose that \textcolor{Red}{Reds} like \textcolor{Red}{Reds} and \textcolor{blue}{Blues}, \textcolor{blue}{Blues} dislike \textcolor{blue}{Blues} and \textcolor{LimeGreen}{Greens} dislike \textcolor{LimeGreen}{Greens}. We start building a stable arrangement by seating all \textcolor{Red}{Reds} consecutively (on a path). Call the extremal \textcolor{Red}{Reds} the two (or unique if $r = 1$) \textcolor{Red}{Reds} at the endpoints of the path, and note that all \textcolor{Red}{Reds} but the extremal ones already have maximum utility. We then complete the arrangement starting from the neighbor of one of the extremal \textcolor{Red}{Reds}.
    \begin{itemize}
        \item If $g < b$, start with a \textcolor{blue}{Blue} and alternate a \textcolor{LimeGreen}{Green} and a \textcolor{blue}{Blue} until all \textcolor{LimeGreen}{Greens} are seated, giving the potential remaining seats to \textcolor{blue}{Blues}: \textcolor{Red}{Reds} have maximum utility and no \textcolor{blue}{Blue} would accept a swap with a \textcolor{LimeGreen}{Green} (not even a neighbor), hence the arrangement is stable.

        \item If $g \geq b$ and \textcolor{LimeGreen}{Greens} like \textcolor{blue}{Blues}, start with a \textcolor{LimeGreen}{Green} and alternate a \textcolor{blue}{Blue} and a \textcolor{LimeGreen}{Green} until all \textcolor{blue}{Blues} are depleted, giving the potential remaining seats to \textcolor{LimeGreen}{Greens}. Note that an extremal \textcolor{Red}{Red} would never accept a swap with a \textcolor{LimeGreen}{Green} unless it sits between two \textcolor{blue}{Blues}, in which case the \textcolor{LimeGreen}{Green} already has maximum utility. Moreover, a \textcolor{LimeGreen}{Green} would never accept a swap with a \textcolor{blue}{Blue}, even if it is a neighbor. Finally, since an exchange with a \textcolor{blue}{Blue} also cannot improve an extremal \textcolor{Red}{Red} utility, the arrangement is stable.

         \item If $g \geq b$ and \textcolor{LimeGreen}{Greens} dislike \textcolor{blue}{Blues}, start with a \textcolor{LimeGreen}{Green} and then alternate two \textcolor{blue}{Blues} and two \textcolor{LimeGreen}{Greens} until all \textcolor{blue}{Blues} are depleted, give the remaining seats to \textcolor{LimeGreen}{Greens}. In case there is a unique \textcolor{blue}{Blue}, seat him between two \textcolor{LimeGreen}{Greens}. Note that no swap with a \textcolor{LimeGreen}{Green} or a \textcolor{blue}{Blue} would give an extremal \textcolor{Red}{Red} two neighbors \textcolor{Red}{Red} or \textcolor{blue}{Blue}, hence would never be profitable. Furthermore, a swap with a \textcolor{blue}{Blue} would always give zero utility to a \textcolor{LimeGreen}{Green}, and would never be agreed upon. Hence the arrangement is stable.

    \end{itemize}
\end{proof}

\begin{lemma}\label{lem:3_class_sta:1_class_only_narciss}
Suppose precisely one class likes its own, and that they dislike both other classes. Then, there exists a stable arrangement.
\end{lemma}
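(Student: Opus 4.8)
Since cycles are regular, we may assume the preferences are binary. Let the three classes be Reds, Greens and Blues, with $r,g,b\ge 1$ members; by hypothesis Reds approve only of Reds (disliking Greens and Blues), while no Green approves a Green and no Blue approves a Blue. The plan is: seat all Reds consecutively, observe that this alone prevents any Red from being in a blocking pair, and then choose the layout of Greens and Blues on the remaining arc so that no Green--Blue blocking pair arises either.

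First I would establish: \emph{in any arrangement where the Reds form one contiguous block, no blocking pair contains a Red.} An interior Red has two Red neighbours, hence utility $2$, which is maximal, so it never envies anyone. A boundary Red has utility exactly $1$ (one Red neighbour plus one disliked non-Red neighbour): moving it to any seat outside the Red block gives it at most one Red neighbour, so it never envies a non-Red; and although it would gain utility $2$ by moving to an interior seat of the block, the interior Red sitting there would drop to utility $1$ upon moving to the boundary, so does not reciprocate. Hence no Red is in a blocking pair, whatever we do with the rest of the arc. It remains to fill the non-Red arc --- whose two end seats (call them \emph{special}) are each adjacent to a Red --- with Greens and Blues so as to avoid Green--Blue blocking pairs.

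I would then split into cases by the mutual approvals between Greens and Blues. \emph{(i)} If one of the two classes, say Greens, approves of nobody it could ever be adjacent to (i.e.\ Greens dislike Blues and dislike Reds), then Greens have constant utility $0$, never envy, and \emph{any} layout works; same if Blues are such a class. \emph{(ii)} If Greens and Blues approve of each other, assume w.l.o.g.\ $b\ge g$: if $b>g$, place the arc as a block of the $b-g$ surplus Blues followed by an alternating tail $G,B,G,B,\dots,G,B$, so that every Green is flanked by two Blues and thus has utility $2$; since Greens never envy, and any Green--Blue blocking pair would need a Green to envy, we are done. If $b=g$, take the perfect alternation $G,B,\dots,G,B$ and note the only seats strictly better for a Green are the interior odd ones (two Blue neighbours), all occupied by Greens --- so no Green envies a non-Green, killing Green--Blue blocking pairs. \emph{(iii)} If Greens approve of Blues but Blues dislike Greens (or the symmetric situation): since case (i) does not apply, Blues must approve of Reds, so a Blue's utility is just its number of Red neighbours, i.e.\ $0$ or $1$; seat Blues on the special seats (or the unique Blue on one of them if $b=1$), so that a Blue can never acquire a Red neighbour by moving onto a Green's seat and hence never envies a Green. \emph{(iv)} If Greens and Blues dislike each other while case (i) does not apply, then both approve of Reds; put one Green on one special seat and one Blue on the other, everything else arbitrary: every non-special Green/Blue then has utility $0$, and although it may envy a special-seat agent, that agent would fall to utility $0$ by moving inward and does not reciprocate.

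In every case we exhibit an arrangement with no blocking pair containing a Red and no Green--Blue blocking pair, hence a stable one, completing the proof. The delicate part is the boundary bookkeeping: pinning down the ``no Red in a blocking pair'' claim, and, in case (ii) with $b=g$ and in the $b=1$ subcase of (iii) (and the small cases $g=1$, $r=1$), checking that the agents on the special seats --- which need not be at maximal utility --- still cannot complete a blocking pair.
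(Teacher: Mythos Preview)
Your proposal is correct, but it takes a different route from the paper's proof. The paper uses a \emph{single} arrangement for all sub-cases: seat all Reds consecutively, then alternate Blues and Greens starting from the more numerous class (say Blues). The entire argument then reduces to one observation --- no Blue would agree to swap with a Green that sits between two Blues (the swap would leave the Blue with at least one Blue neighbour, hence never a strict gain), even when they are adjacent or when the Blue has a Red neighbour. For $b>g$ this already covers every Green; for $b=g$ one applies the symmetric observation to Blues between two Greens and checks the single remaining pair (the Blue and the Green each adjacent to a Red). The paper does not spell out the ``no Red in a blocking pair'' step you prove first, but it is implicit and identical to yours.

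Your approach instead splits on the four possible mutual Green/Blue approval patterns and tailors a different layout of the non-Red arc to each. This buys you explicitness --- every case is visibly handled and the boundary bookkeeping is transparent --- at the cost of length: four constructions and four verifications rather than one. Conversely, the paper's single alternating construction is more economical and exposes the one structural fact doing all the work (a swap into a seat flanked by your own class is never profitable), but its terseness leaves some checks to the reader. Both arguments rest on the same initial reduction (block the Reds, then rule out Green--Blue blocking pairs), so the divergence is purely in how the second step is organised.
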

\begin{proof}
    Suppose only \textcolor{Red}{Reds} like their own kind, but they also dislike both \textcolor{blue}{Blues} and \textcolor{LimeGreen}{Greens}. Seat all \textcolor{Red}{Reds} together and alternate \textcolor{blue}{Blues} and \textcolor{LimeGreen}{Greens} starting from the most numerous class, say \textcolor{blue}{Blue}. Note that no \textcolor{blue}{Blue} would agree to exchange with a \textcolor{LimeGreen}{Green} seating between two \textcolor{blue}{Blues}, even if they are neighbors and the \textcolor{blue}{Blue} has a \textcolor{Red}{Red} neighbor. Furthermore, in the case where $g =  b$, the same reasoning can be applied to all \textcolor{blue}{Blues} sitting between two \textcolor{LimeGreen}{Greens}, and the arrangement is stable because the \textcolor{blue}{Blue} and \textcolor{LimeGreen}{Green} with a \textcolor{Red}{Red} neighbor would not exchange seats. Hence the arrangement is stable.
\end{proof}

\begin{lemma}\label{lem:3_class_sta:no_narciss_with_paria}
    Suppose no class likes itself, and one class is disliked by the two others. Then, there exists a stable arrangement.
\end{lemma}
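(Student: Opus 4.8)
The plan is to first reduce the problem to analyzing a single path, and then run a short case analysis. Denote by $P$ the class disliked by the other two, and by $A$ and $B$ the remaining two classes; normalize the preferences to be binary, which is harmless on cycles. Since agents within a class are interchangeable, every blocking pair involves two distinct classes; and since nobody likes $P$ and no class likes itself, the only preference relations not forced to be ``dislike'' are $A \to B$, $B \to A$, $P \to A$ and $P \to B$.

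The main structural idea is to seat all $P$-agents consecutively in one block, so that the complement is a path $Q$ consisting of the $A$- and $B$-agents, whose two ends are adjacent on the cycle to the two extreme $P$-agents. The first thing to prove is that an \emph{interior} $P$-agent (one flanked by two $P$-agents) can never belong to a blocking pair: it has utility $0$, and any agent moving onto its seat inherits two $P$-neighbors, hence utility $0$ as well, so that agent never strictly gains; thus only the at most two boundary $P$-agents, together with the agents on $Q$, need to be controlled.

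It then remains to choose the internal order of $Q$, and here I would split on $(A \to B,\ B \to A)$. If both are ``dislike'', then $A$ and $B$ each dislike everyone, so their agents are permanently at utility $0$ and never envy anyone, and \emph{any} order of $Q$ is stable. If exactly one holds, say $B \to A$ (the other case symmetric), then $A$ dislikes everyone and is inert, so only $B$--$P$ blocking pairs must be avoided: order $Q$ as an alternation of $A$'s and $B$'s starting with the more numerous of the two and parking its surplus next to the $P$-block; every $B$ flanked by two $A$'s then reaches the maximum utility $2$, and a short check shows the remaining $B$'s can neither reach utility $2$ (no two $A$'s are adjacent in the alternating part) nor gain by moving onto a $P$-seat. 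If both $A \to B$ and $B \to A$ hold, again order $Q$ as an alternation with the surplus of the majority class parked next to the $P$-block; every agent sandwiched by two agents of the opposite class is then at utility $2$, the few agents sitting at utility $1$ (next to the $P$-block or to the surplus clump) cannot reach utility $2$, and the surplus agents sitting at utility $0$ cannot improve either, since a $P$-seat gives them no liked neighbor and their own class is disliked.

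The remaining, and I expect most delicate, part of the argument is to dispose of blocking pairs involving a boundary $P$-agent: such an agent may genuinely envy a seat flanked by two agents it likes (if $P$ likes $A$ and/or $B$), but in each of the above arrangements every such seat is held by an $A$- or $B$-agent already at its maximum utility $2$, who would fall to utility at most $1$ on a $P$-seat, so the envy is never mutual. The bulk of the work is this bookkeeping — checking, across the sub-cases and the small-population boundary situations where $Q$ is very short, that the coveted high-utility seats coincide exactly with the seats of already-satisfied agents — but no individual step is conceptually difficult.
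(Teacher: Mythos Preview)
Your approach—seat the disliked class $P$ in one contiguous block and alternate the other two classes on the remaining arc—is exactly the arrangement the paper uses. The paper's argument is more uniform, though: it does not split on $(A\to B,\,B\to A)$ but argues directly that no agent would accept swapping into a seat flanked by two agents of its own colour (``no Green would accept to exchange seats with a Blue sitting between two Greens''), which disposes of all $A$--$B$ blocking pairs in one stroke regardless of who likes whom, leaving only the handful of agents adjacent to the $P$-block to be checked by hand.

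One point in your sketch needs correcting. You claim that every seat a boundary $P$ covets is held by an $A$- or $B$-agent already at utility $2$. This is false: in your sub-case ``$B\to A$ only'' with $|B|>|A|$ and $P$ liking both $A$ and $B$, a surplus $B$ sits between two $B$'s at utility $0$, yet a boundary $P$ would reach utility $2$ there. The swap is still not blocking, but for a different reason: the surplus $B$ gains nothing by moving to the boundary $P$'s seat, whose neighbours are a $P$ and an endpoint of $Q$, both disliked by $B$. The same phenomenon recurs in the ``both like'' sub-case with surplus agents. So the bookkeeping you flag as remaining should verify that the occupant of a coveted seat does not \emph{envy} the boundary $P$'s seat, not that it is already saturated; this follows because a boundary $P$'s two neighbours are a $P$ and an endpoint of $Q$, and the endpoint is always of the majority class, so the incoming agent finds at most one liked neighbour there.
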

\begin{proof}
Suppose no class likes \textcolor{Red}{Reds}, even themselves.
Suppose without loss of generality that $ b \geq g$, start by seating all \textcolor{Red}{Reds} together, then alternate between a \textcolor{blue}{Blue} and a \textcolor{LimeGreen}{Green} starting from \textcolor{blue}{Blue} until everyone is seated. No \textcolor{LimeGreen}{Green} would accept to exchange seats with a \textcolor{blue}{Blue} sitting between two \textcolor{LimeGreen}{Greens}, even if they are neighbors. Furthermore, no such \textcolor{blue}{Blue} would benefit from switching with a \textcolor{Red}{Red}. By symmetry, it also holds for \textcolor{LimeGreen}{Greens}, and a blocking pair would have to contain one of the \textcolor{blue}{Blue}/\textcolor{LimeGreen}{Green} agents with a \textcolor{Red}{Red} neighbor: those would neither exchange seats between them nor with any \textcolor{Red}{Red}, hence the arrangement is stable.
\end{proof}

\begin{lemma}\label{lem:3_class_sta:no_narciss_1_like_all}
    Suppose no class likes itself, but one class likes the two others. Then, there exists a stable arrangement.
\end{lemma}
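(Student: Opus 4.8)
The plan is to cut the profile space down to a short list and then construct a stable arrangement for each surviving profile by hand. As in the proof of Theorem~\ref{thm:3_class_2_valued_cycle_stable} I would assume preferences are binary and call the classes \textcolor{Red}{Reds}, \textcolor{LimeGreen}{Greens} and \textcolor{blue}{Blues}, with sizes $r, g, b \geq 1$, taking \textcolor{Red}{Reds} to be the class that likes both of the others. Since no class likes itself, \textcolor{Red}{Red} dislikes \textcolor{Red}{Red}; so if moreover both \textcolor{LimeGreen}{Greens} and \textcolor{blue}{Blues} dislike \textcolor{Red}{Reds}, then \textcolor{Red}{Red} is disliked by every class and we are done by Lemma~\ref{lem:3_class_sta:no_narciss_with_paria}. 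Because \textcolor{LimeGreen}{Greens} and \textcolor{blue}{Blues} are each liked by \textcolor{Red}{Reds}, neither of them can be disliked by every class, so this is the only way a universally disliked class can occur. Hence I may assume no class is universally disliked, which forces at least one of \textcolor{LimeGreen}{Greens}, \textcolor{blue}{Blues} to like \textcolor{Red}{Reds}; after possibly renaming, say \textcolor{LimeGreen}{Greens} do.

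At this point the profile is determined except for three bits: whether \textcolor{blue}{Blues} like \textcolor{Red}{Reds}, whether \textcolor{LimeGreen}{Greens} like \textcolor{blue}{Blues}, and whether \textcolor{blue}{Blues} like \textcolor{LimeGreen}{Greens}. When these bits make the whole preference matrix symmetric --- which happens precisely when \textcolor{blue}{Blues} like \textcolor{Red}{Reds} and \textcolor{LimeGreen}{Greens}, \textcolor{blue}{Blues} agree on whether they like each other --- a stable arrangement exists by the preliminary observation that symmetric preferences are always stable on any graph. For the remaining, asymmetric profiles I would additionally note that whenever both \textcolor{Red}{Reds} and \textcolor{LimeGreen}{Greens} like the two other classes their names may be swapped, which collapses the list to a handful of profiles; I would enumerate these explicitly and treat each one.

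For every surviving profile I would follow one template: single out a class $C$ together with a seating guaranteeing every member of $C$ its maximum attainable utility --- so no member of $C$ ever envies anyone --- and then check directly that the other two classes contain no blocking pair between them. The choice of $C$ and of the seating are dictated by the profile and by the sizes $r, g, b$. If $C$ dislikes only its own kind, ``maximum utility'' just means having no same-class neighbor, i.e. seating $C$ as an independent set of the cycle, which is possible whenever $|C| \leq \lfloor n/2 \rfloor$; and of any two classes whose sizes sum to at most $n$ at least one qualifies, so I can always apply this to one of \textcolor{Red}{Reds}, \textcolor{LimeGreen}{Greens}. For instance, once \textcolor{Red}{Reds} form an independent set every \textcolor{Red}{Red} has utility $2$, and then the only possible blocking pair is a \textcolor{LimeGreen}{Green}--\textcolor{blue}{Blue} pair, which is ruled out as soon as I also arrange for every \textcolor{blue}{Blue} (or every \textcolor{LimeGreen}{Green}) to be at maximum utility by a suitable placement of the third class. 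A class that likes nobody has constant utility $0$ and so can never envy, so in the profiles where \textcolor{blue}{Blues} like nobody I would simply park the \textcolor{blue}{Blues} anywhere and reduce to avoiding \textcolor{Red}{Red}--\textcolor{LimeGreen}{Green} blocking pairs, which a short size split handles (e.g. seat \textcolor{Red}{Reds} as an independent set when $r \leq g$, and otherwise seat all \textcolor{blue}{Blues} together and flank each \textcolor{LimeGreen}{Green} by two \textcolor{Red}{Reds}).

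The hard part will be the few residual asymmetric profiles in which no class likes nobody and the sizes $r, g, b$ are balanced enough that the candidate class cannot be made all-maximal simultaneously with the extra structure needed to eliminate the last cross-class blocking pairs; there I would have to pick a specific arrangement and verify stability by inspection. I would also take care to confirm that the enumeration of profiles above is exhaustive and that the name-swapping reductions are legitimate, since the whole argument rests on them.
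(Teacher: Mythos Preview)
Your plan is coherent and would very likely succeed if carried through, but it is considerably more laborious than what the paper does, and the difference is instructive. After the same reduction via Lemma~\ref{lem:3_class_sta:no_narciss_with_paria} (so that, in your naming, \textcolor{LimeGreen}{Greens} like \textcolor{Red}{Reds}), the paper does \emph{not} branch on the three remaining preference bits at all. Instead it branches only on the relative sizes of $r$ and the size of the class known to like \textcolor{Red}{Reds} (three cases: $>$, $<$, $=$), writes down a single concrete arrangement in each case built by alternating \textcolor{Red}{Reds} with that class first and then absorbing the third class, and verifies stability by an argument that never uses the values of those three bits. The point is that once \textcolor{Red}{Reds} like everyone and one other class likes \textcolor{Red}{Reds}, an alternation between these two already pins enough agents at maximum utility that the residual checks go through uniformly; the unknown preferences of the third class, and between the two non-\textcolor{Red}{Red} classes, turn out to be irrelevant. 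Your enumeration of the eight bit-patterns, the symmetry reductions, and the per-profile ``make one class maximal'' template would all work, but they duplicate effort that the size-based split avoids in one stroke. If you want to shorten your write-up substantially, try replacing the bit enumeration by the three-way split on $r$ versus $g$ (your \textcolor{LimeGreen}{Green} plays the role of the paper's \textcolor{blue}{Blue}) and check that your stability arguments never actually need the bits you were planning to case on.
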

\begin{proof}
Suppose \textcolor{Red}{Reds} dislike their own kind but like the two other classes.
Using Lemma \ref{lem:3_class_sta:no_narciss_with_paria}, we may assume that at least one class, say \textcolor{blue}{Blues}, like \textcolor{Red}{Reds}.

\begin{itemize}
    \item If $r > b$, start by alternating \textcolor{Red}{Reds} and \textcolor{blue}{Blues} starting with \textcolor{Red}{Reds} until all \textcolor{blue}{Blues} are depleted, then alternate \textcolor{Red}{Reds} and \textcolor{LimeGreen}{Greens} starting with \textcolor{Red}{Reds} until one of them is depleted; seat the remaining agents altogether. Note that all \textcolor{blue}{Blues} have maximum utility, and if two \textcolor{LimeGreen}{Greens} are neighbors, then all \textcolor{Red}{Reds} also have maximum utility. Otherwise, all \textcolor{LimeGreen}{Greens} are seated between two \textcolor{Red}{Reds}, and no \textcolor{Red}{Red} can improve by swapping with a \textcolor{LimeGreen}{Green}. Hence the arrangement is stable.
    
    \item If $r < b$, start by alternating \textcolor{Red}{Reds} and \textcolor{blue}{Blues} starting with \textcolor{blue}{Blues} until all \textcolor{Red}{Reds} are depleted, then alternate \textcolor{blue}{Blues} and \textcolor{LimeGreen}{Greens} starting with \textcolor{blue}{Blues} until one of them is depleted; seat the remaining agents altogether. Note that all \textcolor{Red}{Reds} have maximum utility. Suppose no \textcolor{LimeGreen}{Green} has a \textcolor{LimeGreen}{Green} neighbor, i.e.~every \textcolor{LimeGreen}{Green} seats between two \textcolor{blue}{Blues}: no \textcolor{blue}{Blue}, not even a neighbor, would profit from switching with a \textcolor{LimeGreen}{Green}, hence the arrangement is stable. Otherwise, no two \textcolor{blue}{Blues} are neighbors: \textcolor{blue}{Blues} sitting between two \textcolor{Red}{Reds} have maximum utility, and neither \textcolor{blue}{Blues} with a \textcolor{Red}{Red} and a \textcolor{LimeGreen}{Green} neighbor, nor \textcolor{blue}{Blues} with two \textcolor{LimeGreen}{Green} neighbors can improve by exchanging with a \textcolor{LimeGreen}{Green}, even if it is a neighbor. Hence the arrangement is still stable.
    
    \item If $r = b$, alternate between \textcolor{Red}{Reds} and \textcolor{blue}{Blues} until depletion, and then seat all \textcolor{LimeGreen}{Greens} together. All \textcolor{Red}{Reds} have maximum utility, the same holds for all \textcolor{blue}{Blue} except the one with a \textcolor{LimeGreen}{Green} neighbor. This extremal \textcolor{blue}{Blue} cannot improve by switching with a \textcolor{LimeGreen}{Green}, and the arrangement is stable.
\end{itemize}
\end{proof}

We are now left with a single case to handle, without loss of generality the case where \textcolor{Red}{Reds} only like \textcolor{LimeGreen}{Greens}, \textcolor{LimeGreen}{Greens} only like \textcolor{blue}{Blues}, and \textcolor{blue}{Blues} only like \textcolor{Red}{Reds}, and no class likes their own kind. Also without loss of generality, assume $n \geq 4.$ For brevity in what follows, and consistently with the notation of Section \ref{sec:k_classes_algo}, denote \textcolor{Red}{Reds} by 1, \textcolor{LimeGreen}{Greens} by 2, and \textcolor{blue}{Blues} by 3, and write $n_1, n_2$ and $n_3$ for $r, g,$ and $b$, respectively, where $n_1 + n_2 + n_3 = n.$ An arrangement $\pi$ can be seen as a sequence $s = (s_i)_{i \in [n]}$ consisting of $n_1$ ones, $n_2$ twos and $n_3$ threes. Since $s$ represents an arrangement on a cycle, one can assume that $s_0 = s_n$ and $s_{n + 1} = s_1$, etc. We will again lift agent preferences to class preferences. As in Section \ref{sec:binary_pref}, let $W(s) = W(\pi)$ be the utilitarian social welfare induced by $s$/$\pi$. Similarly in spirit to Section \ref{sec:k_classes_algo}, it will be useful to think in terms of long/short-range \emph{blocking triples}. Namely, let $a, b, c, d, e, f$ be agent classes, then we say that triples $(a, b, c)$ and $(d, e, f)$ are \emph{long-range blocking} if $p_b(a) + p_b(c) < p_b(d) + p_b(f)$ and $p_e(d) + p_e(f) < p_e(a) + p_e(c)$; intuitively, neither $b$ wants to swap with $e$, nor vice-versa. Furthermore, for $a, b, c, d$ agent classes, we say that triples $(a, b, c)$ and $(b, c, d)$ are \emph{short-range blocking} if $p_b(a) < p_b(d)$ and $p_c(d) < p_c(a)$; intuitively, if $a, b, c, d$ are consecutive in the arrangement, then neither $b$ wants to swap with $c$, nor vice-versa. An arrangement $s$ is \emph{stable} if the following two conditions hold:
\begin{itemize}
    \item For all $1 \leq i < j \leq n$ such that $1 < j - i < n - 1,$ the triplets $(s_{i - 1}, s_i, s_{i + 1})$ and $(s_{j - 1}, s_j, s_{j + 1})$ are not long-range blocking;
    \item For all $1 \leq i \leq n$ the triplets $(s_{i - 1}, s_i, s_{i + 1})$ and $(s_{i}, s_{i + 1}, s_{i + 2})$ are not short-range blocking.
\end{itemize}

Before proving our main assertion, the following observation will be instrumental:

\begin{proposition}\label{prop:adjacent_seats} Given an arrangement $s,$ the utilitarian social welfare $W(s)$ is equal to $n$ minus the number of pairs of agents from the same class seated in adjacent seats. Formally, $W(s) = n - |\{i \in [n] \mid s_i = s_{i + 1}\}|.$
\end{proposition}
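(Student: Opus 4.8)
The plan is to rewrite the utilitarian social welfare as a sum over the edges of the cycle and then evaluate each edge's contribution using the specific cyclic preference structure of the remaining case, namely that Reds (1) like only Greens (2), Greens like only Blues (3), Blues like only Reds, and no class likes its own kind (so all preferences are binary).

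First I would regroup $W(s) = \sum_{i \in \mathcal{A}} U_i(\pi)$ by the edges of the cycle. Each agent's utility is the sum of its preferences towards its two neighbors, so the edge between positions $i$ and $i+1$ (indices mod $n$, with the convention $s_0 = s_n$, $s_{n+1}=s_1$; there are exactly $n$ such edges) contributes $p_{s_i}(s_{i+1}) + p_{s_{i+1}}(s_i)$ to $W(s)$: the agent at position $i$ earns $p_{s_i}(s_{i+1})$ from its neighbor at $i+1$, and symmetrically for the agent at $i+1$. Hence $W(s) = \sum_{i=1}^{n}\bigl(p_{s_i}(s_{i+1}) + p_{s_{i+1}}(s_i)\bigr)$.

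Next I would compute the contribution of a single edge. If $s_i = s_{i+1}$, both terms vanish since no class likes its own kind, so the edge contributes $0$. If $s_i \neq s_{i+1}$, the two endpoints lie in two distinct classes among $\{1,2,3\}$; checking the three unordered pairs $\{1,2\}, \{2,3\}, \{1,3\}$ and using that $1$ likes only $2$, $2$ likes only $3$, and $3$ likes only $1$, in each case exactly one of the two directed preferences equals $1$ and the other equals $0$, so the edge contributes exactly $1$. Summing, each edge contributes $1$ precisely when its endpoints belong to different classes and $0$ otherwise, which yields $W(s) = n - |\{i \in [n] \mid s_i = s_{i+1}\}|$.

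I do not anticipate a genuine obstacle: the argument is essentially the observation that these ``tournament-like'' binary preferences assign total weight exactly $1$ to every mixed adjacency. The only points requiring care are the bookkeeping on the cycle (correctly counting $n$ wrap-around edges) and verifying that the three-case check over unordered pairs of distinct classes is exhaustive; both are routine.
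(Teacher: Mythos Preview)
Your proposal is correct and follows essentially the same approach as the paper: the paper's proof simply observes that any two adjacent agents of different classes contribute exactly one to the total welfare (since by construction precisely one of them likes the other) while same-class neighbors contribute zero, and you make this explicit via the edge decomposition and the case check over the three unordered pairs. Your version is more detailed but the underlying idea is identical.
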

\begin{proof} When two agents belonging to different classes are seated next to each other, by construction of the preferences, exactly one of them will like the other. Moreover, agents do not like agents from the same class as themselves.
\end{proof}

We now prove our assertion, in fact in a stronger form, as follows:

\begin{lemma}\label{lem:3_class_sta:max_util_arg}
Assume 0 only likes 1, 1 only likes 2, and 2 only likes 0, then any arrangement $s$ maximizing the utilitarian social welfare $W(s)$ is stable.
\end{lemma}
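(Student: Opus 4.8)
The plan is to show that if $s$ maximizes $W(s)$ but is not stable, then there is a blocking pair whose swap strictly increases $W$, contradicting maximality. By Proposition~\ref{prop:adjacent_seats}, maximizing $W(s)$ is the same as minimizing the number of \emph{monochromatic adjacencies}; i.e., positions $i$ with $s_i = s_{i+1}$. So I would work throughout with the combinatorial quantity $M(s) = |\{i \in [n] \mid s_i = s_{i+1}\}|$ and argue that a blocking pair can always be used to decrease $M$. The cyclic preference structure ``$0$ likes $1$, $1$ likes $2$, $2$ likes $0$'' is very rigid: for an agent of class $a$, its utility under $s$ equals the number of its two neighbors that are of the class $a$ ``likes'' (namely $a+1 \bmod 3$), a value in $\{0,1,2\}$. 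Since the preferences are binary with this cyclic pattern, I expect the key structural fact to be that in a $W$-maximal arrangement every class-$a$ agent has utility at least $1$ unless forced otherwise (because an isolated block of same-class agents could be broken up), and more importantly that blocking pairs are severely constrained.

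The main step is a case analysis on a blocking pair $(i,j)$ in a maximal arrangement $s$. Swapping $i$ and $j$ affects only the adjacencies at the (at most four) edges incident to positions $i$ and $j$ — or fewer, and with shared structure, if $i,j$ are close or adjacent on the cycle, which needs separate handling exactly as in the short-range/long-range split used earlier in the paper. For $i$ of class $a$ to \emph{strictly} gain, it must go from a neighborhood with $< k$ liked neighbors to one with more; with binary cyclic preferences the cleanest way to see the contradiction is: $i$ strictly gains means $i$ acquires at least one more neighbor of class $a{+}1$ than it had, and symmetrically $j$ (of some class $b$) acquires at least one more neighbor of class $b{+}1$. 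I would then track how the monochromatic adjacency count changes. The intuition is that whenever an agent strictly increases its count of ``liked'' neighbors by a swap, globally the number of same-class adjacencies strictly decreases, because a liked neighbor is by definition of a different class, so ``creating liked-neighbor relationships'' and ``destroying monochromatic edges'' are tightly coupled. The delicate part is that the swap could simultaneously create a same-class adjacency elsewhere (e.g., $i$'s old seat neighbors, or $j$ landing next to another class-$a$ agent), so I need to show the net effect is still a strict decrease, using that $i$ and $j$ both strictly gain. Concretely, I would write $W(\pi') - W(\pi)$ as a sum of four local terms (the changes at $i$'s two new edges, $j$'s two new edges — reinterpreting when seats coincide), show $i$'s contribution and $j$'s contribution are each $\geq$ their utility gains minus the ``collateral'' same-class edges they might create at their \emph{own} endpoints, and argue the collateral is absorbed.

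A cleaner route, which I would try first, is to avoid global accounting and instead argue directly: suppose $(i,j)$ is a blocking pair with $i$ of class $a$, $j$ of class $b$. Consider $\pi'$ obtained by the swap. I claim $W(\pi') > W(\pi)$. Indeed, $U_i(\pi') > U_i(\pi)$ and $U_j(\pi') > U_j(\pi)$, and for \emph{every other} agent the change in utility at the two affected edges can be bounded below; in fact because on each affected edge exactly one endpoint ``likes'' the other (when the two are of different classes) or neither does (same class), the total welfare $W$ equals twice... no — $W(s) = n - M(s)$ directly. So it suffices to show the swap does not increase $M$. Here is the crux: removing an agent from a position and inserting another can only change $M$ at the $\leq 4$ incident edges; since both $i$ and $j$ strictly gain liked neighbors, at each of $i$'s new edges where it gained, that edge is now bichromatic, and I must check its \emph{other} endpoint didn't previously supply a liked-neighbor to something that is now lost — but this is automatic since the other endpoints keep their non-$i$/non-$j$ neighbor. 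I expect the genuine obstacle to be the bookkeeping when $|i - j| \le 2$ on the cycle (so $i$ and $j$ share a neighbor, or are adjacent), where the swap touches fewer edges and the ``both strictly gain'' hypothesis must be leveraged more carefully — this is precisely why the paper introduced separate short-range and long-range blocking-triple notions, and I would mirror that split, handling the adjacent case, the distance-two case, and the generic far case separately, in each showing $M$ strictly drops. Once $M(\pi') < M(\pi)$, i.e. $W(\pi') > W(\pi)$, this contradicts the maximality of $s$, so $s$ is stable.
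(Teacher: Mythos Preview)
Your approach is essentially the paper's, but it contains a genuine gap: the claim that swapping a blocking pair always strictly decreases $M$ is \emph{false}. In the long-range case with $(s_{i-1},s_i,s_{i+1}) = (3,1,3)$ and $(s_{j-1},s_j,s_{j+1}) = (1,2,2)$ (using labels $1,2,3$ with $1\to 2\to 3\to 1$), the agent of class~$1$ has utility~$0$ and gains by moving between $1$ and $2$, and the agent of class~$2$ has utility~$0$ and gains by moving between two $3$'s; yet the swap turns one monochromatic edge ($2,2$) into another ($1,1$), so $M$ is unchanged and $W$ does not increase. Your bookkeeping cannot close this case, and contrary to your expectation the obstruction arises in the \emph{long-range} case, not the adjacent/distance-two cases.

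The paper handles exactly this configuration with a different move: instead of swapping, it relocates a consecutive pair of agents. Writing the arrangement locally as $\ldots, x, (3,1), 3, \ldots, 1, 2, 2, \ldots$, it lifts the bracketed $(3,1)$ and reinserts it between the two $2$'s, yielding $\ldots, x, 3, \ldots, 1, 2, (3,1), 2, \ldots$. This breaks the $2,2$ adjacency while leaving the situation at $x$ unchanged (still next to a~$3$), so $M$ drops by one and $W$ strictly increases, contradicting maximality. All other long-range subcases (five of them) and the short-range case do behave as you predicted, with the plain swap increasing $W$; but you need this extra idea for the one exceptional subcase.
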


\begin{proof} Let $s$ be any arrangement maximizing egalitarian social welfare. Assume for a contradiction that $s$ is not stable. This can be either because of short-range or long-range blocking triples. We tackle the two cases separately:

\begin{enumerate}
    \item Assume $1 \leq i \leq n$ is such that $(s_{i - 1}, s_i, s_{i + 1})$ and $(s_{i}, s_{i + 1}, s_{i + 2})$ are short-range blocking. In other words, write $(s_{i - 1}, s_i, s_{i + 1}, s_{i + 2}) = (a, b, c, d),$ then $p_b(a) < p_b(d)$ and $p_c(d) < p_c(a).$ By symmetry, without loss of generality, assume $b = 1$ and $c = 2.$ From this it follows that $a = 3$ and $d = 2.$ Hence, before $b$ and $c$ swap seats we have $(3, 1, 2, 2)$ while afterward, we have $(3, 2, 1, 2),$ meaning that the swap would increase social welfare (recall Proposition \ref{prop:adjacent_seats}), contradicting the maximality of $W(s).$
    \item Assume $1 \leq i < j \leq n$ are such that $1 < j - i < n - 1$ and $(s_{i - 1}, s_i, s_{i + 1})$ and $(s_{j - 1}, s_j, s_{j + 1})$ are long-range blocking. In other words, write $(s_{i - 1}, s_i, s_{i + 1}) = (a, b, c)$ and $(s_{j - 1}, s_j, s_{j + 1}) = (d, e, f),$ then $p_b(a) + p_b(c) < p_b(d) + p_b(f)$ and $p_e(d) + p_e(f) < p_e(a) + p_e(c).$ By symmetry, without loss of generality, assume $b = 1$ and $e = 2.$ The conditions on the preferences require that there are more threes among $a, c$ than $d, f$ but more twos among $d, f$ than $a, c.$ This implies that at least one of $a$ and $c$ is 3, say $c = 3,$ and at least one of $d$ and $f$ is 2, say $f = 2$. So far, we know that $(a, b, c, d, e, f) = (a, 1, 3, d, 2, 2).$ Looking at the conditions one more time, we get that $(a, 1, 3)$ and $(d, 2, 2)$ are long-range blocking if and only if $d = 3$ implies $a = 3$ and $a = 2$ implies $d = 2.$ We can now list all the possibilities for $(a, b, c, d, e, f)$ and write what would happen after $b$ and $e$ would swap places:  
    \begin{enumerate}
        \item $(1, 1, 3, 1, 2, 2) \rightarrow (1, 2, 3, 1, 1, 2),$ the latter has higher welfare;
        \item $(1, 1, 3, 2, 2, 2) \rightarrow (1, 2, 3, 2, 1, 2),$ the latter has higher welfare;
        \item $(2, 1, 3, 2, 2, 2) \rightarrow (2, 2, 3, 2, 1, 2),$ the latter has higher welfare;
        \item $(3, 1, 3, 1, 2, 2) \rightarrow (3, 2, 3, 1, 1, 2),$ the latter has \textbf{the same} welfare;
        \item $(3, 1, 3, 2, 2, 2) \rightarrow (3, 2, 3, 2, 1, 2),$ the latter has higher welfare;
        \item $(3, 1, 3, 3, 2, 2) \rightarrow (3, 2, 3, 3, 1, 2),$ the latter has higher welfare.
    \end{enumerate}
    Hence, in all cases except $(a, b, c, d, e, f) = (3, 1, 3, 1, 2, 2)$, we get a contradiction of the maximality of $W(s).$ We now need to handle this last case, which indeed requires further insight. For this case, we will now show that we can construct an arrangement $s'$ such that $W(s') > W(s),$ again contradicting the maximality of $W(s).$ The technique we will use, however, is different from the usual swapping of the seats of $b$ and $e.$ Namely, let $x$ be the agent sitting to the left of $3, 1, 3.$ For illustration, if we were to write $s$ in full, it would be $\ldots, x, (3, 1), 3, \ldots, 1, 2, 2, \ldots.$ We now take the two bracketed agents $(3, 1)$ and reseat them between agents $2, 2.$ The seating arrangement $s'$ obtained as a result looks as follows:~$\ldots, x, 3, \ldots, 1, 2, (3, 1), 2, \ldots.$ Comparing the social welfares $W(s)$ and $W(s')$ notice that the moved pair $(3, 1)$ has ``broken'' the two adjacent twos $2, 2.$ Moreover, irrespective of the value of $x,$ in $s$ agent $x$ is sitting next to a 3, and the same is true in $s'.$ As a result, by Proposition \ref{prop:adjacent_seats}, we get that $W(s') - W(s) = 1,$ contradicting the maximality of $W(s).$
\end{enumerate}

\end{proof}

\subsubsection{Proving Theorem \ref{thm:3_class_2_val_unst_path}.} We now complete the proof of Theorem \ref{thm:3_class_2_val_unst_path}, restated below for convenience.

\repeattheorem{thm:3_class_2_val_unst_path}
\begin{proof}[continued]
Suppose Alice sits next to Bob. On one hand, if Alice is seated at an endpoint of the path, then both Alice and Bob would improve by exchanging seats: after the exchange, Alice gets utility zero instead of minus one, whereas Bob gets utility one instead of zero. On the other hand, if Alice is not extremal, there is a Friend at one endpoint of the path not a neighbor of Bob: swapping seats with him would increase Alice's utility from zero to one, while the Friend would improve from one (respectively minus one if its only neighbor is Alice) to two (respectively zero).

Now suppose Bob is sitting away from Alice:~exchanging with a neighbor of Alice would increase its utility from at most minus one to at least zero, whereas the Friend would improve from at most zero to at least one. Hence no arrangement on a path is stable. 
\end{proof} 

\subsubsection{Proving Lemma \ref{lem:4_non_cv}.} We now prove Lemma \ref{lem:4_non_cv}, restated below for convenience.

\repeatlemma{lem:4_non_cv}
\begin{proof}
Let $\pi^*$ be the arrangement $(a,~b,~c,~d).$ Since each agent approves of exactly one other agent, they can never get utility strictly greater than one. Since only $d$ does not achieve utility one in $\pi^*$, it is stable. Moreover, consider arrangements $\pi_1 = (a,~d,~b,~c)$ and $\pi_2 = (c,~d,~b,~ a).$ The only blocking pair in $\pi_1$ is $(a,~c)$ as both $b$ and $d$ have utility one. Exchanging them leads to arrangement $\pi_2.$ Similarly, in $\pi_2$, the only blocking pair is $(b,~d).$ Exchanging them gives $\pi_1$ back, up to reversal of the seat numbers. Hence, the swap dynamics cannot converge. See Figure \ref{fig:4_non_cv_exple} for an illustration.
\end{proof}

\section{Stability Analysis of Profile $\mathcal{P}_5$} \label{sec:p5_stability_analysis}

\begin{figure}[t]
    \centering
    \includegraphics[width=0.85\linewidth]{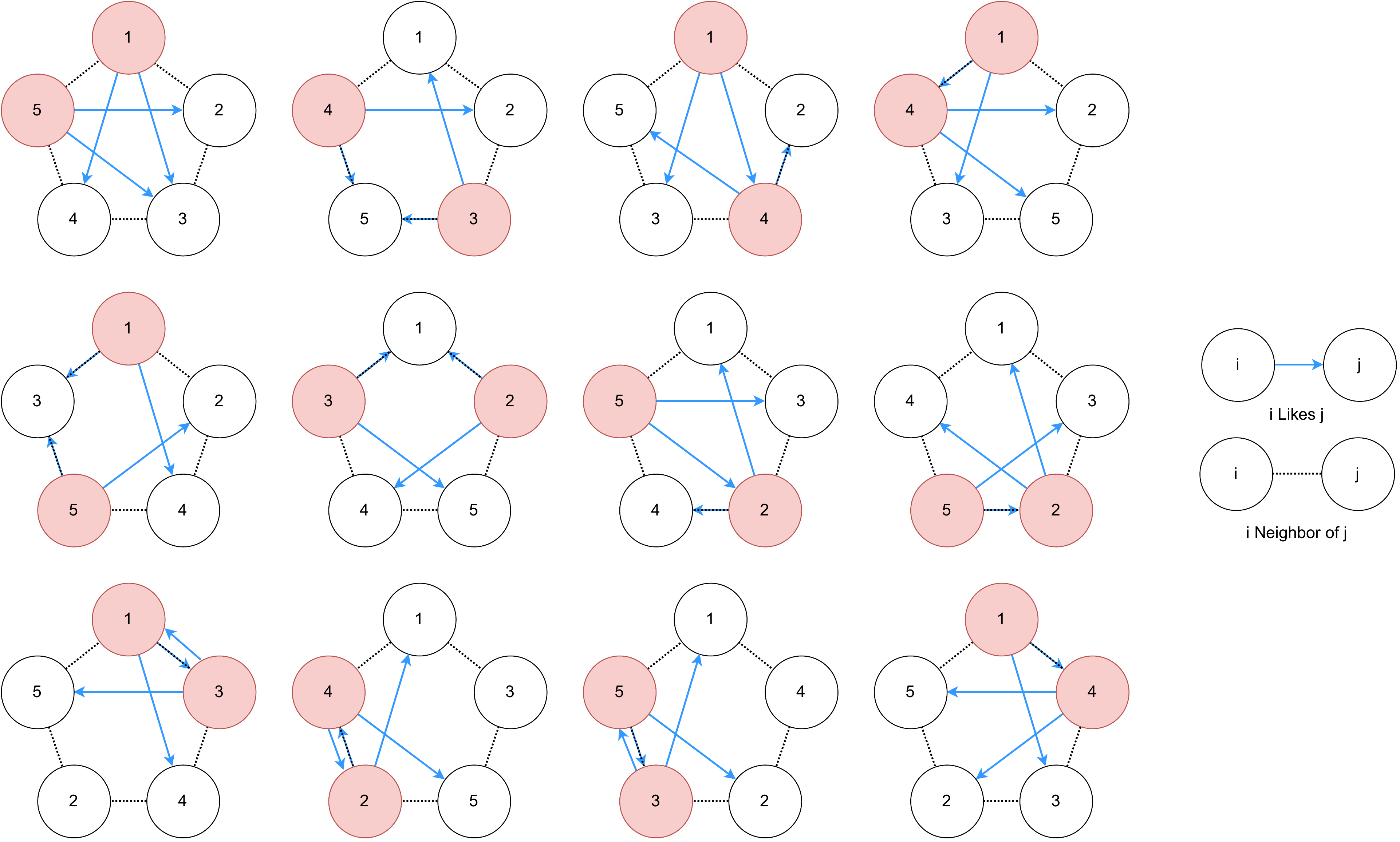}
    \caption{Pictorial proof of the instability of $\mathcal{P}_5.$}
    \label{fig:proof_5_unst}
\end{figure}

In this appendix, we briefly analyze the stability of profile $\mathcal{P}_5$ from Figure \ref{fig:5_unst}, which is the only preference profile with $n = 5$ that is unstable on a cycle. Despite our best efforts, there does not seem to be an easy explanation for the emergence of instability in this case.

\begin{lemma}\label{lem:5_unst} Binary preferences $\mathcal{P}_5$  induce no stable arrangements on a cycle.
\end{lemma}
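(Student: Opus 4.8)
The plan is to exploit that a cycle on five seats admits only $4!/2 = 12$ arrangements up to rotation and reflection, so one can verify directly that each of them admits a blocking pair; the real work is organising this verification so that it reads as a genuine proof rather than an opaque enumeration. First I would read off the ``likes'' digraph of $\mathcal{P}_5$ from Figure~\ref{fig:5_unst} and determine its automorphism group, i.e.\ the agent permutations preserving that digraph. Combined with the dihedral symmetry of the cycle, these automorphisms identify arrangements of equal stability status, so it suffices to treat one representative from each orbit, cutting the $12$ arrangements down to a short list.

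The key simplification is the following uniform observation about $5$-cycles: when two agents swap seats, each of them keeps exactly one of its two neighbours — the swap partner itself if the two were adjacent, or their unique common neighbour if they sat at distance two — and exchanges its other neighbour with that of the partner. Since $\mathcal{P}_5$ is binary, an agent's utility is just its number of approved neighbours, so $(i,j)$ is a blocking pair exactly when each of $i$ and $j$ approves the neighbour it would gain and disapproves the one it would lose. With this in hand, for each representative arrangement I would exhibit a concrete such pair. I would then package all $12$ cases as a directed ``swap graph'' on arrangements, drawing out of each arrangement the edge that resolves the chosen blocking pair, and observe that this graph has no sink; this is exactly the picture in Figure~\ref{fig:proof_5_unst}, and a sink-free graph on the finite set of \emph{all} arrangements means every arrangement has a blocking pair — hence $\mathcal{P}_5$ is unstable on a cycle (and, incidentally, its swap dynamics cannot converge).

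I expect the main obstacle to be conceptual rather than technical: as the authors stress, the instance appears to be an isolated small-$n$ artifact with no structural explanation, so there is no shortcut around a finite case check, and the challenge is to pick the orbit representatives and the blocking pairs so that the resulting picture stays compact yet unambiguously covers every arrangement, not merely the ones drawn. A secondary pitfall is that $\mathcal{P}_5$ contains several agents that look alike up to symmetry but whose single approved target differs, so one must be careful when reading off which neighbour an agent actually approves in a given arrangement; getting this bookkeeping wrong is the easiest way to ``lose'' a blocking pair and mistakenly declare some arrangement stable.
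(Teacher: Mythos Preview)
Your proposal is correct and takes essentially the same approach as the paper: the paper's proof is precisely the pictorial exhaustion you describe, displaying all twelve cyclic arrangements with a blocking pair highlighted in each (Figure~\ref{fig:proof_5_unst}), with no further argument. Your suggested automorphism reduction and sink-free swap-graph framing are organisational refinements of that same finite case check rather than a genuinely different route.
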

\begin{proof}
Figure \ref{fig:proof_5_unst} provides a pictorial proof of the instability of $\mathcal{P}_5.$ Each of the twelve possible cyclic arrangements is displayed and a blocking pair is shown in red. The outgoing edges of the nodes participating in the blocking pairs are shown in blue for effortless verification.
\end{proof}

\section{Evidence for Exponential Convergence} \label{sec:exponential_convergence}

In this appendix, we revisit the potential function used to prove convergence of the swap dynamics in the proof of Theorem \ref{thm:close-swaps-two-valued-is-stable}. In essence, we show that each increase in potential might be very small, leading to an exponential number of increases. Note, however, that we do not exhibit preference instances where such exponential behavior can be observed.\footnote{We did not manage to construct such preferences.} Moreover, we also do not exclude the possibility that the function can be shown to increase enough with each swap on average when the dynamics are carried out. What we show, instead, is that there is an exponentially long chain of potential values such that between any two consecutive values in the chain there are preferences for which the transition could occur by exchanging a blocking pair.

\begin{lemma}
The potential argument alone cannot guarantee polynomial time convergence of swap dynamics.
\end{lemma}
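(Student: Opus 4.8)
The plan is to make precise the sense in which the potential $\Phi(\pi)=(W(\pi),S(\pi))$ increases ``too slowly'': although every blocking-pair swap at distance at most two strictly increases $\Phi$ in the lexicographic order, the values that $\Phi$ can take, together with the local change a single swap induces, already admit an increasing chain of exponential length. Concretely, I would prove: there is a chain of potential values $\Phi_0<\Phi_1<\cdots<\Phi_N$ with $N=2^{\Omega(n)}$ such that for every $t$ there exist two-valued non-negative preferences $\mathcal{P}_t$ on a path of $n$ seats and an arrangement $\pi_t$ with $\Phi(\pi_t)=\Phi_t$ having a blocking pair at distance at most two whose swap yields an arrangement $\pi_t'$ with $\Phi(\pi_t')=\Phi_{t+1}$. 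Different steps may use different preference profiles. This rules out any convergence bound obtained purely from ``$\Phi$ strictly increases and there are few values'', which is the only tool used in the proof of Theorem~\ref{thm:close-swaps-two-valued-is-stable}.

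First I would revisit the two lemmas preceding Theorem~\ref{thm:close-swaps-two-valued-is-stable} and extract the shape of the $\Phi$-increments. A swap that strictly increases $W$ does so by at least $1$, and $W$ takes only $O(n)$ values, so such steps occur at most $O(n)$ times in any chain; the interesting case is a $W$-preserving swap. For those, the proofs show that $S(\pi)$ and $S(\pi')$ agree outside a window of at most four consecutive coordinates, and that at the left-most coordinate where they differ the value jumps from $1$ to $2$. Hence, restricted to a fixed value of $W$, a chain of $\Phi$-values is exactly a lexicographically strictly increasing sequence of strings in $\{0,1,2,3\}^{n-1}$ whose consecutive terms differ only inside a length-four window, with a $1\to 2$ jump at the left-most changed coordinate.

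Next I would build such a long chain by an odometer argument and then realize each step. Reading a string left to right, the last three coordinates can be cycled through a constant number of local $1\to 2$ transitions; when they are exhausted, one performs a single $1\to 2$ jump at the next coordinate to the left, which, staying within a window of four, may reset the three coordinates to its right, and then one repeats. Each of the $n-1$ coordinates is raised only a constant number of times, but between two consecutive raises of a coordinate the suffix to its right runs through its whole sub-chain, so the total length is $N=4^{\Omega(n)}$; lexicographic monotonicity holds because every step's left-most change is an increase. I would fix the strings of the chain to use only symbols arising from a single ``skeleton'' arrangement of a common welfare $W_0$. For realizability of one step $\Phi_t\to\Phi_{t+1}$ I would use a constant-size gadget: place agents $x,a,b,y$ on the four seats of the active window with the directed preference $4$-cycle $a\to y\to b\to x\to a$ and no other relevant approvals, so that $(a,b)$ is an adjacent blocking pair whose swap realizes exactly the forced $1\to 2$ jump together with the window bookkeeping, just as in the proof of the distance-one lemma; the remaining $n-4$ seats are filled following the skeleton so that the welfare equals $W_0$ and the rest of $S(\pi_t)$ matches the prescribed prefix and suffix. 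These preferences are binary, hence two-valued non-negative, as required.

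The main obstacle is making the abstract chain and the concrete gadgets meet: one must choose the chain so that every string in it and every one of its window transitions is simultaneously (i)~lexicographically increasing with a $1\to 2$ jump at the leading changed coordinate, (ii)~realizable as the $S$-vector of an arrangement of a fixed welfare $W_0$, and (iii)~produced by an honest distance-$\leq 2$ blocking swap in a concrete binary instance. Reconciling the odometer carries with the rigid ``$1\to 2$ at the leading coordinate'' pattern, and keeping the welfare constant while the active window sweeps across the path, is the delicate part; the rest is bookkeeping. As the appendix already cautions, this argument only shows that the potential method cannot by itself yield a polynomial bound; it does not exhibit a single instance whose swap dynamics actually take exponentially long, which we leave open.
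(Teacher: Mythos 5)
Your overall plan coincides with the paper's: restrict attention to welfare-preserving swaps (since $W$ can increase only $O(n)$ times), observe that such a swap rewrites a window of at most four consecutive entries of $S(\pi)$ with a forced $1\to 2$ jump at the leftmost changed position, and then exhibit an exponentially long, lexicographically increasing chain of such rewrites, each step realizable by an honest blocking swap under step-specific binary preferences. The realizability gadget you give (the directed $4$-cycle $a\to y\to b\to x\to a$) is exactly the pattern extracted in the paper's distance-one lemma, and your closing caveat (no single instance with exponentially long dynamics is produced) matches the paper's.

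However, the heart of the proof is missing. The admissible local moves are rigid: writing the relevant entries as $S_b\in\{0,1\}$, a welfare-preserving swap acts as $0x1\to 1\overline{x}0$ (distance one) or $0xy1\to 1\overline{y}\,\overline{x}0$ (distance two), so raising a coordinate requires consuming a $1$ sitting at most three positions to its right, which is simultaneously lowered. Your odometer sketch does not respect this: you cannot freely ``cycle the last three coordinates''; a single carry can reset at most the three coordinates inside its own window, not an entire suffix; and your counting claim is internally inconsistent --- if every coordinate were raised only a constant number of times, the chain would have length $O(n)$, not $4^{\Omega(n)}$. Exponential length forces coordinates near the right end to be raised and lowered exponentially often, with lexicographic monotonicity coming from the fact that every lowering is accompanied by a raise strictly to its left. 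The paper resolves precisely this difficulty with an explicit recursive construction: a base gadget $f_8$ that moves a single $1$ across eight positions in six admissible moves, and a recursion $f_{3k-1}$ that moves a $1$ across $3k-1$ positions using two applications of $f_3$ plus two invocations of $f_{3(k-1)-1}$, yielding more than $2^{k-1}$ elementary moves and hence a $2^{\Omega(n)}$-long realizable chain. You explicitly defer this reconciliation as ``the delicate part,'' but it is the entire content of the lemma; without a concrete chain construction the argument is incomplete.
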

\begin{proof}
We first consider the auxiliary potential sequence $S(\pi)$ with values in $\{1,2\},$ and study the effect of exchanges at distances one and two that keep the social welfare constant. To simplify notation, we map $S(\pi) \in \{1,2\}^{n-1}$ to $S_b(\pi) \in \{0,1\}^{n-1}$ by subtracting one.

Performing an exchange at distance one while keeping the social welfare constant corresponds to the following modification of a subsequence of $S_b(\pi)$: $0x1 \to 1\overline{x}0$, where $x\in \{0,1\}$ (see Figure \ref{fig:exchange_dist_1_potential}). Indeed, an exchange at distance one modifies the utility of at most four people, hence modifies a subsequence of length at most three of the auxiliary potential, the latter being defined not on vertices but edges; we call this operation $f_3.$
$$f_3: 0x1 \to 1\overline{x}0$$ 

Similarly, we define the operation corresponding to exchanging at distance two that keeps the social welfare constant:~it maps the subsequence of $S_b(\pi)$ of length four $0xy1$ to $1\overline{y}\overline{x}0$. For the same reason as above, we call this operator $f_4.$
$$f_4:0xy1 \to 1\overline{y}\overline{x}0$$

In the following, we denote by ``Apply $f_i$ at position $j$'' the application of $f_i$ on the subsequence of indices $[j,j+i].$

Based on those two operators, we further define the operator $f_8$ mapping the sequence of length 8 $(0,0,0,0,0,0,0,1)$ to $(1,0,0,0,0,0,0,0)$. It consists of the following operations: 
\begin{enumerate}
    \item Apply $f_3$ at position $6$: $(0,0,0,0,0,0,0,1) \to (0,0,0,0,0,1,1,0);$
    \item Apply $f_4$ at position $3$: $(0,0,0,0,0,1,1,0) \to (0,0,1,1,1,0,1,0);$
    \item Apply $f_4$ at position $2$: $(0,0,1,1,1,0,1,0) \to (0,1,0,0,0,0,1,0);$
    \item Apply $f_4$ at position $4$: $(0,1,0,0,0,0,1,0) \to (0,1,0,1,1,1,0,0);$
    \item Apply $f_4$ at position $3$: $(0,1,0,1,1,1,0,0) \to (0,1,1,0,0,0,0,0);$
    \item Apply $f_3$ at position $1$: $(0,1,1,0,0,0,0,0) \to (1,0,0,0,0,0,0,0).$
\end{enumerate}

For $k\geq 3$, we then recursively define the operators $f_{3k-1}$ mapping the sequence of length $3k-1$ $(0,\dots,0,1)$ to $(1,0,\dots,0)$ through the following operations:
\begin{enumerate}
    \item Apply $f_3$ at position $3(k-1)$: $(0,\dots,0,1)\to (0,\dots,0,1,1,0);$
    \item Apply $f_{3(k-1)-1}$ at position $2$: $(0,\dots,0,1,1,0)\to (0,1,0\dots,0,1,0);$
    \item Apply $f_{3(k-1)-1}$ at position $3$: $(0,1,0\dots,0,1,0) \to (0,1,1,0,\dots,0);$
    \item Apply $f_3$ at position $1$: $(0,1,1,0,\dots,0)\to (1,0,\dots,0).$
\end{enumerate}

Note that this is well-defined since $f_8$ is defined separately.

Noting that operator $f_{3k-1}$ contains more than $2^{k-1}$ uses of $f_3$ and $f_4$ for all $k\geq 3$ concludes the proof.
\end{proof}

\begin{figure}[t]
    \centering
    \includegraphics[width=0.6\linewidth]{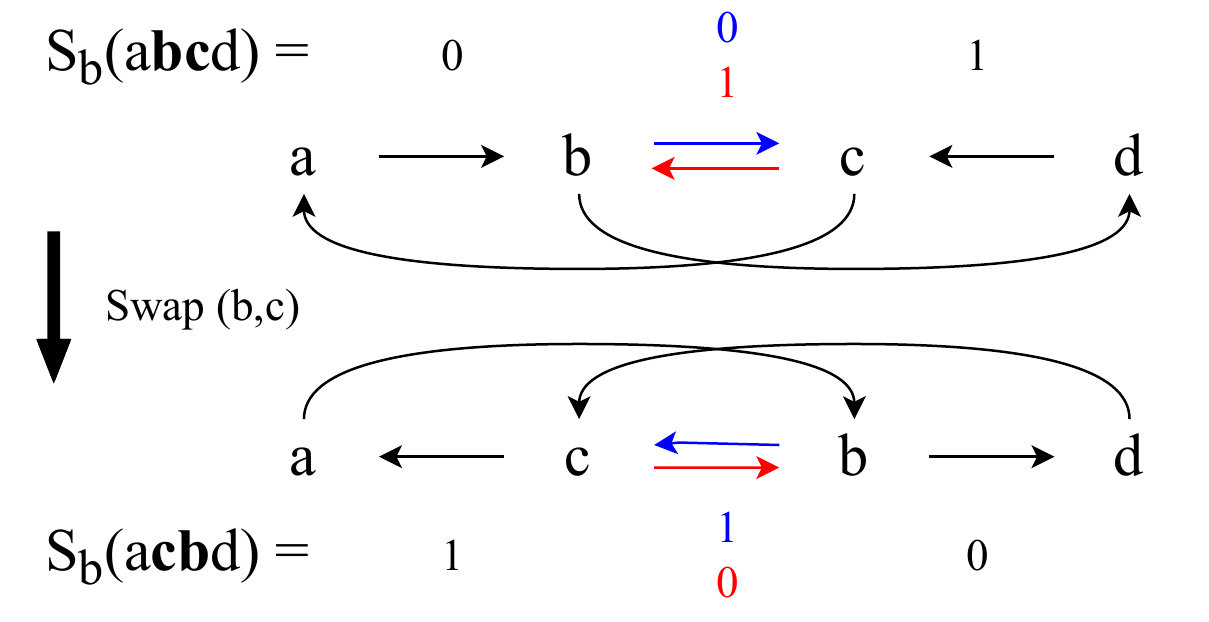}
    \caption{Change of $S_b(\pi)$ after an exchange at distance one preserving the social welfare.}
    \label{fig:exchange_dist_1_potential}
\end{figure}

\section{Non-Monotonicity of Stability}\label{sec:non_mono_sta}
In this section, we show that stability is non-monotonic, 
i.e., adding agents to a given instance can both introduce or destroy stability. For instance, consider preferences $\mathcal{P}_5$ in Figure \ref{fig:5_unst}. On a cycle, all binary preferences with either four or six agents possess a stable arrangement, implying that adding a fifth agent could destroy stability while adding a sixth agent would restore it. We now show that this phenomenon can occur for all values $n\geq 7$:

\begin{theorem}\label{thm:non_monotonic_sta}
    For all $n\geq 7$, adding an agent can both destroy stability or restore it.
\end{theorem}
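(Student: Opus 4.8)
The plan is to combine the explicit cycle-unstable instances already at our disposal with padding by dummy agents. Recall that we have the unstable binary profiles $\mathcal{P}_5$, $\mathcal{P}_7^{(1)}$, $\mathcal{P}_7^{(2)}$, and, more importantly, the family from Theorem~\ref{thm:bin_unst_cycle} giving four-class binary preferences that are unstable on a cycle for every $n \geq 7$. To prove non-monotonicity at a given $n \geq 7$, I would exhibit for each such $n$ a profile $\mathcal{Q}_n$ on $n$ agents that is \emph{stable} on the cycle but such that adding one suitably-chosen agent yields an $(n+1)$-agent profile that is \emph{unstable} on the cycle; and, separately (or simultaneously, on a different profile), a profile that is unstable on $n$ agents but becomes stable when a well-chosen $(n+1)$-st agent is added.

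For the ``adding an agent destroys stability'' direction, the natural construction is: start from the unstable $(n+1)$-agent profile $\mathcal{R}_{n+1}$ furnished by Theorem~\ref{thm:bin_unst_cycle} (which works for all $n+1 \geq 7$, i.e.\ all $n \geq 6$, in particular all $n \geq 7$), and delete one carefully chosen agent to obtain an $n$-agent profile that \emph{is} stable. The agent to delete should be one of the many members of class $D$ (there are $n-3 \geq 4$ of them), and after deletion one argues directly — the case analysis is short because with one fewer $D$-agent the obstruction in the proof of Theorem~\ref{thm:bin_unst_cycle} (which crucially used $|D| \geq 3$, or more precisely that $a$'s two $D$-neighbors still leave a $D$-agent with a $D$-neighbor) can be avoided by an explicit seating. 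Alternatively, and more cleanly, I would instead take $\mathcal{Q}_n$ to be an \emph{arbitrary} stable $n$-agent profile (e.g.\ all agents mutually indifferent, or a single class, which is stable by the warm-up remark) — wait, that does not become unstable. So the genuine route is the deletion route: $\mathcal{Q}_n := \mathcal{R}_{n+1}$ minus a $D$-agent, verified stable by hand, and adding back a $D$-agent recovers $\mathcal{R}_{n+1}$, which is unstable.

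For the ``adding an agent restores stability'' direction, the construction is: take the unstable profile $\mathcal{R}_n$ from Theorem~\ref{thm:bin_unst_cycle} on $n \geq 7$ agents, and add one agent $g$ who is liked by everyone and likes everyone (or, more carefully, is liked by exactly the agents whose frustration drives the instability, and sits so as to satisfy them). The added agent should be chosen so that it can be seated so that every previously-problematic agent now has a satisfied neighbor; concretely, making $g$ a universally-approved agent and seating $\mathcal{R}_n$'s unstable cyclic order with $g$ inserted at the ``worst'' gap tends to kill all blocking pairs, because the only blocking pairs in $\mathcal{R}_n$ involved agents sitting next to someone they disapprove of, and routing one of those adjacencies through $g$ removes the envy on that side. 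One then checks no new blocking pair is created involving $g$: since $g$ approves everyone, $g$ always has maximum utility $2$ (as long as $g$ is not at — but it is a cycle, so there are no endpoints), hence $g$ is never in a blocking pair. Thus $\mathcal{R}_n$ is unstable while $\mathcal{R}_n + g$ is stable.

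The main obstacle I anticipate is the bookkeeping in verifying that the specific modified profiles really are stable: in both directions we are claiming stability of a concrete $n$-agent cycle profile with several classes, and the cleanest proof is an explicit seating together with a per-agent check that it is in no blocking pair — essentially the same flavour of argument as in Theorems~\ref{thm:2class_cycle} and~\ref{thm:3_class_2_valued_cycle_stable}. A secondary subtlety is that the ``restore'' direction must be made to work for \emph{every} $n \geq 7$ uniformly, not just small cases; using the scalable family from Theorem~\ref{thm:bin_unst_cycle} (rather than the ad hoc $\mathcal{P}_5$) is what makes this possible, and the universally-liked extra agent argument is uniform in $n$, so this should go through. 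I would present the two directions as two short lemmas, each with an explicit profile, an explicit stable arrangement for the stable side, and a reference to (a slight adaptation of) the instability argument of Theorem~\ref{thm:bin_unst_cycle} for the unstable side.
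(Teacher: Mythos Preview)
Your plan for the ``destroy stability'' direction does not work. You propose to obtain the stable $n$-agent profile $\mathcal{Q}_n$ by deleting a member of class $D$ from $\mathcal{R}_{n+1}$. But $\mathcal{R}_{n+1}$ has exactly $(n+1)-4 = n-3$ agents in $D$; removing one of them leaves $a,b_1,b_2,c$ together with $n-4$ agents of class $D$, which is precisely the profile $\mathcal{R}_n$. Since $n \geq 7$, Theorem~\ref{thm:bin_unst_cycle} applies and $\mathcal{R}_n$ is \emph{unstable}, not stable. Your intuition that ``the obstruction crucially used $|D|\geq 3$'' is correct, but after the deletion you still have $|D| = n-4 \geq 3$, so the instability argument goes through unchanged. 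The paper repairs this by deleting a different agent: it removes $a$ (the unique agent who only likes $c$) from $\mathcal{R}_{n+1}$, and then the arrangement $b_1,c,b_2,d_1,\ldots,d_{n-3}$ is stable because every agent except $c$ attains maximum utility.

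For the ``restore stability'' direction your idea of adjoining a universally-approved, universally-approving agent $g$ is different from what the paper does and is at present only a heuristic: you say to insert $g$ ``at the worst gap'', but in $\mathcal{R}_n$ every cyclic arrangement has a blocking pair, and inserting a single vertex only severs one adjacency, so it is not clear a priori that one insertion point kills all blocking pairs simultaneously (nor do you check that no new blocking pair arises among the original agents, who now have an enlarged approval set). The paper instead adds a third copy $b_3$ of class $B$ and exhibits the concrete arrangement $b_1,c,b_2,a,b_3,d_1,\ldots,d_{n-4}$; here $b_2$ and all $d_i$ have utility $2$, while $b_1,b_3$ have utility $1$ but the only utility-$2$ seat for them is $b_2$'s, so no blocking pair exists. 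If you want to salvage the universal-agent route you would need to commit to an explicit seating and carry out the per-agent verification, as you yourself note in the final paragraph.
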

\begin{proof}
We first prove that for $n\geq 7$ adding an agent can break stability. Consider the unstable preferences of size $n+1$ in Theorem \ref{thm:bin_unst_cycle}:~we show that removing one agent creates a stable arrangement. Indeed, remove agent $a$ and consider sitting $c$ between $b_1$ and $b_2$:~everyone but $c$ has maximum utility (since $b_1$ and $b_2$ love only one person, they cannot get utilities higher than 1), hence $c$ cannot exchange seats with anyone and the arrangement is stable.

We now prove that for $n\geq 7$ adding an agent can create stability. Consider the unstable preferences of size $n$ in Theorem \ref{thm:bin_unst_cycle}, and add one extra agent $b_3 \in B;$ i.e., which only likes $a$ and $c$ and is only liked by the members of $D.$ Consider placing $b_1, ~c, ~b_2, ~a, ~b_3$ consecutively on the cycle in this order:~agent $b_2$ and all members of $D$ have utility 2, hence only $b_1$, $b_3$ or $c$ can be part of a blocking pair. However, $b_1$ and $b_3$ both have utility one and would therefore only exchange for a seat with utility 2:~there exists only one such seat, currently held by $b_2,$ who has maximum utility and would hence not agree to swap places. As a result, neither $b_1$ nor $b_3$ are part of a blocking pair, so the arrangement is stable.
\end{proof}

\section{Blockwise-Diagonal Preferences}\label{app:blockwise}
In this section, we show that, even when the preferences of the agents are, in a certain sense, highly decomposable, knowledge about the stability of the subparts is unlikely to help us to find stable arrangements for the instance as a whole. In essence, we show why non-monotonicity can make reasoning about stability rather challenging, even in reasonably simple cases. To make the previous statements precise, we introduce the concept of agent \emph{components}, as follows.

\begin{definition}
We say that a set of agents $\mathcal{C} \subseteq \mathcal{A}$ is isolated if for all $a \in \mathcal{C}$ and $b \in \mathcal{A}\setminus \mathcal{C},$ it holds that $p_{a}(b) = p_{a}(b) = 0.$ Set $\mathcal{C}$ is called a component if none of its proper subsets are isolated.
\end{definition}

Note also that components and classes are two different notions:~agents in the same component may not have the exact same preferences, but instead are limited to only caring about agents in their component. Assume that the set of agents $\mathcal{A}$ is partitioned into components $\mathcal{A} = \mathcal{C}_1 \cup \ldots \cup \mathcal{C}_k$ (this partition always exists and is unique).\footnote{This is because the components correspond to the connected components of the undirected graph with vertex set $\mathcal{A}$ and edges $(a, b)$ for any two distinct agents such that either $u_a(b) \neq 0$ or $u_b(a) \neq 0.$} The preference matrix can then be represented, after a potential reordering of the agents, as a blockwise-diagonal matrix. When the partition into components is non-trivial; i.e., $k > 1$; intuitively, finding an arrangement that is stable for such preferences should be easier than for general ones:~first, find a stable arrangement on a path for each component, and then join all those paths to obtain a stable arrangement on a cycle (or on a path). While this method is indeed guaranteed to produce a stable arrangement whenever each component admits a path stable arrangement (at least for non-negative preference values), we will actually show that there are many instances where a stable arrangement exists but can not be produced by this approach. Before showing this, we need a technical lemma for cycles, stated next.

\begin{lemma} \label{lem:0_util_stability} Let $\pi$ be an arrangement where each agent sits between two agents from different components.
If for any two distinct components $\mathcal{C}_i$ and $\mathcal{C}_j$ there is at most one pair of agents $(a, b) \in \mathcal{C}_i \times \mathcal{C}_j$ such that $a$ and $b$ are neighbors in $\pi$, then $\pi$ is stable on a cycle.
\end{lemma}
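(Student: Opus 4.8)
The plan is to show that no blocking pair $(a,b)$ can exist in $\pi$. First I would record the basic structural facts that follow from the hypotheses: since every agent sits between two agents from \emph{different} components, and an agent only cares about agents in its own component (both as a ``liker'' and as a ``likee''), every agent has at most one neighbor it has nonzero preference for, so $U_i(\pi)\in\{0\}\cup\{\text{values of }p_i\}$ is at most a single preference value; in fact $U_i(\pi)=p_i(\text{the unique neighbor, if any, from }i\text{'s component})$, which may still be zero. The ``at most one cross-pair'' hypothesis says that for any ordered pair of distinct components $(\mathcal{C}_i,\mathcal{C}_j)$, at most one edge of the cycle joins a $\mathcal{C}_i$-agent to a $\mathcal{C}_j$-agent.

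Next I would analyze what a swap of $a$ and $b$ does. Write $a\in\mathcal{C}_i$, $b\in\mathcal{C}_j$. If $i=j$, then $a$ and $b$ are in the same component; I would argue that a swap within a component seated among foreign agents cannot strictly help both — more carefully, after the swap $a$'s neighbors are $b$'s old neighbors, which are from components $\neq j=i$, hence $a$ gets utility $0$ from them; so for $a$ to strictly gain it must have had negative utility before, i.e.\ its old component-neighbor (if any) was disliked; symmetrically for $b$; but then I would derive a contradiction with the ``at most one cross-pair'' count or simply observe the swap is a relabeling of the same seat-occupancy pattern up to component-structure and cannot strictly help both. The main case is $i\neq j$. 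Here the key point is: before the swap, the edges incident to $a$'s seat join $\mathcal{C}_i$ to two other components, and likewise for $b$'s seat with $\mathcal{C}_j$; after the swap, $a$ sits in $b$'s old seat, so $a$'s new neighbors belong to the two components that were adjacent to $b$'s seat. For $a$ to strictly gain, one of those two components must be $\mathcal{C}_i$ and $a$ must like that neighbor. Symmetrically, for $b$ to strictly gain, one of the two components adjacent to $a$'s old seat must be $\mathcal{C}_j$ and $b$ must like it. But ``$\mathcal{C}_i$ is adjacent to $b$'s seat (occupied by a $\mathcal{C}_j$-agent)'' and ``$\mathcal{C}_j$ is adjacent to $a$'s seat (occupied by a $\mathcal{C}_i$-agent)'' each exhibit a cycle-edge between $\mathcal{C}_i$ and $\mathcal{C}_j$; together with the edge $\{a,b\}$ — wait, $a$ and $b$ need not be adjacent, so I must be careful — in any event these produce two distinct $\mathcal{C}_i$--$\mathcal{C}_j$ edges (the one at $a$'s old seat and the one at $b$'s old seat, and one must check they are genuinely distinct edges, using that a single edge cannot simultaneously be ``the foreign neighbor of $a$'s seat on the $\mathcal{C}_j$ side'' and ``the foreign neighbor of $b$'s seat on the $\mathcal{C}_i$ side'' unless $a,b$ are adjacent, a sub-case I would dispatch directly). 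This contradicts the hypothesis that there is at most one such cross-pair, completing the argument.

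I expect the main obstacle to be the bookkeeping in the $i\neq j$ case: precisely identifying which cycle-edges the gain conditions force into existence and verifying they are two \emph{distinct} $\mathcal{C}_i$--$\mathcal{C}_j$ edges, including carefully handling the degenerate sub-case where $a$ and $b$ are themselves neighbors in $\pi$ (so the edge $\{a,b\}$ is already the unique $\mathcal{C}_i$--$\mathcal{C}_j$ edge, and one must show neither can strictly gain by swapping along a single edge when seated among otherwise-foreign agents). A clean way to organize this is to phrase everything in terms of the ``component-type sequence'' $(t_1,\dots,t_n)$ around the cycle, where $t_\ell$ is the component of the agent in seat $\ell$, and note the hypothesis says no pair of types $\{i,j\}$ appears as adjacent entries more than once; then a swap of seats $\ell$ and $m$ changes the utilities only of the agents in seats $\ell$ and $m$ (the neighbors' utilities are unaffected since their component-neighbor status is determined by their own and their neighbors' components, which don't change for neighbors of $\ell,m$ other than possibly making a gain impossible), and the strict-gain conditions translate into the adjacency-repetition forbidden by the hypothesis.
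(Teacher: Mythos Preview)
Your approach is essentially the paper's, but you misread the first hypothesis. ``Each agent sits between two agents from different components'' is intended as: both neighbors of every agent are from components \emph{different from that agent's own}. Under this reading every agent has utility exactly $0$ in $\pi$, which is the paper's opening observation. Your weaker reading (the two neighbors merely differ from each other) actually makes the lemma false once intra-component preferences can be negative: on the $6$-cycle $a_1,a_2,b_1,b_2,c_1,c_2$ with components $\{a_1,a_2\},\{b_1,b_2\},\{c_1,c_2\}$, both of your hypotheses hold, yet if $p_{a_1}(a_2)<0$ and $p_{b_2}(b_1)<0$ then $(a_1,b_2)$ is a blocking pair (each moves to a seat with two foreign neighbors and utility $0$). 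You in fact slip into the intended reading later (``$b$'s old neighbors, which are from components $\neq j=i$''), so your sketch is internally inconsistent on this point.

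With the intended reading, your plan collapses to exactly the paper's proof. For $i=j$, agents $a$ and $b$ are automatically non-adjacent (neighbors of $a$ are not in $\mathcal{C}_i$), and $a$ at $b$'s seat still has only foreign neighbors, so new utility $=0=$ old utility and no gain is possible; all your hedging about negative old utility is unnecessary. For $i\neq j$ and $a,b$ non-adjacent, the two strict-gain conditions force a $\mathcal{C}_i$--$\mathcal{C}_j$ adjacency at $b$'s seat and another at $a$'s seat; these are distinct edges, contradicting the second hypothesis. For $a,b$ adjacent, the paper does precisely what you anticipate: $b$ gaining forces $a$'s \emph{other} neighbor to lie in $\mathcal{C}_j$, yielding a second $\mathcal{C}_i$--$\mathcal{C}_j$ pair besides $(a,b)$. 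Your concern about verifying distinctness of the two cross-edges, and about the adjacent sub-case, is exactly the right one, and it is resolved along the lines you outline.
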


\begin{proof} First, notice that in such an arrangement all agents get utility zero. Let $a \neq b$ be two agents. We want to show that they do not form a blocking pair. Assume $i, j \in [k]$ are such that $a \in \mathcal{C}_i$ and $b \in \mathcal{C}_j.$ If $i = j$, then by assumption $a$ would also have utility zero when seating in $b$'s seat, so $(a, b)$ is not a blocking pair. Now, assume $i\neq j.$ Let $a_{\ell}$ and $a_{r}$ denote the two neighbors of $a$ on $\pi$; we define $b_{\ell}$ and $b_{r}$ similarly. Suppose $a$ and $b$ wanted to switch place:~this means $a$ would have strictly positive utility at $b$'s seat, and $b$ would have strictly positive utility at $a$'s seat. Assuming $a$ and $b$ are not neighbors, this means that $\{a_\ell, a_r\} \cap \mathcal{C}_j \neq \varnothing$ and  $\{b_\ell, b_r\} \cap \mathcal{C}_i \neq \varnothing.$ Let $a'$ and $b'$ be such that $a' \in \{a_\ell, a_r\} \cap \mathcal{C}_j$ and $b' \in \{b_\ell, b_r\} \cap \mathcal{C}_i.$ Since $(a, a')$ and $(b', b)$ are both in $\mathcal{C}_i \times \mathcal{C}_j$ and are distinct pairs of adjacent agents in $\pi,$ this contradicts the hypothesis. Furthermore, if $a$ and $b$ are neighbors, then $b$ envying $a$ implies that $a$'s second neighbor is also in  $\mathcal{C}_j$, from which the pairs of agents formed by $a$ and its neighbors similarly contradict the hypothesis. Therefore, the arrangement is stable.
\end{proof}

Armed as such, we now show that any preference profile inducing no stable arrangements on a path can be used to construct preferences whose components all resemble this profile and yet the joint profile admits an arrangement that is stable on a cycle. In other words, even when a profile decomposes non-trivially into components, knowledge about the stability of the components does not necessarily help in resolving the stability of the instance as a whole.

\begin{theorem}\label{thm:euler_tour_components} Let $\mathcal{P}_{\mathit{path}} \in \{0,1\}^{\ell \times \ell}$ be a one-component preference profile such that no stable arrangement on a path exists. Then, there exists a preference profile $\mathcal{P}$ whose components all resemble $\mathcal{P}_{\mathit{path}}$ admitting a stable arrangement on a cycle.
\end{theorem}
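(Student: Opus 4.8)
The plan is to take many copies of the path-unstable profile $\mathcal{P}_{\mathit{path}}$ as the components of $\mathcal{P}$ and then arrange them around the cycle so that Lemma~\ref{lem:0_util_stability} applies. Concretely, I would let $\mathcal{P}$ consist of $m$ components $\mathcal{C}_1, \ldots, \mathcal{C}_m$, each an isomorphic copy of $\mathcal{P}_{\mathit{path}}$, where $m$ is chosen large enough (to be determined below). Since each $\mathcal{C}_i$ has $\ell$ agents, the cycle has $n = m\ell$ seats. The goal is to exhibit a cyclic arrangement in which (i) every agent sits between two agents of two \emph{different} components, and (ii) for any two distinct components $\mathcal{C}_i, \mathcal{C}_j$, at most one pair of adjacent agents straddles the boundary between them; then Lemma~\ref{lem:0_util_stability} immediately gives stability on the cycle, and by construction all components resemble $\mathcal{P}_{\mathit{path}}$, which proves the theorem.

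The key combinatorial step is realizing condition (ii) as an Eulerian-tour / edge-decomposition statement, which is presumably why the appendix section is titled ``Blockwise-Diagonal Preferences'' and the later theorem is called \texttt{thm:euler\_tour\_components}. I would build an auxiliary multigraph $H$ whose vertices are the components $\mathcal{C}_1, \ldots, \mathcal{C}_m$; I want to lay the agents around the cycle in ``runs,'' where within one run all agents come from a single component, and consecutive runs come from different components, with each run having length between $1$ and $\ell$ and the total over runs of a given component summing to $\ell$. Each junction between two consecutive runs is an edge of $H$; a valid cyclic sequence of runs corresponds to a closed walk in $H$ that uses each vertex $\mathcal{C}_i$ a prescribed number of times. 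To guarantee condition (ii) — at most one boundary edge per pair of components — I want this closed walk to traverse each edge of $H$ at most once, i.e., $H$ should be a simple graph admitting a closed walk (Eulerian-type circuit through all needed vertices). The cleanest choice is to take $H = K_m$, the complete graph on the $m$ components: if $m$ is odd, $K_m$ is Eulerian, so it has an Eulerian circuit, which visits each vertex $\frac{m-1}{2}$ times and uses every edge exactly once; splitting each component's $\ell$ agents into $\frac{m-1}{2}$ runs (each of size between $1$ and $\ell$, which is possible as soon as $\frac{m-1}{2} \le \ell$, e.g.\ sizes as equal as possible) and threading them along the Eulerian circuit yields the desired cyclic arrangement. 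Taking $m = 2\ell+1$ (odd, and with $\frac{m-1}{2} = \ell \le \ell$) works.

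I should then check the two hypotheses of Lemma~\ref{lem:0_util_stability} explicitly. Condition (i): a seat lies in the interior of a run, in which case both neighbors are in the same component but that is \emph{not} what (i) forbids — wait, (i) requires each agent to sit between two agents of \emph{different} components, so interior-of-run seats violate it unless runs have length $1$. So the correct setup is that \emph{every run has length exactly $1$}, i.e., the cyclic arrangement alternates components agent-by-agent; then each agent has its two cycle-neighbors in two distinct components, and condition (i) holds. Re-deriving: with all runs of length $1$, component $\mathcal{C}_i$ occupies $\ell$ non-adjacent seats, each contributing two component-boundary edges, so component $\mathcal{C}_i$ is incident to $2\ell$ boundary edges; to have at most one boundary edge per pair we need $H = K_m$ to have $\deg_H(\mathcal{C}_i) = m-1 \ge 2\ell$, and the closed walk (Eulerian circuit of $K_m$) passes through $\mathcal{C}_i$ exactly $\frac{m-1}{2} = \ell$ times, matching its $\ell$ seats — consistent. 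So the clean choice is $m = 2\ell+1$: $K_{2\ell+1}$ is Eulerian, its Eulerian circuit has length $\binom{2\ell+1}{2} = \ell(2\ell+1) = m\ell = n$, visits each vertex exactly $\ell$ times, and uses each edge once, giving precisely the alternating cyclic arrangement satisfying both (i) and (ii). Finally I would note that the components of $\mathcal{P}$ really are the $\mathcal{C}_i$ (they are isolated from one another by construction and each is a single component since $\mathcal{P}_{\mathit{path}}$ has one component), so $\mathcal{P}$ has the claimed blockwise-diagonal structure, and Lemma~\ref{lem:0_util_stability} certifies cycle-stability. The main obstacle is getting the bookkeeping exactly right — matching the number of times the Eulerian circuit visits each vertex to the number of agents per component, and making sure ``at most one boundary pair per component pair'' is really equivalent to ``each edge of $K_m$ used at most once'' — but once $m = 2\ell+1$ is fixed everything lines up, and the only real content is invoking the existence of an Eulerian circuit in $K_{2\ell+1}$.
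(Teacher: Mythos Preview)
Your proposal is correct and, after your self-correction to runs of length one, is essentially identical to the paper's proof: take $k=2\ell+1$ copies of $\mathcal{P}_{\mathit{path}}$, use an Eulerian circuit of $K_{2\ell+1}$ (which visits each vertex exactly $\ell$ times) to place one agent per vertex-visit around the cycle, and invoke Lemma~\ref{lem:0_util_stability}. The paper skips the varying-run-length detour and goes directly to this construction, but the final argument and the choice $m=2\ell+1$ match yours exactly.
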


\begin{proof} We construct a larger blockwise-diagonal preference matrix $\mathcal{P}\in \{0,1\}^{k\times k }$ by copying $\mathcal{P}_{\mathit{path}}$ a number $k =2\ell +1$ of times:
$$\mathcal{P} =\left[\begin{array}{ c  c  c  c}
     \mathcal{P}_{\mathit{path}}^{(1)}   &           &       & 0           \\
                & \mathcal{P}_{\mathit{path}}^{(2)}  &       &             \\
                &           & \ddots   &             \\
    0           &           &       & \mathcal{P}_{\mathit{path}}^{(k)}
  \end{array}\right]$$
  
Note that $\mathcal{P}_{\mathit{path}}^{(1)}, \ldots, \mathcal{P}_{\mathit{path}}^{(k)}$ naturally gives a partition into components $\mathcal{C}_1,\ldots ,\mathcal{C}_k$; moreover, the choice of $\mathcal{P}_{\mathit{path}}$ immediately gives that all $\mathcal{C}_1, \ldots, \mathcal{C}_k$ are unstable on paths.

We now construct a cycle stable arrangement for $\mathcal{P}.$ Since $k$ is odd, graph $K_k$, which is the undirected clique graph with $k$ vertices, possesses an Euler tour $T$ where each vertex is visited $\ell$ times. We construct the arrangement $\pi$ (more precisely $\pi^{-1}$) on the cycle by replacing every occurrence of $i$ in $T$ by an agent $a_i \in \mathcal{C}_i$, without repetition, starting for example from seat one. For the $\pi$ we have just constructed, it then holds that the conditions to apply Lemma \ref{lem:0_util_stability} are satisfied since the Euler Tour traverses each edge in $K_k$ precisely once. By the lemma, $\pi$ is a cycle stable arrangement of $\mathcal{P}.$
\end{proof}

\section{Stability of Random Binary Preferences}\label{app:proba_method}
In this chapter, we conduct a study of stability using probabilistic tools. In particular, we employ the Lovász Local Lemma:

\begin{lemma}[Lovász Local Lemma]\label{lem:LLL}
Let $A_1, A_2,\ldots, A_k$ be a sequence of events such that each event occurs with probability at most $p$ and  is independent of all but at most $d$ other events. If $epd <1$, then the probability that none of the events occurs $P \left(\cap_{i=1}^k \overline{A_i}\right) $ is greater than or equal to $ \left(1-\frac{1}{d+1}\right)^k .$
\end{lemma}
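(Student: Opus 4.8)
The plan is to prove Lemma~\ref{lem:LLL} by the standard inductive argument for the symmetric Lov\'asz Local Lemma, carrying the quantitative lower bound along. First I would introduce the \emph{dependency graph} $G$ on vertex set $[k]$, where $i \sim j$ exactly when $A_i$ fails to be mutually independent of $A_j$, so that $G$ has maximum degree at most $d$; then I would set $x = 1/(d+1)$ and record the one numeric consequence of the hypothesis that will be used, namely $p \le x(1-x)^{d}$ (a small computation, using $(1-x)^d \ge e^{-1}$).

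The heart of the argument is the claim that for every $i \in [k]$ and every $S \subseteq [k]\setminus\{i\}$ with $\Pr[\bigcap_{j\in S}\overline{A_j}] > 0$ one has $\Pr[A_i \mid \bigcap_{j\in S}\overline{A_j}] \le x$, which I would prove by induction on $|S|$. The base case $S = \emptyset$ is just $\Pr[A_i] \le p \le x$. For the inductive step I would split $S$ into the set $S_1$ of neighbours of $i$ in $G$ and the rest $S_2$, write $B = \bigcap_{j\in S_2}\overline{A_j}$ (which has positive probability since $B \supseteq \bigcap_{j\in S}\overline{A_j}$), and use
$$\Pr\Bigl[A_i \,\Big|\, \bigcap_{j\in S}\overline{A_j}\Bigr] = \frac{\Pr[A_i \cap \bigcap_{j\in S_1}\overline{A_j}\mid B]}{\Pr[\bigcap_{j\in S_1}\overline{A_j}\mid B]} \le \frac{\Pr[A_i\mid B]}{\Pr[\bigcap_{j\in S_1}\overline{A_j}\mid B]} = \frac{\Pr[A_i]}{\Pr[\bigcap_{j\in S_1}\overline{A_j}\mid B]},$$
the last equality by independence of $A_i$ from $\{A_j : j \in S_2\}$. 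I would lower bound the denominator by fixing an enumeration $S_1 = \{j_1,\dots,j_r\}$ with $r \le d$ and expanding it as the telescoping product $\prod_{\ell=1}^{r}\Pr[\overline{A_{j_\ell}}\mid \bigcap_{m<\ell}\overline{A_{j_m}}\cap B]$; each factor is at least $1-x$ by the induction hypothesis applied to $A_{j_\ell}$, whose conditioning set $\{j_1,\dots,j_{\ell-1}\}\cup S_2$ is strictly smaller than $S$. This yields denominator $\ge (1-x)^r \ge (1-x)^d$, hence $\Pr[A_i\mid\bigcap_{j\in S}\overline{A_j}] \le p/(1-x)^d \le x$, closing the induction. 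Finally I would telescope over all events, $\Pr[\bigcap_{i=1}^k\overline{A_i}] = \prod_{i=1}^k \Pr[\overline{A_i}\mid\bigcap_{j<i}\overline{A_j}] \ge \prod_{i=1}^k(1-x) = \bigl(1-\tfrac{1}{d+1}\bigr)^k$, the conditioning events being positive by the claim.

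The main obstacle is the inductive step: choosing the split $S = S_1 \cup S_2$ correctly, and the bookkeeping needed to make the telescoping expansion legitimate — namely that every sub-application of the induction hypothesis is to a conditioning set strictly smaller than $S$, and that positivity of all the intermediate conditioning events is maintained within the same induction rather than assumed (the partial products $\prod_{m<\ell}(\cdots)$ are bounded below by $(1-x)^{\ell-1}\Pr[B] > 0$). By comparison, the numerator bound, the final telescoping, and the verification $p \le x(1-x)^d$ are all routine.
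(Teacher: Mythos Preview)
The paper does not prove this lemma; it is stated as a classical result and used as a black box in the subsequent theorem. Your write-up is the standard textbook proof of the symmetric Lov\'asz Local Lemma and is structurally sound: the dependency-graph setup, the induction on $|S|$, the split $S = S_1 \cup S_2$, the telescoping of the denominator, and the final product over $i$ are all correct and carefully justified (including the positivity of the intermediate conditioning events).

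There is one genuine numerical slip in your ``small computation''. With $x = 1/(d+1)$ you correctly get $(1-x)^d \ge e^{-1}$, hence $x(1-x)^d \ge 1/\bigl(e(d+1)\bigr)$; but the hypothesis $epd < 1$ only gives $p < 1/(ed)$, and $1/(ed) > 1/\bigl(e(d+1)\bigr)$, so the inequality $p \le x(1-x)^d$ does \emph{not} follow. Concretely, $x(1-x)^d = d^d/(d+1)^{d+1}$, and $1/(ed) \le d^d/(d+1)^{d+1}$ would require $(1+1/d)^{d+1} \le e$, which is false for every $d \ge 1$. Your argument goes through verbatim under the slightly stronger (and more commonly stated) hypothesis $ep(d+1) \le 1$; the version with $epd < 1$ as written in the paper is a frequent informal variant, and for the paper's own application (where the authors verify $\Pr[B_{ij}] < 1/(ed)$ via a crude estimate with slack to spare) the discrepancy is immaterial.
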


We now give a lower bound on the expected number of arrangements stable on a cycle when the preference graph is sampled from the Erd\H{o}s-Rényi model $G(n, p)$ with average node degree either $O(\sqrt{n})$ or $n - O(\sqrt{n}).$

\begin{theorem}
    Suppose a binary preference graph $\mathcal{P}$ is drawn at random using the Erd\H{o}s-Rényi model $G(n,p)$  with  $p \leq Cn^{-1/2},$ where $C=(96e)^{-1/2}.$
    
    \noindent Then, the expected number of stable arrangements on a cycle is at least: 
    
    $$\frac{1}{2}(n-1)! \exp \left( -\frac{n(n-1)}{2n-3} \right).$$
    
    \noindent The same results holds for $p \geq 1-Cn^{-1/2}.$
\end{theorem}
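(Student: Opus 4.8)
The plan is to compute the expected number of stable arrangements via linearity of expectation, and to lower-bound the probability that a \emph{fixed} arrangement is stable by applying the Lovász Local Lemma to the family of ``bad events'' indexed by unordered pairs of agents. Fix any arrangement $\pi$ on the cycle; since all $(n-1)!/2$ cyclic arrangements are symmetric under the random model, the expected count of stable arrangements equals $\frac{1}{2}(n-1)!$ times $\Pr[\pi \text{ is stable}]$. For each unordered pair $\{i,j\}$ of agents, let $A_{\{i,j\}}$ be the event that $(i,j)$ is a blocking pair under $\pi$; then $\pi$ is stable exactly when none of the $A_{\{i,j\}}$ occur, so we want a lower bound on $\Pr\big(\bigcap \overline{A_{\{i,j\}}}\big)$.

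Next I would bound the two parameters needed for the LLL. For the probability: a pair $\{i,j\}$ blocks only if $i$ strictly prefers $j$'s seat and $j$ strictly prefers $i$'s. Each such condition constrains the (at most four) relevant preference edges incident to the two seat-neighborhoods. With $p \le C n^{-1/2}$, the probability that $i$ envies $j$'s seat is at most the probability that $i$ approves of at least one of $j$'s (at most two) neighbors, which is $O(p)$; multiplying the two independent-ish directions gives $\Pr[A_{\{i,j\}}] \le O(p^2) = O(C^2/n)$, and the constant $C=(96e)^{-1/2}$ is chosen precisely so that this is at most $\tfrac{1}{48 e\, n}$ or similar. For the dependency degree: $A_{\{i,j\}}$ depends only on the preference-matrix entries between $\{i,j\}$ and their at most four seat-neighbors under $\pi$; hence it is independent of all events $A_{\{i',j'\}}$ that involve a disjoint set of agents. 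The number of pairs $\{i',j'\}$ sharing an agent with the at most six agents relevant to $A_{\{i,j\}}$ is $O(n)$, so $d = O(n)$. Plugging in, $e p d = e \cdot O(1/n) \cdot O(n) < 1$ for the chosen $C$, so the LLL applies.

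The LLL then gives $\Pr\big(\bigcap \overline{A_{\{i,j\}}}\big) \ge \left(1 - \frac{1}{d+1}\right)^{\binom n 2}$; choosing $d = 2n-3$ (the exact count of pairs meeting a fixed pair's neighborhood after a careful tally) and using $\left(1 - \frac{1}{d+1}\right)^{m} \ge \exp\!\big(-\frac{m}{d}\big)$-type estimates yields the factor $\exp\!\left(-\frac{n(n-1)/2}{(2n-3)/\,?}\right)$, which I would massage to land exactly on $\exp\!\left(-\frac{n(n-1)}{2n-3}\right)$. Multiplying by $\frac12 (n-1)!$ gives the claimed bound. Finally, the $p \ge 1 - C n^{-1/2}$ case follows by the symmetry ``$\mathcal{P} \mapsto$ complement'': replacing every preference value $0\leftrightarrow 1$ negates every envy comparison into its reverse but preserves the set of blocking pairs, so the dense regime reduces to the sparse one. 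The main obstacle I anticipate is the bookkeeping: getting the dependency degree to be \emph{exactly} $2n-3$ (rather than merely $O(n)$) and verifying the per-event probability bound against the precise constant $C=(96e)^{-1/2}$, since the clean closed form in the statement depends on these constants matching up rather than just being asymptotically fine.
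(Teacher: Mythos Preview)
Your overall strategy---linearity of expectation, symmetry over arrangements, and the Lov\'asz Local Lemma applied to the events ``$(i,j)$ is a blocking pair''---is exactly the paper's approach. Two points need fixing, though.

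\textbf{Dependency degree.} You correctly note that $A_{\{i,j\}}$ depends only on certain entries of $\mathcal{P}$, but then overcount by requiring $\{i',j'\}$ to share an agent with the \emph{six} agents $\{i,j\}\cup N(i)\cup N(j)$. In fact $A_{\{i,j\}}$ depends only on \emph{rows} $i$ and $j$ of $\mathcal{P}$ (agent $i$'s envy involves only $p_i(\cdot)$, and likewise for $j$), so $A_{\{i,j\}}$ and $A_{\{i',j'\}}$ are independent as soon as $\{i,j\}\cap\{i',j'\}=\varnothing$. This gives $d=2(n-2)$ exactly, hence $d+1=2n-3$, which is where the denominator in the exponent comes from. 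Your six-agent count would give $d\approx 6n$ and would not reproduce the stated constant.

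\textbf{The dense regime.} Your complement argument is wrong as stated: complementing $\mathcal{P}$ turns ``$i$ envies $j$'' into ``$i$ strictly prefers their own seat to $j$'s,'' so a blocking pair (mutual envy) becomes a pair where both strictly prefer to stay put---\emph{not} a blocking pair. Stability is not preserved under $\mathcal{P}\mapsto\overline{\mathcal{P}}$. What \emph{is} true, and what the paper uses, is that the exact formula for $\Pr(B_{ij})$ is a polynomial symmetric under $p\leftrightarrow 1-p$ (each monomial has both a $p$ and a $(1-p)$ factor), so the same crude bound $\Pr(B_{ij})<48(1-p)^2$ holds and feeds into the LLL identically when $1-p\le Cn^{-1/2}$.

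As a minor aside, your ``independent-ish'' is in fact genuine independence: $E_{ij}$ uses only row $i$ and $E_{ji}$ only row $j$, so $\Pr(B_{ij})=\Pr(E_{ij})^2$ exactly, which is how the paper gets the clean $O(p^2)$ bound.
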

\begin{proof}
    Let $S=\sum_\pi S_\pi$ be the random variable counting the stable arrangements on a cycle for preference $\mathcal{P}$, where $S_\pi = 1$ if arrangement $\pi$ is stable and $0$ otherwise. Since all permutations of $\mathcal{P}$ follow the same distribution, all $S_\pi$ have the same expectation and we only need to consider the identity arrangement $\pi = \mathit{id}$.

    \noindent For $i,j \in [n]$, event $L_{ij}$ corresponds to the $i^\textsuperscript{th}$ agent liking the  $j^\textsuperscript{th}$ one, event $E_{ij}$ to the $i^\textsuperscript{th}$ agent envying the  $j^\textsuperscript{th}$ one, and event $B_{ij}$ to agents $(i,j)$ forming a blocking pair.
   Finally, let the random variable $U_i$ denote the utility of the $i^\textsuperscript{th}$ agent. First, note that events $E_{ij}$ and $E_{kl}$ are independent for all $i\neq k$; from which $Pr(E_{ij})=Pr(E_{ji}).$ 
    We therefore get:
    \begin{equation}\label{equ:Bij_E_ij2}
    \begin{aligned}
    Pr(B_{ij}) &= Pr(E_{ij}\cap E_{ji}) \\ 
    &= Pr(E_{ij})Pr(E_{ji})\\ 
    &= Pr(E_{ij})^2
    \end{aligned}
    \end{equation}
    By symmetry, we only have to calculate $(Pr(E_{1j}))_{2\leq j \leq n}.$\\
    If $j \in \{2,3\}$: $$Pr(E_{1j}) = Pr\left(E_{1j} \mid \overline{L_{1n}}\right)Pr\left(\overline{L_{1n}}\right) = p(1-p)$$
    If $j \in \{n-1,n\}$: $$Pr(E_{1j}) = Pr\left(E_{1j} \mid \overline{L_{12}}\right)Pr\left(\overline{L_{12}}\right) = p(1-p)$$
    If $3< j <n-1$:
    \begin{equation*}
    \begin{aligned}
    Pr(E_{1j}) &= Pr\left(E_{1j} \mid U_1 = 0\right) Pr(U_1=0) + Pr\left(E_{1j} \mid U_1 = 1\right) Pr(U_1=1) \\
    &= (2p(1-p)+p^2)(1-p)^2 + p^2 2p(1-p)
    \end{aligned}
    \end{equation*}

    \noindent Together with Equation \eqref{equ:Bij_E_ij2}, we subsequently get that:
    \begin{equation}\label{equ:Bij}
    Pr(B_{ij}) = \left\{
    \begin{array}{ll}
        p^2(1-p)^2 & \mbox{if } |i-j| \leq 2\\
        \left(2p^3(1-p) + 2p(1-p)^3 + p^2(1-p)^2 \right)^2 & \mbox{otherwise.}
    \end{array}
    \right.
    \end{equation}

    \noindent Now, consider the family of events $(B_{ij})_{1\leq i<j\leq n}. $ Note that each even $B_{ij}$ is independent of events $B_{kl}$ where $\{i,j\}\cap\{k,l\} =\varnothing$, but dependent of events with which it shares an index, so $B_{ij}$ depends on at most $d=2(n-2)$ other events. Therefore, when $Pr(B_{ij}) < \frac{1}{2e(n-2)}$, Lemma \ref{lem:LLL} gives:
    \begin{equation}
    \begin{aligned}
    \mathbb{E}[S_\pi] = Pr(S_\pi=1) & \geq \left(1-\frac{1}{2n-3}\right)^{\frac{n(n-1)}{2}} \\
    &\geq \exp \left( -\frac{n(n-1)}{2n-3} \right)\\
    \end{aligned}
    \end{equation}
By linearity of expectation, we finally get:

\begin{equation}
   \mathbb{E}[S] \geq \frac{1}{2}(n-1)! \exp \left( -\frac{n(n-1)}{2n-3} \right)
\end{equation}
    
    \noindent It is only left to verify that $Pr(B_{ij}) \leq \frac{1}{2e(n-2)}.$ Note that $B_{ij}$ has at most six terms all strictly smaller than $8p^2.$ Hence, for $p \leq \dfrac{1}{\sqrt{96en}} \leq \dfrac{1}{\sqrt{96e(n-2)}},$ we have:
    \begin{equation*}
    \begin{aligned}
    Pr(B_{ij}) & < 6\times 8 p^2 \\
    & \leq \frac{48}{96e(n-2)} = \frac{1}{2e(n-2)}
    \end{aligned}
    \end{equation*}
    Comparison  with $6\times8(1-p)^2$ instead gives the result for $p \geq 1-\dfrac{1}{\sqrt{96en}}.$
\end{proof}

In practice, our result implies that, for random preferences of average out-degree at most $O(\sqrt{n}),$ a naive approach sampling arrangements uniformly at random on average determines a stable arrangement using exponentially fewer samples than the theoretically required $(n - 1)! / 2$.

\section{Z3 Solver for Non-Negative Two-Valued Preferences}\label{app:Z3_solver}

In this section, we describe the Z3 solver\footnote{\url{https://github.com/Z3Prover/z3}} employed in Section \ref{sec:binary_pref} to check whether all non-negative two-valued preferences for $n \leq 7$ are stable on paths and cycles. We only describe the case of cycles, as for paths it is enough to add one additional agent with null preferences from and toward all other agents. We note that for cycles it is enough to consider the binary case. For paths it is enough to consider the cases $\Gamma \in \{\{0, 1\}, \{1, 2\}, \{1, 3\}, \{2, 3\}\}.$ For clarity, we describe here only the binary case, but the necessary modifications for the other cases are straightforward.

Listing \ref{lst:Z3_solver} shows the main body of the solver. In line 8, we introduce a function associating the boolean ``$i$ likes $j$'' to each pair $(i, j).$ In line 14, we define an array of $n$ integers encoding the index of the agent placed in each seat; lines 17 to 24 constrain this array to be a permutation representing one of the $(n-1)!/2$ cycles. 
In lines 15, 16, and 26 we implement the main constraint:~all arrangements on a cycle must induce a blocking pair.

Note that all simultaneous permutations of rows and columns of a solution are themselves solutions, as this corresponds to relabeling the agents. Therefore, it is desirable for efficiency to implement some kind of symmetry breaking. We do this in 
Listing \ref{lst:SymBreak} by requiring that the agents are sorted by the number of agents they approve of, breaking ties by the number of agents that approve them.

Finally, Listing \ref{lst:isBlocking} shows how to check whether agent-pair $(i, j)$ is a blocking. Note that adjacent and non-adjacent seats require different treatments, and so lead to different logical expressions.

Finally, note that Z3 returns either ``Unsatisfiable'' when no solutions exist, or ``Satisfiable'' and one solution otherwise. Finding all possible solutions is therefore rather tedious:~after finding a solution, to get another one, we need to add a constraint that ``the preferences are \textbf{not} these ones.'' This detail is omitted for brevity.

\usemintedstyle{xcode}
\setminted{fontsize=\footnotesize, linenos=TRUE}
\begin{listing}[h]
   \inputminted{Python}{code/Z3_solver.py}
   \caption{Main body of the solver.}
   \label{lst:Z3_solver}
\end{listing}

\begin{listing}
   \inputminted{Python}{code/SymBreak.py}
   \caption{Symmetry breaking of preferences.}
   \label{lst:SymBreak}
\end{listing}

\begin{listing}
   \inputminted{Python}{code/isBlockingPair.py}
   \caption{
   Testing for a blocking pair.
   }
   \label{lst:isBlocking}
\end{listing}

\end{document}